\title{Satisfiability Checking of Multi-Variable TPTL with Unilateral Intervals is PSPACE-complete.}
\author{Shankara Narayanan Krishna}{IIT Bombay, Mumbai, India}{krishnas@cse.iitb.ac.in}{}{}
\author{Khushraj Nanik Madnani}{MPI-SWS, Kaiserslautern, Germany}{kmadnani@mpi-sws.org}{https://orcid.org/0000-0003-0629-3847}{}
\author{Rupak Majumdar}{MPI-SWS, Kaiserslautern, Germany}{rupak@mpi-sws.org}{}{}
\author{Paritosh K. Pandya}{IIT Bombay, Mumbai, India}{pandya@tifr.res.in}{}{}
\authorrunning{S.N.Krishna, K.N.Madnani, R. Majumdar, P.K. Pandya}
\titlerunning{Satisfiability Checking for $\tptlu$ is PSPACE-Complete.}
\keywords{TPTL, Satisfiability, Non-Punctuality, Decidability, Expressiveness, ATA } 
\renewcommand{\r}{\mathcal{R}}
\tikzset{mode/.style={font=\scriptsize}}
\tikzset{gadget/.style={->,>=stealth,initial text=,minimum size=7pt,auto,on grid,scale=1,inner sep=1pt,node distance=1cm}}
\tikzset{every state/.style={minimum size=15pt,inner sep=1pt,fill=black!10,draw=black!70,thick}}
\tikzstyle{RectObject}=[rectangle,fill=white,draw,line width=0.5mm]
\tikzstyle{line}=[draw]
\tikzstyle{arrow}=[draw, -latex]
\newcommand{\rhos}{\mathbf{\rho}}
\newcommand{\rs}{\mathbf{R}}
\newcommand{\buchi}{B\"{u}chi }
\newcommand {\Gg} {\mathcal{G}^{\ge}}
\newcommand {\Gl} {\mathcal{G}^{\le}}
\newcommand{\Nu}{\mathcal{V}}
\newcommand{\sbf}{\mathsf{G}}
\newcommand{\fut}{\mathsf{F}}
\newcommand{\until}{\mathsf{U}}
\newcommand{\since}{\mathsf{S}}
\newcommand{\nex}{\mathsf{Next}}
\newcommand{\nx}{\nex}
\newcommand{\tred}{\mathsf{T}_{red}}
\newcommand{\act}{\mathsf{Active}}
\newcommand{\reset}{\mathsf{Reset}}
\newcommand{\ac}{\mathsf{Act}}
\newcommand{\appear}{\mathsf{Appear}}
\newcommand{\Aa}{\mathcal{A}}
\newcommand{\mtl}{\text{MTL}}
\newcommand{\tptl}{\text{TPTL}}
\newcommand{\mitl}{\text{MITL}}
\newcommand{\mitlu}{\text{MITL}^{0,\infty}}
\newcommand{\tptlu}{\text{TPTL}^{0,\infty}}
\newcommand{\ata}{\text{ATA}}
\newcommand{\atau}{{\text{ATA}^{0,\infty}}}
\newcommand{\vwata}{\text{VWATA}}
\newcommand{\vwatau}{{\vwata^{0,\infty}}}
\newcommand{\emitl}{\text{EMITL}}
\newcommand{\ltl}{\mathsf{LTL}}
\newcommand{\cl}{\mathsf{cl}}
\newcommand{\formula}{\mathsf{frm}}
\newcommand{\Cac}{$\mathcal{C}_{acc}$ }
\newcommand{\Img}{\mathsf{Img}}
\newcommand{\type}{\mathsf{Type}}
\newcommand{\ca}{\cong_{\Aa}}
\newcommand{\la}{\le_{\Aa}}
\newcommand{\lasim}{\preceq}
\newcommand{\reduce}{\mathsf{Red_{\lasim}}}
\newcommand{\Qg}{\mathsf{Q^{\ge}}}
\newcommand{\Ql}{\mathsf{Q^{\le}}}
\newcommand{\Qsim}{\mathsf{Q^{\sim}}}
\newcommand{\Q}{\mathcal{Q}}
\newcommand{\Qc}{\mathcal{Q}_{acc}}
\newcommand{\q}{\mathsf{q}}
\newcommand{\R}{\mathbb{R}}
\newcommand{\A}{\mathbb{A}}
\newcommand{\intintervalz}{\intinterval^{0}}
\newcommand{\intintervalinf}{\intinterval^{\infty}}
\newcommand{\intintervalu}{\intinterval^{0,\infty}}
\newcommand{\newclass}[2]{\newcommand{#1}{\textsc{#2}}}
\newclass{\pspace}{PSpace }
\newclass{\expspace}{ExpSpace }
\newcommand{\Qacc}{\mathsf{Q_{acc}}}
\newcommand{\true}{\mathsf{True}}
\newcommand{\false}{\mathsf{False}}
\newcommand{\modelmin}{\models^{\mathsf{min}}}
\renewcommand{\succ}{\mathsf{Succ}_\delta}
\newcommand{\suca}{\mathsf{Succ}_{\Aa}}
\newcommand{\gap}{\;}
\newcommand{\X}{\mathcal{X}}
\newcommand{\C}{\mathbb{C}}
\newcommand{\G}{\mathcal{G}}
\newcommand{\intintervaln}{\mathcal{I}_{\mathsf{int}}}
\newcommand{\intinterval}{\mathcal{I}_{\mathsf{int}}}
\newcommand{\lo}{\mathcal{L}}
\newcommand{\oomit}[1]{}
\newcommand{\init}{\text{init}}
\newcommand{\cntphi}{\mathsf{Count}\phiapproach}
\newcommand{\phiapproach}{\phi_{\mathsf{approach}}}
\begin{document}
\acknowledgements{We thank Tom Henzinger for an insightful discussion on Timed Logics and for encouraging us to explore non-punctual subclasses for multi-clock TPTL and ATA. We also thank Hsi-Ming Ho for an interesting discussion on handling freeze quantifiers.}
\tikzset{
  tptl/.style={shape=rectangle, rounded corners, draw, align=center,
    top color=white, bottom color=blue!20},
  level distance=2cm,
  sibling distance=7cm,
}

\maketitle

\begin{abstract}
We investigate the decidability of the ${0,\infty}$ fragment of Timed Propositional Temporal Logic (TPTL). We show that the satisfiability checking of $\tptlu$ is \pspace-complete. Moreover, even its 1-variable fragment (1-$\tptlu$) is strictly more expressive than Metric Interval Temporal Logic (MITL) for which satisfiability checking is \expspace complete. Hence, we have a strictly more expressive logic with computationally easier satisfiability checking. To the best of our knowledge, $\tptlu$ is the first multi-variable fragment of TPTL for which satisfiability checking is decidable without imposing any bounds/restrictions on the timed words (e.g. bounded variability, bounded time, etc.). The membership in \pspace is obtained by a reduction to the emptiness checking problem for a new ``non-punctual’’ subclass of Alternating Timed Automata with multiple clocks called Unilateral Very Weak Alternating Timed Automata ($\vwatau$) which we prove to be in $\pspace$.  We show this by constructing a simulation equivalent non-deterministic timed automata whose number of clocks is polynomial in the size of the given $\vwatau$.
\end{abstract}

\section{Introduction}
Metric Temporal Logic ($\mtl[\until_I, \since_I]$) and Timed Propositional Temporal Logic \\($\tptl[\until_I, \since_I]$) are natural extensions of Linear Temporal Logic ($\ltl$) for specifying real-time properties \cite{AH94}. $\mtl$ extends the $\until$ and $\since$ modality of $\ltl$ by associating a time interval with these. Intuitively, $a \until_I b$  is true at a point in the given behaviour iff event $a$ keeps  on occurring until at some future time point within relative time interval $I$, event $b$ occurs. (Similarly, $a \since_I b$ is its mirror image specifying the past behaviour.)  
On the other hand, $\tptl$ uses freeze quantifiers to store the current time stamp. A freeze quantifier \cite{AlurH92}\cite{AH94} has the form $x.\varphi$
 with freeze variable $x$ (also called a clock \cite{patricia}\cite{pandyasimoni}).
When it is evaluated at a point $i$ on a timed word, the time stamp of $i$ (say $\tau_i$) is frozen or registered in $x$, and the formula $\varphi$ is evaluated using this value for $x$. 
 Variable $x$ is used in $\varphi$ in a constraint of the form $T-x \in I$; this constraint, when evaluated at a point $j$, checks if $\tau_j -\tau_i \in I$, where $\tau_j$ is the time stamp at point $j$. Here $T$ can be seen as a special variable giving the timestamp of the present point. For example, the formula  $\varphi = \fut x.(a\wedge \fut (b \wedge T-x \in [1,2] \wedge \fut (c \wedge T-x \in [1,2])))$
  asserts that there is a point $i$ in the future where $a$ holds and in its future there is a $b$ within interval $[1,2]$ followed by a $c$ within interval $[1,2]$ from $i$. In this paper, we restrict ourselves to future time modalities only. Hence, we use the term $\mtl$ and $\tptl$ for $\mtl[\until_I]$ and $\tptl[\until]$, respectively, and MTL+Past and TPTL+Past for $\mtl[\until_I, \since_I]$ and $\tptl [\until, \since]$, respectively. We also confine ourselves to the \emph{pointwise} interpretation of these logics \cite{patricia}. 

While these logics are natural formalisms to express real-time properties, it is  unfortunate that both the logics have an undecidable satisfiability checking problem, making automated analysis of these logics difficult in general. 
 Exploring natural decidable variants of these logics has been an active area of research since their advent  \cite{AH93}\cite{raskin-thesis}\cite{icalp-raskin}\cite{Wilke}\cite{Rabinovich}\cite{rabin}\cite{rabinovichY}. One of the most celebrated such logics is the \emph{Metric Interval Temporal Logic} ($\mitl$) \cite{AFH96},  a subclass of $\mtl$ where the timing intervals are restricted to be non-punctual i.e. non-singular (intervals of the form $\langle x, y \rangle$ where $x < y$). The  satisfiability checking for $\mitl$ formulae is \expspace complete \cite{AFH96} (the result also holds for $\mitl$ + Past). 

Every formula in  $\mtl$ can be expressed in  the 1-variable fragment of  TPTL (denoted 1-$\tptl$). Moreover, 
the above-mentioned property $\varphi$ is not expressible in  $\mtl$ + Past \cite{simoni}. Hence,  1-$\tptl$ is strictly more expressive than $\mtl$ \cite{pandyasimoni, patricia}. The Logic 1-$\tptl$ can also express $\mtl$ augmented with richer counting and Pnueli modalities. Hence, $\tptl$ is a logic with high expressive power. However, decidable fragments of $\tptl$ are harder to find.
While 1-$\tptl$ has decidable satisfiability over finite timed words \cite{OWH} (albeit with non-primitive recursive complexity), it is undecidable over infinite words \cite{OuaknineW06}. There are no known fragments of multi-variable $\tptl$ which are decidable (without artificially restricting the class of timed words). In this paper, we propose one such logic, which is efficiently decidable over both finite and infinite timed words.

 We propose a fragment of $\tptl$, called $\tptl_{0,\infty}$, where, for any formula $\phi$ in negation normal form, each of its closed subformula $\kappa$ has \emph{unilateral} intervals; that is, intervals  of the form $\langle 0, u \rangle$, or of the form $\langle l, \infty )$ (where $\langle \in \{[, (\}$ and $\rangle \in \{],)\}$). 
 The main result of this paper is to show that satisfiability checking for $\tptl_{0,\infty}$ is \pspace complete. Moreover, we show that even the 1-variable fragment of this logic is strictly more expressive than $\mitl$. \pspace completeness for satisfiability checking is proved as follows: We define a sub-class of Alternating Timed Automata (ATA \cite{Ouaknine05} \cite{LasotaW08}) called \emph{Very Weak Alternating Timed Automata with Unilateral Intervals}($\vwatau$), and show that $\vwatau$ have \pspace-complete emptiness checking. 
 A  language  preserving reduction from  $\tptl_{0,\infty}$ to $\vwatau$, similar to \cite{OWH}\cite{Ouaknine05}\cite{vardiwolper},  
 completes the proof. To our knowledge, $\vwatau$ is amongst the first known fragment of multi-clock alternating timed automata ($\ata$) with efficiently decidable emptiness checking.
 Thus, we believe that  $\tptl_{0,\infty}$ and $\vwatau$ are interesting novel additions to  logics and automata for real-time behaviours.
 
 One of the key challenges in establishing the decidability of $\vwatau$ is to show that the configuration sizes  
can be bounded. In an $\ata$, a configuration can be unboundedly large owing to several conjunctive transitions, each spawning a state with a new clock valuation.  
 We  provide a framework for compressing the configuration sizes of $\vwatau$ based on simulation relations amongst states of the $\vwatau$. We then prove  that  such compression yields a simulation-equivalent transition system 
 whose  configuration sizes are bounded. This bound allows us to give a subset-like construction resulting in a simulation equivalent (hence, language equivalent) timed automata with polynomially many clocks.

The paper is organized as follows. Section \ref{sec:prelim} defines the $\tptl$ and  $\ata$, and $(0,\infty)$ fragments of these formalisms. In Section \ref{sec:main}, we prove the \pspace emptiness checking of $\tptlu$. Section \ref{sec:exp} discusses the expressiveness of $\tptl_{0,\infty}$. Section \ref{sec:discuss} concludes our work with a discussion on the implication of our work in the field of timed logics and some interesting problems that we leave open.
\section{Preliminaries}
\label{sec:prelim}
Let $\mathbb{Z}, \mathbb{Z}_{\ge 0}, \mathbb{N}, \R, \R_{\geq 0}$ respectively denote 
 the set of integers, non-negative integers, natural numbers (excluding 0),  real numbers, and  non-negative real numbers.   Given a sequence $\mathbf{a} = a_1 a_2  \ldots$, $\mathbf{a}[i] = a_i$ denotes the $i^{th}$ element of the sequence, $a[i..j]$ represents $ a_i a_{i+1} \ldots a_j$,  $a[i..]$ represents $a_i a_{i+1} \ldots $  and $a[..i]$ represents $ = a_1 a_2 \ldots a_i$. Let $\intinterval$ be the set of all the open, half-open, or closed intervals (i.e.\ convex subsets of real numbers), such that the endpoints of these intervals are in $\mathbb{N} \cup \{0,\infty\}$. Intervals of the form $[x,x]$  are called  punctual; a non-punctual interval is one which is not punctual. For, $\langle {\in} \{(,[\}$ and $\rangle {\in} \{ ],) \}$, 
an interval of the form $\langle 0, u \rangle$ for $u >0$ 
is called \emph{right-sided} while an interval of the form  $\langle l, \infty)$ is called \emph{left-sided}. A \emph{unilateral} interval is either left-sided or right-sided. 
Let $\intintervalz, \intintervalinf \subseteq \intinterval$ respectively be the set of all \emph{right sided} and 
\emph{left sided} intervals of the form $\langle 0, u \rangle$, $\langle l, \infty)$, for any $l, u \in \mathbb{Z}_{\ge 0}$. Let $\intintervalu = \intintervalz \cup \intintervalinf$.
For $\tau {\in} \R$ and interval  $\langle a, b\rangle$, ${\tau+\langle a, b\rangle}$ stands for the interval ${\langle \tau+a, \tau+b \rangle}$. 
\\\textbf{Timed Words.} Let $\Sigma$ be a finite alphabet. 
 A finite (infinite) word over $\Sigma$ is a finite (infinite) sequence over  $\Sigma$. The set of all the finite (infinite) words over $\Sigma$ is denoted by $\Sigma^*$ ($\Sigma^\omega$). A finite timed word $\rhos$ over $\Sigma$ 
  is a finite sequence of pairs $(\sigma, \tau)  \in (\Sigma \times \R_{\geq 0})^*$ : $\rhos = (\sigma_1, \tau_1), \ldots, (\sigma_n, \tau_n)$ where $\tau_i \leq \tau_j$ for all $1 \leq i \leq j \leq n$. Let $dom(\rhos) = \{1,2, \ldots n\}$ be the set of points in  $\rhos$. Likewise, 
an infinite timed word is an infinite sequence  $\rhos = (\sigma_1, \tau_1) (\sigma_2, \tau_2) \ldots \in  (\Sigma \times \R_{\geq 0})^{\omega}$, 
 where $\sigma_1 \sigma_2 \ldots \in \Sigma^\omega$, and $\tau_1 \tau_2 \ldots $ is a monotonically increasing infinite sequence of real numbers approaching $\infty$ (i.e. non-zeno).  
A finite (infinite) timed language is a set of all finite (infinite) timed words over $\Sigma$ denoted $T\Sigma^*$  ($T\Sigma^{\omega}$). 
\\{\bf{Timed Propositional Temporal Logic (TPTL)}}.  
The logic $\tptl$ extends $\ltl$ with freeze quantifiers and is evaluated on timed words. 
Formulae of $\tptl$ are built from a finite alphabet $\Sigma$  using Boolean connectives, 
as well as the temporal modalities  of $\ltl$. In addition, $\tptl$ uses a finite set of real-valued variables called  freeze variables or \emph{clocks}  $X = \{x_1,\ldots,x_n\}$. Let $\nu: X \rightarrow \mathbb{R}_{\geq 0}$ represent  a valuation 
assigning a non-negative real value to each clock. Without loss of generality, we work with $\tptl$ in the negation normal form, where all the negations appear only with  atomic formulae.  Formulae of $\tptl$ are defined as follows.

$\varphi::=a~|~\neg a ~|\top~|~\bot~|~ x.\varphi ~|~ T-x \in I ~|~\varphi \wedge \varphi~|~ \varphi \vee \varphi~|
~\varphi \until \varphi ~|~ \sbf \varphi ,$
\\where $x \in X$, $a \in \Sigma$, $I \in \intintervaln$\footnote{Notice that duals of Until; ``Unless'' and ``Release'' operators can be expressed using a $\sbf$ and an $\until$ operator without compromising on succinctness.}. $T$ denotes the time stamp of the position where the formula is  evaluated. 
The construct $x. \varphi$ is called a \emph{freeze quantifier}, which stores in $x$,  the time stamp of the current position  and then evaluates $\varphi$.  $T-x \in I$ is a constraint on the clock variable $x$, which checks if the time elapsed since the time 
$x$ was frozen is in the interval $I$. \textbf{Notice} that, in aid of brevity, we will typically \textbf{abbreviate} subformula $T-x \in I$ to $x \in I$. 
For a timed word $\rho=(\sigma_1,\tau_1)\dots(\sigma_n,\tau_n)$, $i \in dom(\rho)$ and a $\tptl$ formula $\varphi$, we define the satisfiability  $\rho, i, \nu \models \varphi$ at a position 
$i$ of $\rho$, given a valuation $\nu$ of  the clock variables. 
\begin{flalign*}
\rho, i, \nu &\;\;\;\models & a\;\;\;\;\;\;\;\;\;\;\;\;\;\;\;\;\;\;\;\;\;\;\;\;\;\;\;\;&  \iff & \sigma_{i}=a,\Repeat{62}{\gap}\\ 
\rho,i,\nu & \;\;\;\models & x.\varphi\;\;\;\;\;\;\;\;\;\;\;\;\;\;\;\;\;\;\;\;\;\;\;\;\;& \iff & \rho,i,\nu[x \leftarrow \tau_i] \models \varphi,\Repeat{43}{\gap}\\
\rho,i,\nu & \;\;\;\models & {T-x \in I} \;\;\;\;\;\;\;\;\;\;\;\;\;\;\;&  \iff & \tau_i - \nu(x) \in I,\Repeat{51}{\gap} \\
\rho,i,\nu& \;\;\;\models & \sbf \varphi \;\;\;\;\;\;\;\;\;\;\;\;\;\;\;\;\;\;\;\;\;\;\;\;\;\;& \iff & \forall j > i, \rho,j,\nu \ \models\ \varphi,\Repeat{42}{\gap}\\
\rho,i, \nu& \;\;\;\models & \varphi_{1} \until \varphi_{2} \;\;\;\;\;\;\;\;\;\;\;\;\;\;\;\;\;\;\;\; & \iff & \exists j > i,
\rho,j\ \models  \varphi_{2}, \text{ and }  \forall i< k <j, \rho,k\ \models\ \varphi_{1}.
\end{flalign*}

The $\fut$ and $\nx$ operator is  defined in terms of $\until$; $\fut \phi = \top \until \phi$ and $\nx\phi = \bot \until \phi$.  $\mathbf{0}$ denotes a valuation that maps every variable to $0$. 
A TPTL formula $\varphi$ is said to be closed iff every variable $x$ used in the timing constraint is quantified (or bound) by a freeze quantifier. A formula that is not closed is \emph{open}.  Similarly, in any formula $\varphi$, a constraint of the form $x \in I$ is open if $x$ is not quantified.
For example, $x.y.(a \until (b \wedge x \in (1,2) \wedge y\in (2,3)))$ is a closed formula while $x.(a \wedge \underline{y\in (2,3)})\until y.(b \wedge x \in (1,2))$ is open as the clock $y$ used in the underlined clock constraint is not in the scope of a freeze quantifier for $y$. Moreover, the underlined constraint $\underline{y\in (2,3)})$ is an open constraint. Notice that open constraints appear only (and necessarily) in open formulae.
Satisfaction of closed formulae is independent of the clock valuation; that is, if $\psi$ is a closed formula, then for a timed word $\rho$ and a position $i$ in $\rho$, either for every valuation $\nu$, $\rho, i, \nu \models \psi$; or for every valuation $\nu$, $\rho, i, \nu \not\models \psi$. Hence, for a closed formula $\psi$, we drop the valuation $\nu$ while evaluating satisfaction, and simply write $\rho,i\ \models \psi$. As an example, the closed formula $\varphi{=}x.(a \until (b \until (c \wedge x {\in} [1,2])))$ 
is satisfied by the timed word $\rho{=}(a,0)(a,0.2)(b,1.1)(b,1.9)(c,1.91)(c,2.1)$ since $\rho, 1 \models \varphi$.  
The word $\rho'=(a,0)(a,0.3)(b,1.4)(c,2.1)(c,2.5)$ does not satisfy $\varphi$. 
However, $\rho',2 \models \varphi$: if we start from the second position of $\rho'$, 
the value 0.3 is stored in $x$ by the freeze quantifier, and  when we reach the position 4 of $\rho'$ with $\tau_4=2.1$
we obtain $T-x=2.1-0.3 \in [1,2]$.

Given any closed TPTL formula $\varphi$, its language, $L(\varphi) = \{\rho | \rho, 1 \models \varphi\}$, is set of all the timed words satisfying it. We say that a closed formula $\varphi$ is satisfiable iff $L(\varphi)\neq \emptyset$. 
\\{\textbf{Size of a TPTL formula}}.
Given a TPTL formula $\varphi$, the size of $\varphi$ denoted by $|\varphi|$ is defined as 
$B+M+C$ where $B$ is the number of Boolean operators in $\varphi$, $M$ is the  number of temporal modalities
($\sbf, \until, \nex, \fut$) and freeze quantifiers in $\varphi$, and $C$ is obtained by multiplying the number of time constraints in $\varphi$ with 
 $2 \times (\lfloor \log(c_{max})\rfloor + 1)$ where $c_{max}$ is the maximal constant appearing in the time constraints 
 of $\varphi$.  
 For example, for $\varphi = x.(a \wedge b U (c \vee x \le (1,2)))$, $|\varphi| = 2 + 2 + 2\times (1+1) = 8$ as it contains two boolean operators, one temporal modality, one freeze quantifier and one timing constraint where $c_{max} = 2$.

The subclass of $\tptl$ that uses {\bf only k-clock variables} is known as \textbf{k-TPTL}. By \cite{OWH} \cite{OuaknineW06}, satisfiability checking for 1-$\tptl$ is decidable over finite models but non-primitive recursive hard, and undecidable over infinite models. 
Satisfiability checking for 2-$\tptl$ is undecidable over both finite and infinite models \cite{AH94} \cite{ictac}. 
Towards the main contribution of this paper, we propose a `non-punctual' fragment of $\tptl$ with unilateral intervals, called $\tptlu$, and 
show that its satisfiability checking is decidable with multiple variables over both finite and infinite timed words (\pspace-complete). Further, 1-$\tptlu$ is already more expressive 
than $\mitl$, which has an \expspace-complete satisfiability checking. 

\subsection{Multi-clock TPTL with unilateral intervals : $\tptlu$}
\label{sec:tptlu}
We say that a formula $\varphi$ is of the type $\le$ ($\ge$), iff all the intervals appearing in the open constraints of $\varphi$ are in $\intintervalz$ ($\intintervalinf$). Notice that a closed formula belongs to both types $\le$ and $\ge$. There are open formulae that are neither of type $\le$ nor $\ge$.
A $\tptl$ formula $\varphi$ in negation normal form is a $\tptlu$ formula iff every subformula of $\varphi$ is either of the type $\le$ or $\ge$.
For example, $x.y.(a \until (b \until (c \wedge x < 3 \wedge y \le 2 \wedge x.(\nex(c \wedge x>1)))))$ is a $\tptlu$ formula since there is no subformula that doesn't belong to either types $\le$ or $\ge$.
However, $x.y.(a \until (b \wedge x \le 3 \wedge y \ge 5))$ is not $\tptlu$, since 
$(b \wedge x \le 3 \wedge y \ge 5)$ is of neither type $\le$ or $\ge$ as the open constraints within this subformula use both left-sided as well as right-sided intervals. 
This restriction is inspired by that of $\mitl^{0,\infty}$. Any $\mitlu$ formula can be expressed in $1$-$\tptlu$  by applying the same reduction from $\mitl$ to $1$-$\tptl$ (see Remark \ref{rem:mitl-tptl}). Next, we introduce alternating timed automata which are useful in proving the main result, i.e., Theorem \ref{thm:main}.

\subsubsection{Alternating Timed Automata}
\label{sec:ata}
An Alternating Timed Automata (ATA) 
is a 7-tuple $\Aa = (Q, \Sigma, \delta, q_0, \Qacc, X, \G)$, where, $Q$ is a finite set of locations, $X$ is a finite set of clock variables, $\G$ is a finite set of guards of the form $x \in I$ where $I \in \intintervaln$ and $x \in X$, $\delta$ is a transition function, $q_0 \in Q$ is the initial location, and $\Qacc \subseteq Q$ is a set of accepting locations. The transition function is defined as   
$\delta: Q \times \Sigma \mapsto \Phi(Q,\G)$ where $\Phi(Q,\G)$ is defined by the  grammar
$\varphi::=\top|\bot|\varphi_1 \wedge \varphi_2|\varphi_1 \vee \varphi_2|q|x \in I|Y.q$
with  $q \in Q$, $x \in X$, $(x \in I)$ is a guard in  $\G$, 
$Y \subseteq X$, $Y$ is not the empty set. $\top, \bot$ respectively denote  $\true$ and  $\false$. 
 $Y.q$ is a \emph{binding construct} which resets all clocks in $Y$ to zero after taking the transition. 
 \begin{figure}
  \begin{center}
    \includegraphics[scale=0.27]{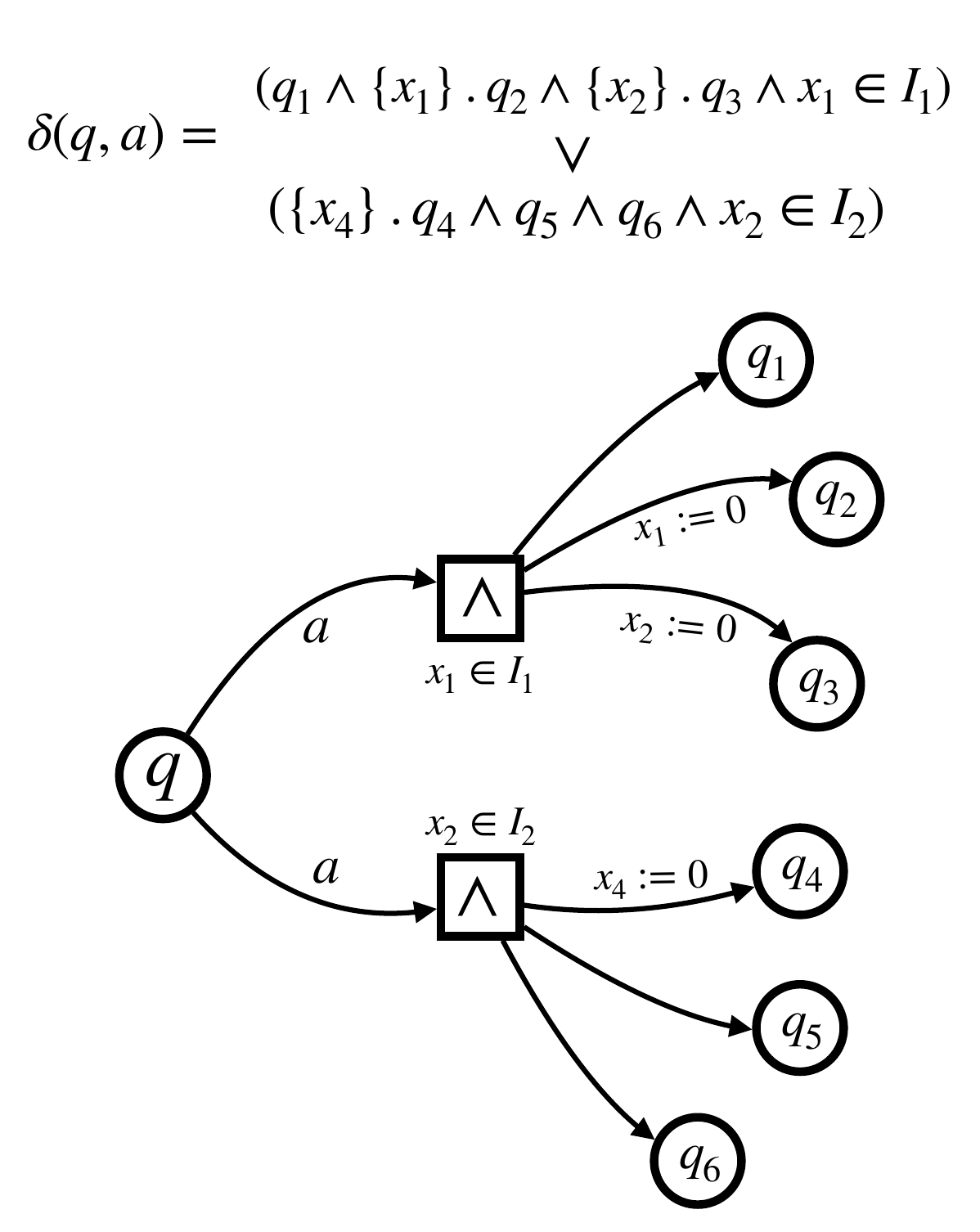}
    \caption{Graphical Representation of ATA transitions.}
    \label{fig:ATA}
    \end{center}
\end{figure}
 Let $p,q \in Q$ and $Y \subseteq X$. We say that there is a transition from $p$ to $q$ iff $q$ appears in $\delta(p,b)$ for some $b \in \Sigma$.
We say that there is a \textbf{strong reset transition}, \textbf{non-reset} transition, and a \textbf{Y-reset} transition from location $p$ to $q$ iff for some $b \in \Sigma$, $X.q$, $q$, and $Y.q$, respectively, appears in $\delta(p,b)$ for some $b \in \Sigma$. 
The 1-clock restriction of ATA has been considered in \cite{Ouaknine05} and \cite{LasotaW08}.  
\\{\bf Evaluation of $\Phi(Q,\G)$}. Given an ATA $\Aa$,  a state $s$ is defined as a pair consisting of a  location and a valuation over $X$, i.e., $s \in Q\times \Nu_X$. A configuration $C$ of an ATA is a finite set of states. Let $S$ and $\mathcal{C}$ respectively denote the set of all states and configurations of $\A$. A configuration $C$ and a clock valuation $\nu$ define a Boolean valuation for $\Phi(Q,\G)$ as follows:\\
\begin{tabular}{l|l}
$C \models_{\nu} q$ iff $(q,\nu)\in C$, & $C \models_\nu Y.q$ iff $(q,\nu) \in C$, and $\forall x \in Y. \nu(x) = 0,$\\  
$C \models_{\nu} x\in I$ iff $\nu(x) \in I$, & $C \models_{\nu} \varphi_1 \wedge \varphi_2$ iff $C\models_{\nu} \varphi_1 \wedge C \models_{\nu}\varphi_2,$\\
$C \models_{\nu} \top$ for all $C\in \mathcal{C}$, &  $C \models_{\nu} \varphi_1 \vee \varphi_2$ iff  $C \models_{\nu} \varphi_1 \vee C \models_{\nu}\varphi_2.$\\ 
\end{tabular}
\\Finally, $C \not \models_{\nu} \bot$ for all possible configurations. We say that $C$ is a minimal model for $\varphi \in \Phi(Q, \G)$ with respect to $\nu$ (denoted by $C \modelmin_\nu \varphi$) iff $C \models_{\nu} \varphi$ and no proper subset $C'$ of $C$ is such that $C' \models_{\nu} \varphi$.
We represent the transition function of ATA (and hence by extension the ATA) graphically as shown in figure \ref{fig:ATA}.

\noindent{\bf Semantics of  ATA}.  Given a state $s=(q,\nu)$, a time delay $t \in \R_{\ge 0}$ and $a \in \Sigma$, the successors  of $s=(q,\nu)$ on time delay $t$ followed by $a$ is any configuration $C$ such that $C \modelmin_{\nu +t} \delta (q, a)$. $\suca^{st}(s,t,a)$ is the set of all such successors. 
The notion of a successor is extended to a configuration in a straightforward manner. 
A configuration $C'$ is a successor of configuration $C = \{s_1, s_2, \ldots s_k\}$ on time delay $t$ and $a \in \Sigma$ (denoted by $C \xrightarrow{(t,a)}_\Aa C'$) iff $C' = C_1 \cup \ldots \cup C_k$ such that $\forall 1\le i \le k, C_i\in \suca^{st}(s_i,t,a)$. We denote by $\succ(C,t,a)$ set of all such successors $C'$.

 The initial configuration is defined by $C_{init} = \{(q_0, \mathbf{0})\}$, and 
a configuration $C$ is accepting iff for all $s \in C$, $s$ is an accepting state, that is $s=(q, \nu)$ for $q \in \Qacc$. Let \Cac be the set of all the accepting configurations. 
Hence, the empty configuration is an accepting configuration. We define the semantics of ATA using a Labelled Transition System (LTS). An LTS is a 5-tuple $T = (S, s_0, \Sigma, \delta, S_f)$, where $S$ is a finite or infinite set of states, $s_0 \in S$ is the initial state, $\Sigma$ is set of symbols, $\delta: S \times \Sigma \times S$ is a transition relation, and $S_f \subseteq S$ is a set of final states. A (finite) run $\rs$ of an LTS  is a (finite) sequence of the form $s_0, a_1, s_1, a_2, s_2, a_3 \ldots $ where $s_1, s_2, \ldots \in S$ are states of $T$, and $a_1, a_2, \ldots $ are symbols in $\Sigma$ such that for all $i > 0$, $s_i \in \delta(s_{i-1}, a_i)$. We say that a run $R = s_0, a_1, s_1, a_2, s_2, a_3 \ldots $ visits a state $s$ (or visits a set of states $S'$) iff the sequence $R$ contains $s$ (or contains states in $S'$). A  run is said to be accepting iff it ends in some state $s \in S_f$. Similarly, an infinite run is said to be \buchi accepting iff it visits $S_f$ infinitely often. 

Runs of $\Aa = (Q, \Sigma, \delta, q_0, \Qacc, X, \G)$ starting from a configuration $C$ are the runs of LTS $TS(\Aa,C)=(\mathcal{C}, C,  \mathbb{R}_{\ge 0} \times \Sigma, \rightarrow,$ \Cac$)$. Notice that the states of LTS $TS(\Aa, C)$ are configurations of $\Aa$ (i.e., a set of states of $\Aa$ and not just the states of $\Aa$). Let  $\rhos = (a_1, \tau_1) (a_2, \tau_2) \ldots$ be any timed word over $\Sigma$. We say that a run $R = C, (t_1, a_1), C_1, (t_2, a_2)\ldots$ is produced by $\Aa$ on $\rhos$ starting from a configuration $C$ iff $C \xrightarrow{(t_1, a_1)} C_1 \xrightarrow{(t_2, a_2)}C_2 \ldots$ where $t_i = \tau_i-\tau_{i-1}$ for $i>0$ and $\tau_0 = 0$. Let $\Aa(\rhos, C)$ be the set of all the runs produced by $\Aa$ on $\rhos$, starting from the configuration $C$. We denote $TS(\Aa, C_{init})$ as simply $TS(\Aa)$.
A run starting from the initial configuration $C_{init}$ is called an initialized run. We denote  $\Aa(\rhos, C_{init})$ by  $\Aa(\rhos)$. 
 $\rhos, i$ is said to be accepted (\buchi accepted) by $\Aa$ starting with configuration $C$, denoted by $\rhos,i \models \Aa,C$, iff there exists a run in $\Aa(\rhos[i..], C)$ accepted (\buchi accepted) by $TS(\Aa)$ (i.e., simulating $\Aa$ on the suffix of $\rho$ starting at position $i$ we obtain an accepting run). We say that $\rhos$ is accepted by $\Aa$ iff $\rhos, 1 \models C_{init}$.

 We define the finite (infinite) language of $\Aa$, denoted by $L_{fin}(\Aa)$ ($L_{inf}(\Aa)$), as a set of all the finite (infinite) timed words accepted by $\Aa$. When clear from context, we drop the subscript in $L_{fin}$ and $L_{inf}$.

Non-Deterministic Timed Automata (NTA) is a subclass of ATA where $\Phi(Q,\G)$ is restricted to be in disjunctive normal form (DNF), where each disjunct is of the form  $(q \wedge x \in I)$ or $(X'.q \wedge x \in I)$. Hence, for any $s \in S$, $t\ge 0$, $a \in \Sigma$ and any configuration $C \in \succ^{st}(s,t,a)$ implies  $C\le 1$. 

We call the ATA $\Aa$ a \textbf{Very Weak ATA} (VWATA) iff (1) there is a partial order $\ll_{\Aa} \subseteq Q \times Q$ such that there is a transition from $p$ to $q$ iff $q \ll_{\Aa} p$, (2) all the self-loop transitions (transitions entering and exiting into the same location) are non-reset transitions, and (3) For every location $q$, there is at most one location $p\ne q$ such that there is a transition from $p$ to $q$. Moreover, all the transitions from $p$ to $q$ reset the same set of clocks. This makes the transition diagram of $\vwata$ a tree and not a DAG (excluding self-loops). 
\begin{remark}
\label{sec:vwatanew-vwata}
In the literature, VWATA (also called Partially-Ordered Alternating Timed Automata in \cite{KKP18}) and their corresponding untimed version \cite{gastin-oddoux}\cite{vw}(also called as Linear \cite{linear}, Linear-Weak \cite{linearw}, 1-Weak \cite{1w}, and Self-Loop \cite{slaa} Alternating Automata) are required to satisfy only conditions (1) and (2). It can be shown that condition (3) does not affect the expressiveness of the machine. We notice that this version of $\vwata$ is enough to express TPTL formulae efficiently (linear in the size of TPTL formulae). In case of translation from $\tptl$ to VWATA satisfying condition (3) the number of locations in the resulting ATA will depend on the size of the formula tree. On the other hand, the total number of locations depends on the formula DAG on similar translation from $\tptl$ to VWATA satisfying only (1) and (2) making it exponentially more succinct. Hence, we consider a less succinct representation (i.e., tree or string, which is standard) of $\tptl$ formulae for computing its size as compared to the DAG representation.
\end{remark}
\subsubsection{ATA with Unilateral Intervals: $\atau$}
\label{sec:atau}
Similar to the unilateral version of $\tptl$ (i.e. $\tptlu$), we define a unilateral version of $\ata$ as follows.
Let $\Aa = (Q, \Sigma, \delta, q_0, \Qacc, X, \G)$ be any ATA. Let $\Gg$ ($\Gl$) be the subset of $\G$ containing all the guards of the form $x \in I$ where $I \in \intintervalinf$ ($I \in \intintervalz$). $\Aa$ is said to be an $\atau$ iff, $Q$ can be partitioned into $\Qg$ and $\Ql$ any transition exiting from any location $q \in \Qg$ ($q \in \Ql$) is guarded by a guard in $\Gg$ ($\Gl$), and any transition from any location in $\Qg$ to a location in $\Ql$, or vice-versa, is a strong reset transition. $\Aa$ is said to be $\vwatau$ iff it is also a VWATA. From this point onwards, for any set of locations $Q$ of any $\atau$, $\Qg$ and $\Ql$ will denote partitions of $Q$ satisfying the above condition.


\section{Satisfiability Checking for $\tptlu$}
\label{sec:main}
This section is dedicated to proving the following main theorem  of this paper.
\begin{theorem}
\label{thm:main}
\end{theorem}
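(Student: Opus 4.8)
The plan is to prove \pspace-completeness in the two usual directions, handling the lower bound by a trivial embedding and devoting essentially all the work to the upper bound via the automata-theoretic route sketched in the introduction. For the lower bound, I would observe that $\ltl$ is a syntactic fragment of $\tptlu$: an $\ltl$ formula contains no freeze quantifiers and no timing constraints, so vacuously each of its subformulae is simultaneously of type $\le$ and of type $\ge$, hence is a legal $\tptlu$ formula. Since satisfiability of $\ltl$ over both finite and infinite words is already \pspace-hard, \pspace-hardness of $\tptlu$ follows immediately, and no timing machinery is needed for this direction.

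For membership in \pspace\ I would proceed in three stages. Stage one is the translation from logic to automata: given a closed $\tptlu$ formula $\varphi$ in negation normal form, construct a $\vwatau$ $\Aa_\varphi$ with $L(\Aa_\varphi)=L(\varphi)$ and $|\Aa_\varphi|$ linear in $|\varphi|$. This mirrors the classical Vardi--Wolper / Ouaknine--Worrell construction: introduce one location per temporal subformula, realize $\until$ and $\sbf$ as self-loops (which are non-reset, matching the VWATA definition), implement each freeze quantifier $x.\psi$ by resetting $x$ on the binding construct $Y.q$ of the corresponding outgoing transition, and translate each constraint $x\in I$ directly into a guard. The partial order $\ll_{\Aa}$ is the subformula order, so the transition graph is tree-shaped with self-loops. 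Crucially, the $\tptlu$ discipline that every subformula is purely of type $\le$ or purely of type $\ge$ is exactly what permits partitioning the locations into $\Qg$ and $\Ql$ with only strong-reset transitions crossing the boundary, so that $\Aa_\varphi$ is a genuine $\vwatau$.

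Stage two, bounding configuration sizes, is the crux and the principal obstacle. A reachable configuration of an $\ata$ is a set of states, and conjunctive transitions may spawn arbitrarily many states, each with its own clock valuation, so a priori configurations are unbounded and no direct subset construction is available. I would introduce a simulation preorder $\lasim$ on states of the $\vwatau$ and argue that two states sharing a location are always $\lasim$-comparable: because the automaton is very weak, the location component evolves monotonically along $\ll_{\Aa}$ and each location sits on at most one simple cycle, while unilaterality makes a larger clock value uniformly ``worse'' along a type-$\le$ location and a smaller value uniformly ``worse'' along a type-$\ge$ location. Replacing each configuration $C$ by its reduced version $\reduce(C)$, which retains only the finitely many valuation-extremal representatives per location, should then yield a transition system that is simulation-equivalent to $TS(\Aa_\varphi)$ and whose configurations have size bounded by a polynomial in $|\Aa_\varphi|$ (a constant number of surviving states per location).

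Stage three converts the boundedly-small alternating system into a non-deterministic timed automaton $\B$ by a subset-like determinization of the alternation; simulation-equivalence gives $L(\B)=L(\Aa_\varphi)$, and $\B$ uses only polynomially many clocks (one per surviving state, reused across the capped configuration). Emptiness of such an NTA is decidable in \pspace\ by a nondeterministic search through its region automaton---reachability of an accepting region in the finite case and a lasso/\buchi\ search in the infinite case---storing only a single region at a time. Composing the three stages yields a \pspace\ decision procedure, which together with the lower bound establishes \pspace-completeness. I expect Stage two to be where the difficulty concentrates: defining $\lasim$ correctly and proving that $\reduce$ simultaneously preserves the accepted language and caps the configuration size polynomially. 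The very-weak (tree-shaped, single-cycle) structure and the unilateral-interval restriction are precisely the two ingredients that should make this extremal-representative argument go through, where it would fail for general multi-clock $\ata$.
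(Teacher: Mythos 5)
Your overall route is the same as the paper's: \pspace hardness via the $\ltl$ embedding, then membership by translating $\tptlu$ to $\vwatau$ (the paper's Theorem~\ref{thm:tptltoata}), compressing configurations up to simulation, and finally a subset-like construction yielding an NTA with polynomially many clocks whose region graph is explored on the fly (Section~\ref{sec:ata-ta}). Stages one and three are sound and match the paper. The genuine gap is in Stage two, exactly where you predicted the difficulty would concentrate: your central claim that \emph{two states sharing a location are always $\lasim$-comparable} is false as a standalone property of a $\vwatau$, so the argument in the order you propose (first establish comparability, then apply $\reduce$ to keep extremal representatives) cannot go through. Concretely, take clocks $\{x,y\}$ and the guard-free tree-shaped automaton with $\delta(q_0,a)=q_0\wedge\{x\}.q_1$, $\delta(q_1,a)=q_1\wedge\{y\}.q_2$, $\delta(q_2,a)=q_2$; it satisfies all three $\vwata$ conditions and is trivially unilateral. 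Running it, uncompressed, on the word $(a,1)(a,2)(a,3)(a,4)$ produces at the fourth step both $(q_2,(x{=}3,y{=}0))$ and $(q_2,(x{=}2,y{=}1))$ in the same configuration, and these are incomparable: the two states reached $q_2$ by resetting $x$ and $y$ at interleaved times.

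What is actually true---and what the paper's main technical Lemma~\ref{lem:main} proves---is that same-location states are comparable \emph{in the runs of the step-wise reduced system} $\tred(\Aa)$, and this must be proved by induction along the run, jointly with the reduction, because the induction hypothesis (every location occurs at most once in the preceding reduced configuration) is precisely what rules out configurations like the one above. That induction is nontrivial: the paper carries it out via path types in the run DAG, and it hinges on condition~(3) of $\vwata$ (each location has a unique non-self-loop predecessor, and all transitions between a fixed pair of locations reset the same clock set), a condition your Stage-two justification (``very weak, one simple cycle per location, unilateral'') never invokes. The condition is essential: under conditions (1)--(2) alone, a transition such as $\delta(p,a)=\{x\}.q\wedge\{y\}.q$ immediately creates incomparable states at $q$ that no reduction removes. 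Finally, your fallback phrasing---that $\reduce$ keeps ``the finitely many valuation-extremal representatives per location''---buys nothing: with two or more clocks an antichain of pairwise-incomparable valuations at a single location can be arbitrarily large (the paper's family $C_m$ in Observation~3), so without the at-most-one-state-per-location invariant there is no bound at all, and the polynomial clock count your Stage three relies on collapses. The missing piece is exactly the inductive path-type argument of Lemma~\ref{lem:main}, together with its consequence (Lemma~\ref{lem:maint}) that when a fresh spawn and a self-loop descendant meet at the same location, the self-loop valuation dominates.
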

\pspace  hardness follows from the hardness of satisfiability checking of the sublogics $\ltl$ and $\mitlu$ (see section \ref{sec:mE} for the details on $\mitlu$).
To show membership in \pspace we propose the following steps: (1) We reduce any given $k$-$\tptlu$ formula $\varphi$, to an equivalent $\vwatau$, $\Aa$, with $k$ clock variables and at most $|\varphi|+1$  number of locations. (2) We give a novel on-the-fly construction from any $\vwatau$ to simulation equivalent NTA $\A$ with exponential blow-up in the number of locations and polynomial blow-up in the number of clocks. Hence, the region automata corresponding to $\A$ has at most exponentially many states, and thus each state can be represented in polynomial space. 
\begin{remark}
\label{rem:complpspace}
Notice that while the reduction from $\vwatau$ to timed automata results in an exponential blow-up in the number of locations we can directly construct the region automaton of the corresponding timed automaton on-the-fly making sure that we need at most polynomial space to solve its emptiness checking problem.
\end{remark}
We demonstrate our steps of construction using a running example and defer the formal constructions to the Appendices \ref{app:tptltoata}, \ref{app:atatonta}. 
In our running example, we start with the given formula $\varphi = \sbf(\neg a \vee x.(\fut(a \wedge T-x \le 2 \wedge y. \nex(b \wedge T-x \le 3 \wedge T-y \le 2)))$. 
\subsection{$\tptlu$ to $\vwatau$}
\label{sec:tptl-vwata}
\begin{figure}
    \begin{center}
 
    \includegraphics[scale=0.25]{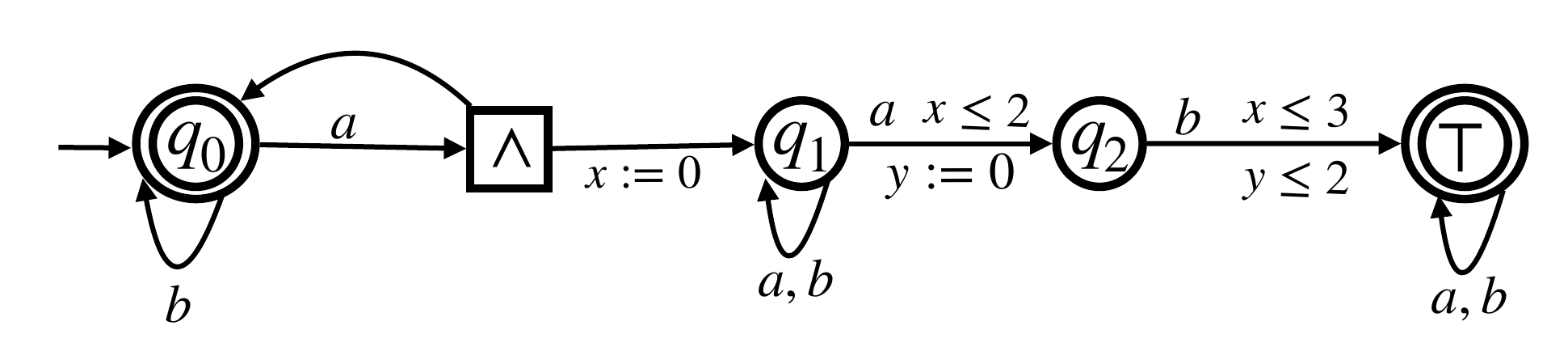}
    \caption{  $\vwatau$ equivalent to  $\varphi$. Location $q_i$ corresponds to the subformula $\varphi_i$ :  $\rhos, i, \nu \models \varphi_i$ iff $\rhos, i \models (q_i, \nu)$.}
    \label{fig:running-example}
    \end{center}
\end{figure}
This step is a straightforward multi-clock generalization of translation from $\mtl$ and $1$-$\tptl$ to 1-ATA in \cite{Ouaknine05} and \cite{OWH}, respectively, (which are themselves timed generalization of reduction from $\ltl$ to Very Weak Alternating Automata \cite{vardiwolper} \cite{gastin-oddoux}). We give the reduction in Appendix \ref{app:tptltoata} for completeness. The proof of equivalence is identical to that in \cite{Ouaknine05} and \cite{OWH} resulting in the following Theorem \ref{thm:tptltoata}. We give the $\vwatau$ corresponding to the formula $\varphi$ of the running example in Figure \ref{fig:running-example}. Hence, to prove the main theorem it suffices to show that emptiness checking for $\vwatau$ is in \pspace (i.e. Theorem \ref{thm:vwatasat}).
\begin{theorem}
\label{thm:tptltoata}
    Any $k$ variable $\tptl$ formula $\varphi$ over $\Sigma$ can be reduced to an equivalent $\vwata$, $\Aa = (Q, 2^\Sigma, \delta, \init, \Qacc, X, \G)$, with $|X| = k$, $|Q| \le |\varphi| + 1$, and  $\G$ is the set of all the guards appearing in $\varphi$. Moreover, if $\varphi$ is a $\tptlu$ formula, then the $\Aa$ is $\vwatau$.
\end{theorem}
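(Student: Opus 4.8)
The plan is to give a structural induction on the formula $\varphi$ that builds the $\vwata$ compositionally, mirroring the classical $\ltl$-to-VWAA construction of \cite{vardiwolper}\cite{gastin-oddoux} and its timed lifts in \cite{Ouaknine05}\cite{OWH}. For each subformula $\psi$ of $\varphi$ I would introduce a dedicated location $q_\psi$ whose intended invariant is exactly the correspondence stated in the caption of Figure \ref{fig:running-example}: $\rhos, i, \nu \models \psi$ iff $\rhos, i \models (q_\psi, \nu)$. The initial location is $q_\varphi$, and the set $\G$ of guards is declared to be precisely the set of timing constraints $x \in I$ syntactically occurring in $\varphi$, which immediately gives the claimed bound on $\G$. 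Since $\varphi$ is in negation normal form, I never need to complement, so the transition formulae in $\Phi(Q,\G)$ can be read off directly from the Boolean and temporal structure.

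The core of the construction is defining $\delta(q_\psi, a)$ by cases on the top operator of $\psi$. For atoms, $\delta(q_a,a)=\top$ and $\delta(q_a,b)=\bot$ for $b\neq a$ (and dually for $\neg a$); for $\psi_1 \wedge \psi_2$ and $\psi_1 \vee \psi_2$ the transition is $\delta(q_{\psi_1},a)\wedge\delta(q_{\psi_2},a)$ and $\delta(q_{\psi_1},a)\vee\delta(q_{\psi_2},a)$ respectively, so that $\wedge$ and $\vee$ are handled without fresh locations; for the freeze quantifier $x.\psi'$ I emit the binding construct, routing to $q_{\psi'}$ with $\{x\}$ reset so that the current timestamp is registered in $x$; for the constraint $T-x\in I$ I emit the guard $x\in I$ directly. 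The interesting cases are $\until$ and $\sbf$: for $\psi_1\until\psi_2$ I introduce a self-looping location $q_\psi$ with $\delta(q_\psi,a)=\delta(q_{\psi_2},a)\vee(\delta(q_{\psi_1},a)\wedge q_\psi)$, making $q_\psi$ non-accepting (so the loop cannot be taken forever), and dually for $\sbf\psi'$ I use $\delta(q_\psi,a)=\delta(q_{\psi'},a)\wedge q_\psi$ with $q_\psi$ accepting. All these self-loops are non-reset, the routing between locations respects the subformula partial order (child below parent), and there is exactly one parent per location, so conditions (1)--(3) of VWATA hold and the transition graph is a tree; this is where I would pause to check the size bound, since only temporal modalities and freeze quantifiers spawn new locations, yielding $|Q|\le|\varphi|+1$.

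For the correctness of the reduction I would appeal, as the authors indicate, to the fact that the argument is \emph{identical} to \cite{Ouaknine05} and \cite{OWH}: one proves by induction on $\psi$ and on the length of the remaining suffix $\rhos[i..]$ that the invariant $\rhos,i,\nu\models\psi \iff \rhos,i\models(q_\psi,\nu)$ holds, the only subtle points being the $\until$ case (finite-time witness corresponds to a finite unfolding of the accepting self-loop escape, exploiting $q_\psi\notin\Qacc$) and the $\sbf$ case (the \buchi condition forces the loop to be sustained forever, exploiting $q_\psi\in\Qacc$). Since this is a direct multi-clock generalization of the single-clock constructions and involves no new timing phenomena, the verification is routine and I would not reproduce it in full.

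The genuinely new obligation, and the one requiring care, is the \emph{last sentence}: if $\varphi$ is $\tptlu$ then $\Aa$ is $\vwatau$. Here I would use the type discipline of $\tptlu$, namely that every subformula is of type $\le$ or type $\ge$ (Section \ref{sec:tptlu}). I would set $\Ql=\{q_\psi \mid \psi \text{ is of type } \le\}$ and $\Qg=\{q_\psi \mid \psi \text{ is of type }\ge\}$, breaking the tie for closed subformulae (which are both types) arbitrarily but consistently with their parent. The guards emitted at $q_\psi$ come from the open constraints of $\psi$, which by definition of type $\le$ (resp. $\ge$) lie in $\intintervalz\subseteq\Gl$ (resp. $\intintervalinf\subseteq\Gg$), so each location is guarded only by the matching side. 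The main obstacle is verifying the $\atau$ crossing condition: any transition between a $\Qg$-location and a $\Ql$-location must be a strong reset. A transition $q_\psi \to q_{\psi'}$ with $\psi'$ an immediate subformula occurs at a type boundary precisely when $\psi$ carries a freeze quantifier $x.(\cdots)$ that reopens the type; I would argue that the $\tptlu$ negation-normal-form restriction guarantees a type change between parent and child can happen only across such a binding construct, and that the construction resets (indeed, I would arrange it to strong-reset) exactly at those points, so that no non-reset or partial-reset edge ever crosses between $\Qg$ and $\Ql$. Checking this alignment between the syntactic type boundaries of $\tptlu$ and the reset structure of the emitted transitions is the crux of the $\vwatau$ claim, and I would devote the bulk of the formal argument in Appendix \ref{app:tptltoata} to it.
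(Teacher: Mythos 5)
Your proposal follows essentially the same route as the paper's Appendix~\ref{app:tptltoata}: locations for temporal subformula occurrences (giving the tree structure needed for condition (3) and the bound $|Q|\le|\varphi|+1$), transition formulae obtained by recursion over the Boolean/freeze structure (the paper's $\formula$ function), non-accepting self-loops for $\until$ and accepting ones for $\sbf$, correctness inherited from \cite{Ouaknine05}\cite{OWH}, and the type-based partition into $\Qg$ and $\Ql$ for the unilateral claim. The only mechanical difference is how the strong-reset condition at type boundaries is secured: the paper preprocesses $\varphi$ into ``pushed, strictly closed'' form, replacing every closed subformula $\psi$ by $X.\psi$ so that every edge entering it resets all clocks (sound precisely because $\psi$ is closed), whereas you patch the resets on the boundary edges of the automaton directly --- the same idea, since in $\tptlu$ a type flip between parent and child can only occur across a closed subformula, which is exactly the fact that makes the strong reset harmless.
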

\subsection{Emptiness Checking for $\vwatau$}
\label{sec:vwatauemptiness}
 The following theorem is the main technical result.
\begin{theorem}
\label{thm:vwatasat}
    Emptiness Checking for $\vwatau$ is in \pspace.
\end{theorem}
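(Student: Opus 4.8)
\textbf{Proof Plan for Theorem \ref{thm:vwatasat}.}

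The plan is to reduce the emptiness-checking problem for a given $\vwatau$, $\Aa$, to the emptiness of a simulation-equivalent non-deterministic timed automaton (NTA) $\A$ whose number of clocks is polynomial in $|\Aa|$, and then to argue that the region automaton of $\A$ can be explored on-the-fly in polynomial space (as noted in Remark \ref{rem:complpspace}). The central difficulty, which the whole construction must address, is that configurations of an ATA are \emph{unbounded} sets of states: conjunctive transitions repeatedly spawn fresh copies of locations, each carrying its own clock valuation, so a naive subset construction produces configurations of unbounded size and does not give an NTA with finitely (let alone polynomially) many clocks.

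First I would exploit the special structure of $\vwatau$ to tame this blow-up. Because $\Aa$ is very weak, its locations are partially ordered by $\ll_{\Aa}$ with the transition graph being a tree (modulo self-loops), and because it is unilateral, every location lives in $\Qg$ or $\Ql$ and any switch between the two partitions is a strong-reset transition ($X.q$) that resets \emph{all} clocks. The key observation I would develop is a \emph{simulation relation} $\preceq$ among the states $(q,\nu)$ of $\Aa$: for a fixed location $q$, two states $(q,\nu)$ and $(q,\nu')$ that appear simultaneously in a configuration are to a large extent redundant, because the guards exiting $q$ are all of the same one-sided type. If $q \in \Qg$, every guard is of the form $x \in \langle l,\infty)$, so a \emph{larger} clock value can only help satisfy future guards; dually, if $q \in \Ql$, every guard is $x \in \langle 0,u\rangle$, so a \emph{smaller} clock value is more permissive. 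Hence among several copies of the same location, only the extremal valuation (coordinatewise max for $\Qg$, min for $\Ql$) need be retained: it simulates all the others for the purpose of reaching an accepting configuration (recall the empty configuration is accepting). This is precisely where unilaterality is essential — with two-sided intervals neither extreme would dominate.

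Next I would formalize a compression operator $\reduce$ that collapses each configuration by keeping, per location, only the simulation-maximal (resp.\ minimal) states, and prove that applying $\reduce$ after every transition yields a transition system that is simulation-equivalent to $TS(\Aa)$, hence language-equivalent for both finite (reachability) and infinite (\buchi) acceptance. The payoff is a bound on configuration size: after compression each of the at most $|\varphi|+1$ locations contributes a bounded number of retained states, so a compressed configuration carries only polynomially many clock valuations. I would then perform a subset-style construction on these bounded compressed configurations to obtain the NTA $\A$: its locations are the (exponentially many) compressed configurations, but its clocks are the polynomially many surviving clock copies, and its transitions mirror $\succ$ followed by $\reduce$. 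The hardest part of the argument is establishing the correctness of $\reduce$ — specifically that the extremal-valuation state genuinely simulates the discarded states through arbitrary future timed transitions, including the interaction of self-loops (which are non-reset, so they preserve the domination order) with the strong-reset transitions that cross between $\Qg$ and $\Ql$ (which reset all clocks and thus trivially re-synchronize valuations).

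Finally, with $\A$ in hand, membership in \pspace follows from the standard region-automaton argument: the region automaton of $\A$ has exponentially many states because $\A$ has exponentially many locations, but each region — a location together with a clock region over polynomially many clocks with constants bounded by $c_{max}$ — is describable in polynomial space. I would therefore avoid materializing $\A$ or its region automaton explicitly and instead guess a run of the region automaton symbolically, verifying successor relations and an accepting (reachable-final or \buchi) condition on-the-fly in \pspace. Combining this with the linear reduction of Theorem \ref{thm:tptltoata} from $\tptlu$ to $\vwatau$ yields the overall \pspace upper bound claimed in Theorem \ref{thm:main}, matching the \pspace hardness inherited from $\ltl$ and $\mitlu$.
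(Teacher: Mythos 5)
Your overall skeleton (compress configurations via a simulation preorder, do a subset-style construction yielding an NTA with polynomially many clock copies, then explore its region automaton on-the-fly) is indeed the paper's route, but your compression rule is inverted, and as stated it is unsound. Configurations of an ATA are \emph{conjunctive}: a word is accepted from $C$ iff it is accepted from every state of $C$, so the redundant states are the \emph{more permissive} ones. Proposition \ref{prop:reduce} deletes $s'$ when $s \la s'$, i.e., it keeps the state that the discarded ones simulate. For $q \in \Qg$, a larger valuation satisfies more lower-bound guards, hence $(q,\nu_1) \la (q,\nu_2)$ whenever $\nu_1 \le \nu_2$, and it is the \emph{larger} valuation $(q,\nu_2)$ that must be deleted: one retains the coordinatewise minimum for $\Qg$ and the maximum for $\Ql$ --- exactly the opposite of your rule. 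Retaining the state that ``simulates all the others,'' as you propose, replaces an intersection of languages by a superset of one of them; the compressed system can then accept words the original rejects, so an emptiness check on it can answer ``non-empty'' for an empty $\vwatau$.

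The second, deeper gap: even with the correct orientation, your claim that compression leaves ``a bounded number of retained states'' per location does not follow. Coordinatewise comparison of multi-clock valuations is only a \emph{partial} order; compression removes comparable pairs, so what survives at a location is an antichain of pairwise incomparable valuations, and such antichains can be unboundedly large --- this is precisely the paper's Observation 3, with the explicit family of configurations containing $m$ incomparable valuations over two clocks. Bounding these antichains is the actual crux of the proof: Lemma \ref{lem:main} shows, by an induction over run DAGs and path types that uses all three $\vwata$ conditions (tree-shaped transition structure, reset-free self-loops, and that all transitions into a location reset the same clock set), that in compressed runs every location occurs at most once. Your proposal never confronts incomparability; you identify ``the hardest part'' as the correctness of $\reduce$ under future transitions, which in the paper is the comparatively routine Proposition \ref{prop:red}. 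Without an analogue of Lemma \ref{lem:main}, the subset construction has no polynomial bound on the number of clock copies, and the \pspace upper bound does not follow.
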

We give a translation from $\vwatau$ {$\Aa = (Q, \Sigma, \delta, q_0, \Qacc, X, \G)$} to an equivalent timed automaton, ${\A = (\Q, \Sigma, \Delta, \q_0, \Qc, \X, \G)}$, such that the transition system of $\Aa$ (i.e., $TS(\Aa)$) is simulation equivalent to that of $\A$ (i.e., $TS(\A)$). Hence, by the Proposition \ref{prop: langeq}, ${L(\Aa) = L(\A)}$.

Moreover, $\Q = O(2^{Poly(Q)})$ and $|\X| = |X| \times |Q|$. Hence, the number of states in the corresponding region automaton is exponential to the size of $\Aa$ (i.e. $O(2^{Poly(|Q|, |X|)}) \times (2 \times c_{max}+1)$ where $c_{max}$ is the maximum constant used in the constraints appearing in $\G$). Hence, each state of the region automata (when encoded in binary) can be represented in polynomial space proving membership in \pspace.
We prove the above by giving a translation from $\vwatau$ to timed automata with polynomial blowup in the number of clocks and exponential blowup in the set of locations. 
As a side-effect, we also show that emptiness checking for 1-$\atau$ is in \pspace (using the same construction) generalizing the result of \cite{H19}. We first briefly discuss the concept of simulation relations and preorder.

 \subsubsection{Simulation Relations and Preorder}
 We fix a pair of labeled transition system, $TS^1 = (S^1, s^1_0, \Sigma, \delta^1, S_{f}^{1})$ and $TS^2 = (S^2, s^2_0, \Sigma, \delta^2, S^{2}_{f})$. A relation ${\preceq} \subseteq S^1 \times S^2$ is a simulation relation iff 
 (1) $s^1_0 {\preceq} s^2_0$,
 (2) for every $s_1 \preceq s_2$, (2.1) if $s_1 \in S^1_{f}$ then $s_2 \in S^2_{f}$, and (2.2) for every $a \in \Sigma$, for every $s_1' \in \delta(s_1, a)$ there exists $s_2' \in \delta(s_2, a)$ such that $s_1' \preceq s_2'$. If $s_1 \preceq s_2$, then we say that $s_2$ simulates $s_1$ wrt $\preceq$.

 Let $S = S^1 \cup S^2$. Notice that simulation relations are closed under union. Hence, there is a unique maximal simulation relation, ${\le} \subseteq S \times S$, which is the union of all the simulation relations amongst states of $TS^1$ and $TS^2$ (i.e. all the simulation relations between $TS^1$ and itself, between $TS^2$ and itself, and from $TS^1$ to $TS^2$ and vice-versa). Notice that $\le$ is a preorder relation (i.e. reflexive and transitive), and hence also called simulation preorder. Similarly, simulation equivalence relation, $\cong$ is defined as the largest symmetric subset of simulation preorder, $\le$. I.e., $s \cong s'$ iff $s \le s'$ and $s' \le s$. Hence, it is clear that $\cong$ is an equivalence relation. If $s \le s'$ we say that $s'$ simulates $s$. Recall that the states of $TS(\Aa,C)$ for any ATA $\Aa$ and its configuration $C$ are configurations of $\Aa$. The following Proposition is then straightforward. 
\begin{proposition}
\label{prop: langeq}
    Let $\Aa$ and $\Aa'$ be any ATA, and $s_0, s'_0$ be their initial states, respectively. $TS(\Aa,\{s\}) \le TS(\Aa', \{s'\})$ implies $L_{fin}(\Aa) \subseteq L_{fin}(\Aa')$ and $L_{inf}(\Aa) \subseteq L_{inf}(\Aa')$. Hence,  $TS(\Aa,\{s\}) \cong TS(\Aa', \{s'\})$ implies $L_{fin}(\Aa) = L_{fin}(\Aa')$ and $L_{inf}(\Aa) = L_{inf}(\Aa')$
\end{proposition}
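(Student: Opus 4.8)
This is the standard fact that simulation preorder refines trace-language inclusion, specialised to the configuration-LTS semantics of ATA, so the plan is short. I would first prove the two inclusions $L_{fin}(\Aa)\subseteq L_{fin}(\Aa')$ and $L_{inf}(\Aa)\subseteq L_{inf}(\Aa')$ from the hypothesis $TS(\Aa,\{s\})\le TS(\Aa',\{s'\})$, and then derive the ``Hence'' statement for $\cong$ by applying these inclusions in both directions. Write $TS^1=TS(\Aa,\{s\})$ and $TS^2=TS(\Aa',\{s'\})$; here $\{s\}$ and $\{s'\}$ are the initial configurations $C_{init}$ and $C_{init}'$, so that the final states of these LTS are exactly the accepting configurations and both share the label alphabet $\mathbb{R}_{\ge 0}\times\Sigma$, which is what lets the language definitions kick in.

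First I would fix $\rhos=(a_1,\tau_1)(a_2,\tau_2)\ldots\in L(\Aa)$ together with an accepting run $C_0\xrightarrow{(t_1,a_1)}C_1\xrightarrow{(t_2,a_2)}C_2\cdots$ of $TS^1$ on it, where $C_0=\{s\}$ and $t_i=\tau_i-\tau_{i-1}$. The heart of the proof is an induction building a run $C_0'\xrightarrow{(t_1,a_1)}C_1'\xrightarrow{(t_2,a_2)}C_2'\cdots$ of $TS^2$ on the same label sequence while maintaining the invariant $C_i\le C_i'$. The base case $C_0\le C_0'$ is clause (1) of the simulation relation. For the inductive step, from $C_{i-1}\le C_{i-1}'$ and the move $C_i\in\delta^1(C_{i-1},(t_i,a_i))$, clause (2.2) applied to the label $(t_i,a_i)$ produces a matching $C_i'\in\delta^2(C_{i-1}',(t_i,a_i))$ with $C_i\le C_i'$. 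Since the matched transition carries exactly the same label $(t_i,a_i)$, the resulting $TS^2$-run reads precisely $\rhos$, with identical symbols and identical time delays.

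It then remains to transfer acceptance along the invariant, which is where clause (2.1) is used. In the finite case the run of $TS^1$ ends in a final configuration $C_n$; as $C_n\le C_n'$, clause (2.1) forces $C_n'$ to be final, so the constructed $TS^2$-run is accepting and $\rhos\in L_{fin}(\Aa')$. In the infinite case the run of $TS^1$ is Büchi accepting, i.e.\ $C_i$ is final for infinitely many $i$; applying (2.1) at each such index shows $C_i'$ is final for infinitely many $i$, so the $TS^2$-run is Büchi accepting and $\rhos\in L_{inf}(\Aa')$. This proves both inclusions. For the final sentence, $TS(\Aa,\{s\})\cong TS(\Aa',\{s'\})$ unfolds into $TS(\Aa,\{s\})\le TS(\Aa',\{s'\})$ and $TS(\Aa',\{s'\})\le TS(\Aa,\{s\})$, and feeding each direction into the inclusions already proved yields $L_{fin}(\Aa)=L_{fin}(\Aa')$ and $L_{inf}(\Aa)=L_{inf}(\Aa')$.

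I do not anticipate a genuine obstacle: the construction is a direct induction, consistent with the text calling the Proposition ``straightforward''. The only points demanding care are bookkeeping ones: that matching transitions with identical labels $(t_i,a_i)$ is precisely what forces the two runs to read the same timed word, time delays included; and that in the infinite case acceptance must be transferred at the infinitely many final positions rather than once at a terminal one.
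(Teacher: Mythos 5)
Your proof is correct and is exactly the standard argument the paper has in mind: the paper offers no explicit proof of Proposition~\ref{prop: langeq} (it merely calls it ``straightforward''), and your induction---matching transitions label-by-label via clause (2.2), transferring acceptance via clause (2.1) at the final configuration in the finite case and at infinitely many indices in the B\"uchi case, then symmetrizing for $\cong$---is precisely the intended filling-in of that gap. No issues to report.
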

We fix an ATA $\Aa = (Q, \Sigma, \delta, q_0, \Qacc, X, \G)$. Let $C$ and $C'$ be arbitrary configurations of $\Aa$. 
Let $\la$, $\ca$ be the simulation preorder and simulation equivalence amongst configurations of $\Aa$. That is, $C \la C'$ iff  $C'$ simulates $C$, and $C \ca C'$ iff $C$ is simulation equivalent to $C'$ in $TS(A)$, the transition system corresponding to ATA $\Aa$. Then, by Proposition \ref{prop: langeq}:
\begin{remark}
    \label{rem:langsim}
    For any configuration $C$ and $C'$ of $A$, $C \la C'$ implies $L(\Aa, C) \subseteq L(\Aa, C')$ and $C \ca C'$ implies $L(\Aa, C) = L(\Aa, C')$.
\end{remark}

\begin{remark}    
        \label{rem:subsetsimulation}
        $C \supseteq C'$ implies $C \la C'$. Hence, for any timed word $\rho$, if $\rho, i \models \Aa, C$ then $\rho, i \models \Aa, C'$. Intuitively, the additional states in $C$ (which are not appearing in $C'$) impose extra obligations in addition to that imposed by states common in both $C$ and $C'$ which makes reaching the accepting configuration (hence accepting a timed word) harder from $C$.
    \end{remark}
\begin{proof}[Proof of Remark 8]
    We show that $\supseteq$ is a simulation relation. That is, $C’ \subseteq C$ (or $C \supseteq C’$) implies $C \la C’$. We just need to prove that $\supseteq$ indeed satisfies condition 2 (condition 1 is trivially satisfied) of the simulation relation. Notice that the set of accepting configurations is downward closed.
That is, for any accepting configuration $D$ any of its subset $D’$ is accepting (by definition of accepting configuration). Hence,  $\supseteq$ satisfies condition 2.1.

Wlog, let $C = \{s_1, \ldots s_n\}$ and $C’  = \{s_1, \ldots s_m\}$ for some $m \le n$.
For any $(t,a) \in \R \times \Sigma$, by definition, any $C \xrightarrow{(t,a)} D$ iff $D = D_1 \cup D_2 \ldots D_n$ where $D_i$ is some successor of $s_i$ on $(t,a)$. Similarly, $C' \xrightarrow{(t,a)} D' iff D’ = D_1 \cup D_2 \ldots D_m$ where $D_i$ is some successor of $s_i$ on $(t,a)$. Hence, For any $(t,a) \in \R \times \Sigma$, for any $C \xrightarrow{(t,a)} D$ there exists a $C' \xrightarrow{(t,a)} D'$ such that $D \supseteq D’$ (for any state $s_i \in C \cap C’$ , just choose the same successor of $s_i$ which was used in $D$ to construct $D’$). Hence, for any successor $C$, we have a successor of $C’$ which is a subset of that of $C$. Hence, $\supseteq$ satisfies condition 2.2.

Intuitively, these extra states $s_{m+1}\ldots s_n$ will generate extra states in the successor configurations, which will make reaching an accepting configuration harder from $C$ (and its successors) as compared to $C’$ (and its successors),
\end{proof}
    
\begin{remark}
    \label{rem:replacesimulation}
        If $D' \subseteq C$ and $D \la D'$,  then $(C\setminus D') \cup D \la C$. In other words, we can replace the states in $D'$ with that in $D$ in any configuration $C$, and get a configuration that is simulated by $C$. Hence, $L(\Aa, (C\setminus D') \cup D) \subseteq L(\Aa, C)$.
    \end{remark}
\begin{proof} [Proof of Remark 9]
Consider a relation $\r$ amongst configurations of $A$ such that $E~{\r}~E'$ iff $E' = E_1 \cup E_2$ such that $E \la E_1$ and $E \la E_2$. We first show that $\r$ is a simulation relation. We need to show that $\r$ satisfies condition (2.2) as it trivially satisfies condition (2.1). Let $E~\r~E'$ and $E' = E_1 \cup E_2$ such that $E \la E_1$ and $E \la E_2$.  Let $E_2' = E_2 \setminus E_1$. Then $E \la E_2 \la E_2'$ (by remark \ref{rem:subsetsimulation}). By definition of simulation preorder, for any $E \xrightarrow{(t,a)} F$ there exists (1)$ E_1  \xrightarrow{(t,a)} F_1$ and (2)$E_2' \xrightarrow{(t,a)} F_2$ such that $F \la F_1$ and $F \la F_2$. Let $F' = F_1 \cup F_2$. By semantics ATA, (1) and (2) imply $E' \xrightarrow{(t,a)} F'$.   
Moreover, by definition of $\r$, $F~\r~F'$ if $E~\r~E'$. Hence, for any $E~\r~E'$, and $E \xrightarrow{(t,a)} F$ there exists $E' \xrightarrow{((t,a)} F'$ such that $F~\r~F'$. Hence, $\r$ satisfies condition 2.2.

Given $D' \la D$. Hence, by remark \ref{rem:subsetsimulation}, $((C\setminus D') \cup D) \la (C \setminus D')$ and $(C \setminus D') \cup D \la D \la D'$. Hence, $(C \setminus D') \cup D~ \r ~((C \setminus D') \cup D')$. Hence, $((C \setminus D') \cup D) \la C$.
\end{proof}

Both the above remarks imply the following Proposition. We abuse the notation by writing $\{s\} \la \{s'\}$ as $ s \la s'$. 
\begin{proposition}
\label{prop:reduce}
    If $s, s' \in C$ and $s \la s'$ then $C \setminus \{s'\} \ca C$.
\end{proposition}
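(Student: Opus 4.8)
The plan is to prove Proposition~\ref{prop:reduce} by reducing it to the two remarks that immediately precede it, using $D' = \{s'\}$ and $D = \{s\}$ as the witnessing sets. The statement claims that if $s, s' \in C$ with $s \la s'$, then deleting $s'$ from $C$ yields a simulation-\emph{equivalent} configuration, so I must establish the two inequalities $C \setminus \{s'\} \la C$ and $C \la C \setminus \{s'\}$ separately.

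For the direction $C \setminus \{s'\} \la C$, I would simply invoke Remark~\ref{rem:subsetsimulation}: since $C \supseteq C \setminus \{s'\}$, the superset-to-subset inclusion gives $C \la C \setminus \{s'\}$ directly. Wait---I should be careful with the orientation of $\la$, since Remark~\ref{rem:subsetsimulation} reads ``$C \supseteq C'$ implies $C \la C'$'', meaning the larger configuration is simulated \emph{by} the smaller one (the extra obligations make acceptance harder). So from $C \supseteq C \setminus \{s'\}$ we get $C \la C \setminus \{s'\}$, which is one of the two halves I need.

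For the other half, $C \setminus \{s'\} \la C$, I would apply Remark~\ref{rem:replacesimulation} with the substitution $D' = \{s'\}$ and $D = \{s\}$. The hypotheses of that remark require $D' \subseteq C$ and $D \la D'$; here $\{s'\} \subseteq C$ holds since $s' \in C$, and $\{s\} \la \{s'\}$ is exactly the assumption $s \la s'$ (under the abbreviation introduced just before the proposition). The remark then yields $(C \setminus \{s'\}) \cup \{s\} \la C$. Finally, because $s \in C$ and $s \neq s'$ (or even if $s = s'$ the statement is trivial), we have $(C \setminus \{s'\}) \cup \{s\} = C \setminus \{s'\}$, so this gives precisely $C \setminus \{s'\} \la C$. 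Combining the two inequalities via the definition of $\ca$ as the largest symmetric subset of $\la$ closes the argument.

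The only subtle point---and the part I would double-check most carefully---is the set-theoretic identity $(C \setminus \{s'\}) \cup \{s\} = C \setminus \{s'\}$, which relies on $s$ already belonging to $C$ and on $s$ not being removed (i.e.\ $s \neq s'$; if $s = s'$ the proposition is a triviality since $C \setminus \{s'\} = C \setminus \{s\}$ is not what we want, so one implicitly assumes distinct states or treats that case by reflexivity of $\ca$). Everything else is a mechanical instantiation of the two preceding remarks, so there is no genuine analytic obstacle here; the proposition is really just the symmetric packaging of Remarks~\ref{rem:subsetsimulation} and~\ref{rem:replacesimulation}.
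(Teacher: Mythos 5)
Your proof is correct and matches the paper's own argument exactly: both halves are obtained by instantiating Remark~\ref{rem:replacesimulation} with $D'=\{s'\}$, $D=\{s\}$ (using the identity $(C\setminus\{s'\})\cup\{s\}=C\setminus\{s'\}$) and Remark~\ref{rem:subsetsimulation} for the reverse direction. Your explicit note that the statement implicitly assumes $s\ne s'$ is a fair observation, but otherwise there is no difference in approach.
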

\begin{proof}
Notice that $(C \setminus \{s'\}) \cup \{s\} = C \setminus \{s'\}$.
Hence, by Remark \ref{rem:replacesimulation}, $(C \setminus \{s'\}) \la C$. By Remark \ref{rem:subsetsimulation}, $C \la C \setminus \{s'\}$. Hence proved.
\end{proof}
We use the above Proposition \ref{prop:reduce} and  Lemma \ref{lem:main} (which holds for $\vwatau$ and  1-$\atau$) to bound the cardinality of the configuration preserving simulation equivalence. This bound on the cardinality of configurations will imply that we need to remember only a bounded number of clock values to simulate these configurations. Hence, we use this bound on the cardinality of the configurations to bound the number of clock copies required while constructing the required timed automaton.

\subsection{Bounding Cardinality of Configurations}
\subsubsection{Intuition}
We now discuss the intuition for the decidability of $\vwatau$. The main reason for the undecidability of $\ata$ or $\vwata$ is due to the unboundedness of the configuration size. That is, the cardinality of the configurations could depend on the length of the timed word prefix read so far. Hence, we need to keep track of an unbounded number of clocks. This happens, because we can reset a clock $x$ in one branch and not reset $x$ in another branch while taking transitions. This is a result of transitions containing clauses of the form $(X_i.q_i \wedge X_j.q_j)$ where $X_i \ne X_j$ and $X_i, X_j \subseteq X$. That is, we get two states in the successive configuration each resetting a different set of clocks. Hence, we need to remember multiple values for clock variables that are reset in one branch and not in another. In case of $\atau$, we observe the following:
\begin{itemize}
    \item Observation 1 - Let $q \in \Qg$. Due to the nature of constraints, i.e. $x_i \in (l, \infty)$, if we have a pair of states $(q, \nu_1), (q,\nu_2)$ in a configuration $C$, such that $\nu_1 \le \nu_2$ (i.e. $\forall x \in X. \nu_1 (x) \le \nu_2(x)$), then any timing constraint that is satisfied by $\nu_1$ will also be satisfied by $\nu_2$. Hence, any transition that can be taken by $(q, \nu_1)$ can also be taken by $(q, \nu_2)$. Moreover, after taking the same transition (time delay followed by event-based transition) both $(q, \nu_1)$ and $(q, \nu_2)$ get states of the form $(q', \nu_1')$ and $(q',\nu_2')$, respectively, in their successor configurations, such that $\nu_1' \le \nu_2'$ if $q' \in \Qg$ and $\nu_1' = \nu_2' = \mathbf{0}$ if $q' \in \Ql$. Thus, $(q, \nu_1) \la (q,\nu_2)$. Hence, by Proposition \ref{prop:reduce}, we can delete $(q,\nu_2)$ from $C$ preserving simulation equivalence (and hence the language). A similar argument applies for $q \in \Ql$.  
    \item Observation 2 - In 1-ATA, for any pair of valuations $\nu_1, \nu_2$, either $\nu_1 \le \nu_2$ or $\nu_2 \le \nu_1$. Hence, on applying the reduction using Proposition \ref{prop:reduce} (and discussed in the previous bullet, i.e., Observation 1), we will always get a configuration, where each location appears at most once. Hence, the cardinality of configurations is bounded by the number of locations. 
    \item Observation 3 - But this is not necessarily the case for multiple clocks. This is because there could be unboundedly many incomparable valuations. For example, for 2-clocks $X = \{x, y\}$, consider the following family of configurations parameterized by $m$, $C_m = \{ (q, x=0.1+n k, y=0.9 -n k)|n \in \{0, \ldots,m-1\}\}$ and $k = 0.8/m$. $|C_m| = m$ and all the clock valuations are incomparable. Notice \\$C_8 = \{{(q, x=0.1, y=0.9), (q, x=0.2, y=0.8) \ldots (q, x=0.9, y=0.1)}\}$. \\Hence, as the second main step we show that, if $\Aa$ is a $\vwatau$, and if we conservatively keep on compressing the configurations as discussed in Observation 1 (using Proposition \ref{prop:reduce}), we will have boundedly many incomparable clock valuations. To be precise, we will have at most one copy of each location in the configuration. This is shown in Lemma \ref{lem:main} (the main technical  Lemma).
\end{itemize}

\subsubsection{Bounding Lemma}
\label{sec:bound}
In this section, we will use the intuition in Observation 1 for constructing a simulation equivalent transition system for a given 1-$\atau$ and $\vwatau$ whose states are configurations of given ATA $\Aa$ with bounded cardinality. For the 1-$\atau$, the intuition in Observation 2 guarantees the case. 
For the multi-clock $\vwatau$, the issues discussed in Observation 3 must be resolved. This is resolved in Lemma \ref{lem:main}, the main contribution of this section.

In what follows, assume $\Aa$ to be an $\atau$. We define relation $\lasim$ amongst states of $\Aa$.
For ${\sim} \in \{\le, \ge\}$, let $\lasim$ be defined between states such that $s \lasim s'$ iff $s=(q, \nu)$, $s' = (q, \nu')$, and if $q \in \Qsim$ then $\nu' \sim \nu$. By Observation 1 we have Proposition \ref{prop:red}. The formal proof is in Appendix \ref{app:small}. 
\begin{proposition}
\label{prop:red}
$s \lasim s'$ implies $s \la s'$. 
\end{proposition}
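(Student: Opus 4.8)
The plan is to prove Proposition~\ref{prop:red} by showing that the relation $\lasim$ is itself a simulation relation on the states of $\Aa$; since $\la$ is defined as the \emph{maximal} simulation preorder (the union of all simulation relations), establishing $\lasim \subseteq {\la}$ reduces exactly to verifying the two defining conditions of a simulation. Condition~(1) on initial states is vacuous or trivial at the level of single states, so the real work is condition~(2). For the accepting condition (2.1), I would observe that $s \lasim s'$ forces $s$ and $s'$ to share the same location $q$, and acceptance of a state depends only on whether its location lies in $\Qacc$; hence $s$ accepting immediately gives $s'$ accepting. The substantive part is the step condition (2.2): for every successor of $s$ under a timed letter $(t,a)$, I must exhibit a matching successor of $s'$ that stays inside $\lasim$.

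For condition (2.2) I would split on the type of the location $q$, using the partition $Q = \Qg \cup \Ql$ of the $\atau$. Take $s = (q,\nu)$ and $s' = (q,\nu')$ with, say, $q \in \Qg$, so by definition of $\lasim$ we have $\nu' \ge \nu$ pointwise. The key monotonicity observation (this is exactly Observation~1 in the intuition section) is that every guard exiting a location in $\Qg$ has the form $x \in \langle l, \infty)$, and such a guard is upward closed: if $\nu(x) \in \langle l,\infty)$ then $\nu'(x) \ge \nu(x)$ also lies in $\langle l,\infty)$. Consequently $\nu$ and $\nu'$, shifted by the same delay $t$, satisfy exactly the same set of relevant guards in $\delta(q,a)$ (or at least $\nu'+t$ satisfies every guard that $\nu+t$ does), so any minimal model $C \modelmin_{\nu+t}\delta(q,a)$ is matched by the \emph{same syntactic choice} of minimal model $C' \modelmin_{\nu'+t}\delta(q,a)$. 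I would then check that the resulting successor states pair up under $\lasim$: for each target location $q'$ reached, if the transition is non-reset and $q' \in \Qg$ the two successors carry valuations $\nu+t$ and $\nu'+t$ with $\nu'+t \ge \nu+t$, preserving $\la_{\ge}$; if the transition is a strong reset (which, in an $\atau$, is precisely what happens when crossing between $\Qg$ and $\Ql$) both successors reset to $\mathbf 0$ and are literally equal, hence trivially related. The case $q \in \Ql$ is the mirror image: guards have the form $x \in \langle 0, u\rangle$ which is downward closed, $\nu' \le \nu$, and the same argument applies with inequalities reversed.

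The one point that needs care, and which I expect to be the main obstacle, is matching successor \emph{configurations} rather than successor states: a successor of the single state $s$ is a minimal model $C$ of the Boolean formula $\delta(q,a)$, which may be a whole set of states arising from conjunctions, disjunctions, and binding constructs $Y.q'$. I must argue that the same disjunctive choices that produce a minimal model for $\nu+t$ also produce a legitimate minimal model for $\nu'+t$, and then that the two resulting configurations are related by the configuration-level lift of $\lasim$ (componentwise, via Remark~\ref{rem:replacesimulation}/Proposition~\ref{prop:reduce}). Because $\atau$ guards along transitions out of a fixed location are all of one sidedness, the truth value of each guard literal $x \in I$ either agrees for $\nu+t$ and $\nu'+t$ or flips only in the ``harder'' direction for $s$, so every conjunct satisfiable for $\nu+t$ remains satisfiable for $\nu'+t$; this is what lets me reuse the choice. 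Once the configuration-level relation is checked to be componentwise $\lasim$, minimality is preserved by choosing the matching minimal model, and condition~(2.2) follows. Finally, invoking the maximality of $\la$ yields $s \lasim s' \Rightarrow s \la s'$, completing the proof.
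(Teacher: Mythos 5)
Your proposal is correct and follows essentially the same route as the paper's proof: both lift $\lasim$ to configurations and verify it is a simulation relation, using the one-sidedness of guards out of $\Qg$ (upward closed) and $\Ql$ (downward closed) to let $s'$ mimic the same syntactic choice of minimal model with the same resets, and invoking the strong-reset requirement on transitions crossing between $\Qg$ and $\Ql$ to keep the successor valuations related. You also correctly identify (and resolve) the one delicate point the paper glosses over, namely that the matching model for $\nu'+t$ need not itself be minimal and one must pass to a minimal model contained in it.
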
 
Given any configuration $C$, we define $\reduce(C)$ as a configuration $C'$ obtained from $C$, by deleting all states $s' \in C'$ if there exists a state $s\in C'$, such that $s\ne s'$, and $s\lasim s'$. Intuitively, we delete some information from a configuration that is redundant in deciding whether a timed behaviour from that state is accepted or not.

Let $C_0$ be the initial configuration of $\Aa$. Let $TS(\Aa) = (\mathcal{C}, C_0, (\mathbb{R}_{\ge 0} \times \Sigma), \rightarrow_{\Aa})$ be the transition system corresponding to $\Aa$. We define $\tred(\Aa)$ as a transition system $\tred(\Aa) = (\mathcal{C}, C'_0, (\mathbb{R}_{\ge 0} \times \Sigma), \rightarrow_{\Aa, \mathsf{red}})$ such that $C'_0 = \reduce (C_0)$ and for any $C, C', D, D' \in \mathcal{C}$, $a \in \Sigma$, and $t \in \mathbb{R}_{\ge 0}$, $C \xrightarrow{(t,a)}_{\Aa} C'$ iff $D \xrightarrow{(t,a)}_{\Aa, \mathsf{red}} D'$, $D = \reduce (C)$, and $D' = \reduce(C')$. By Proposition \ref{prop:reduce2}, $TS(\Aa)$ is simulation equivalent to $\tred(\Aa)$. 
 The following Proposition is implied by Proposition \ref{prop:reduce} and \ref{prop:red}.
 
\begin{proposition}
     \label{prop:reduce2}
     $C \cong_\Aa \reduce(C)$. Hence, $TS(\Aa)$ and $\tred(\Aa)$ are simulation equivalent. 
 \end{proposition}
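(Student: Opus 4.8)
The plan is to establish the two claims separately: first $C \ca \reduce(C)$ by iterating Proposition~\ref{prop:reduce}, and then the simulation equivalence of the two transition systems as a consequence.

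For the first claim I would begin by noting that $\lasim$ is a partial order on the finite set $C$. It relates only states with a common location $q$, and on those it reduces to a coordinatewise comparison of valuations ($\nu \le \nu'$ when $q \in \Qg$, and $\nu' \le \nu$ when $q \in \Ql$); two valuations comparable in both directions are equal, so $\lasim$ is antisymmetric on distinct states. By construction $\reduce(C)$ retains precisely the $\lasim$-minimal states of $C$, hence every deleted state $s'$ has a $\lasim$-minimal witness $m \ne s'$ with $m \lasim s'$, and $m \in \reduce(C)$. I would then enumerate the deleted states $s'_1, \dots, s'_k$ and remove them one at a time, producing $C = C_0, C_1, \dots, C_k = \reduce(C)$; at each step both $s'_i$ and its minimal witness $m_i$ are still present in $C_{i-1}$ (minimal states are never deleted), so Proposition~\ref{prop:red} gives $m_i \la s'_i$ and Proposition~\ref{prop:reduce} gives $C_i = C_{i-1} \setminus \{s'_i\} \ca C_{i-1}$. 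Transitivity of the equivalence $\ca$ then yields $C \ca \reduce(C)$.

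For the second claim I would exhibit a simulation in each direction between $TS(\Aa)$ and $\tred(\Aa)$. The direction $TS(\Aa) \la \tred(\Aa)$ is straightforward via $R_1 = \{(C, \reduce(C)) : C \in \mathcal{C}\}$: the initial pair is $(C_0, C'_0)$, acceptance is preserved because $\reduce$ never changes the set of occurring locations (it only ever drops a state in favour of another with the same location), and the transition clause holds immediately, since $\rightarrow_{\Aa,\mathsf{red}}$ is defined to post the move $\reduce(C) \to_\Aa \reduce(C')$ for every $C \to_\Aa C'$. For the converse $\tred(\Aa) \la TS(\Aa)$ the obvious ``$\reduce(C) \mapsto C$'' relation fails, so I would instead take $R_2 = \{(D, C) : D \la C\}$ and verify it is a simulation: the initial pair $(C'_0, C_0)$ lies in $R_2$ because $C'_0 = \reduce(C_0) \la C_0$ by the first claim, the accepting clause is exactly condition (2.1) of $\la$, and for a move $D \to_{\Aa,\mathsf{red}} D'$ --- which by definition comes from some $E \to_\Aa E'$ with $\reduce(E) = D$ and $\reduce(E') = D'$ --- the first claim gives $E \ca \reduce(E) = D \la C$, hence $E \la C$, so $C$ has a matching move $C \to_\Aa C''$ with $E' \la C''$, whence $D' \ca E' \la C''$ and $(D', C'') \in R_2$.

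I expect the main obstacle to be this backward direction. The transition relation of $\tred(\Aa)$ forgets which underlying configuration generated a reduced move, so no relation that insists on matching equal (or directly $\reduce$-related) configurations can satisfy the transition clause; one is forced to widen the candidate relation to the full simulation preorder $\la$ and to route each reduced move through the already-proved equivalence $C \ca \reduce(C)$. The one routine fact that must be checked to make both directions go through is that $\reduce$ preserves the set of locations present in a configuration, which in turn rests on $\lasim$ relating only states sharing a location.
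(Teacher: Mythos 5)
Your proof is correct and takes essentially the same route as the paper, which gives no detailed argument at all—it merely asserts that the proposition ``is implied by Proposition \ref{prop:reduce} and \ref{prop:red}''—and your derivation (iterated deletion of a single dominated state, justified at each step by Proposition \ref{prop:red} giving $m_i \la s'_i$ and Proposition \ref{prop:reduce} giving step-wise equivalence, followed by explicit simulations between $TS(\Aa)$ and $\tred(\Aa)$) is precisely the intended elaboration of that one-liner. Your treatment of the backward direction, widening the candidate relation to the full preorder $\la$ and routing each reduced move through $C \ca \reduce(C)$, correctly supplies the detail the paper leaves implicit.
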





\begin{remark}
    \label{rem:redC}
Any run $R'$ is a run of $\tred(\Aa)$ iff $R' = \Img(R)$ for some run $R$ of $\Aa$, where $\Img(R)$ is defined as follows.
$R = {C_0 \xrightarrow{(t_0, a_0)}_\Aa C_1 \xrightarrow{(t_1, a_1)}_\Aa C_2 \ldots}$, we define $\Img(R)$ as run 
\\$R' = C_0''\xrightarrow{(t_0, a_0)}~~C_1'' \xrightarrow{(t_1, a_1)}~~C_2'' \ldots$ where $C_0' = C_0'' = \reduce(C_0)$ and $\forall i \ge 0. C''_{i} \xrightarrow{(t_{i}, a_{i})}_\Aa C_i'$ and $C_i'' = \reduce C'_i$.
\end{remark}

\begin{lemma}
\label{lem:main}
Let $\Aa = (Q,\Sigma, \delta, q_0, \Qacc, X, \G)$ be either an 1-$\atau$ or $\vwatau$ . Let $R$ be a run of $\Aa$, and $R' = \Img(R) =C_0'' (t_0, a_0)C_1'' (t_1, a_1)\ldots $, then for all $i \ge 1$, $C_i''$ does not contain states $(q,\nu)$ and $(q, \nu')$ where $\nu \ne \nu'$ for any $q \in Q$. In other words, every location $q \in Q$ appears at most once in any configuration $C_i''$ for any $i \ge 1$. Hence, $|C_i''| \le |Q|$.
\end{lemma}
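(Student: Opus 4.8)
The plan is to analyze what happens to a location $q$ when it appears in a reduced configuration $C_i''$, and to show that the structure of $\vwatau$ (the tree-shaped transition diagram together with the unilateral guards and the strong-reset discipline between $\Qg$ and $\Ql$) forces all copies of $q$ to carry identical valuations after reduction. First I would set up the induction on $i$. For the base case $i=1$, I would examine the one-step successors of the reduced initial configuration and argue that any two states sharing a location must agree on their valuations, using that all clocks start at $\mathbf{0}$ and that by Remark~\ref{sec:vwatanew-vwata}, condition (3) of VWATA, all transitions from the unique predecessor $p$ of $q$ reset \emph{the same} set of clocks. For the inductive step, I would assume $C_i''$ has at most one copy of each location and analyze $C_{i+1}''=\reduce(C_{i+1}')$.

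The key structural fact I would exploit is condition (3) of VWATA: every location $q$ (other than via self-loops) has at most one predecessor $p\ne q$, and all transitions from $p$ to $q$ reset the same clock set $Y$. This means the only ways to obtain a state $(q,\nu)$ in a successor configuration are (a) via the self-loop on $q$, which by condition (2) is a \emph{non-reset} transition, so the valuation is just the incoming valuation advanced by the time delay $t_i$; or (b) via the unique incoming edge from $p$, which resets the fixed set $Y$, so the resulting valuation is determined (on $Y$ it is $0$, off $Y$ it is inherited from the single copy of $p$, advanced by $t_i$). I would then argue that after one step from a configuration with at most one copy per location, each location $q$ in $C_{i+1}'$ receives \emph{at most two} candidate valuations: one from its self-loop (from the unique copy of $q$ in $C_i''$) and one from the edge out of the unique copy of $p$. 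The crucial point is that these two candidate valuations for $q$ are comparable under $\lasim$: since $q\in\Qsim$ for exactly one $\sim\in\{\le,\ge\}$ (Observation~1), the $\reduce$ operation deletes the dominated copy via Proposition~\ref{prop:red} and Proposition~\ref{prop:reduce}, leaving a single copy in $C_{i+1}''$.

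For the strong-reset edges between $\Qg$ and $\Ql$, I would note these reset \emph{all} clocks (strong reset $X.q$), so any copy of $q$ arising from such an edge carries the valuation $\mathbf{0}$, which is trivially $\lasim$-comparable to any other copy, and hence also collapses under $\reduce$. The 1-$\atau$ case is the easy one and follows directly from Observation~2: any two valuations over a single clock are totally ordered by $\le$, so reduction leaves one copy per location without needing the tree structure at all; I would dispatch this case first.

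The hard part will be handling locations $q$ that receive valuations from \emph{both} a self-loop and the incoming tree edge simultaneously, and verifying these two valuations are genuinely $\lasim$-comparable rather than incomparable as in the problematic family $C_m$ of Observation~3. The comparability hinges on the reset set $Y$ being fixed across all $p\to q$ transitions (condition (3)) and on $q$ belonging to a single type $\Qsim$: on coordinates reset by $Y$ the tree-edge valuation is $0$ and the self-loop valuation is $\ge 0$, while on coordinates not reset by $Y$ both valuations trace back through the \emph{same} single predecessor copy advanced by the common delay $t_i$, so the two candidates differ monotonically in the direction dictated by $\sim$. I would make this precise by tracking, along the run, that every coordinate of every valuation of a copy of $q$ is either exactly $0$ (reset at the last visit to the entry edge) or equals $\tau_i$ minus a common reset timestamp, and concluding the two candidates are ordered coordinatewise, hence $\lasim$-comparable, which is exactly what closes the induction and yields $|C_i''|\le|Q|$.
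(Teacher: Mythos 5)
Your overall strategy mirrors the intuition the paper sketches (and your 1-clock and base-case arguments are fine), but the inductive step has a genuine gap at exactly the point you flag as ``the hard part.'' Your induction hypothesis is only that each location appears at most once in $C_i''$, and from this you claim that the two candidate valuations for $q$ in $C_{i+1}'$ --- one via the self-loop on the unique copy $(q,\nu_q)\in C_i''$, one via the tree edge from the unique copy $(p,\nu_p)\in C_i''$ --- are comparable because ``on coordinates not reset by $Y$ both valuations trace back through the same single predecessor copy.'' That traceback claim is false in general. The self-loop copy of $q$ inherited its non-$Y$ coordinates from the copy of $p$ that was alive at the earlier step $j$ when $q$'s current lineage was spawned; but the copy of $p$ alive at step $i$ need not be a self-loop descendant of that copy: $p$ may itself have been regenerated, at any step between $j$ and $i$, from its own parent $p'$ via the edge into $p$, which resets a different set $Y'$ that can intersect $X\setminus Y$ arbitrarily. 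Comparing $\nu_p$ and $\nu_q$ on $X\setminus Y$ then recursively depends on how $p$'s and $p'$'s lineages interleave, and so on up the whole ancestor chain --- information that the hypothesis ``at most one copy per location in $C_i''$'' simply does not carry. So your one-step induction cannot close.

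The paper resolves precisely this by making the induction global rather than local: it works with the run DAG, assigns each path a \emph{type} (the sequence of locations and reset sets traversed, which by the tree structure of $\vwata$ is determined by the endpoints), and proves by induction on path \emph{length} that $\Img(R)$ contains exactly one path of each type. The crucial case (its Case 3) is exactly your problematic scenario: two paths reaching $q$ where one took its last tree-edge transition strictly earlier. There the paper shows, by repeatedly invoking the induction hypothesis at intermediate levels, that the earlier path's tree-edge jumps must \emph{all} precede the later path's corresponding jumps ($m'_{i+1}<m_i$ for every $i$), whence every clock in $X_1\cup\dots\cup X_k$ was reset earlier along $P'$ and so $\nu_n\le\nu'_n$ coordinatewise --- i.e., the self-loop-produced copy always dominates the tree-edge-produced copy (their Lemma~\ref{lem:maint}). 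To repair your proof you would need to add and maintain a comparably strong invariant relating valuations along entire ancestor chains (e.g., ``for every $p\to q$ edge with reset set $Y$ and both locations present, $\nu_p(x)\le\nu_q(x)$ for all $x\notin Y$''), which is essentially reconstructing the paper's path-type argument; without it, the step you call crucial is asserted, not proved.
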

\begin{proof}[Proof (sketch)]
Notice that if $\Aa$ was 1-$\atau$, the above statement is straightforward as no two clock valuations are incomparable in the case of 1-clock. We now show the same for $\Aa$ being a multi-clock $\vwatau$. We just present intuition behind the proof idea. A formal proof is proved using DAG semantics of ATA and can be found in Appendix \ref{app:main}.
We prove this by contradiction. Assumption 1 - Suppose $k$ is the smallest number such that $C_k''$ contains two copies of some location $q \in Q$. Hence, there exists $\nu $ and $\nu'$ such that $\nu'$ is incomparable to $\nu$ and $(q,\nu), (q,\nu')\in C_k''$. Then, the following cases are possible:
\\Case 1 - Both $(q,\nu), (q,\nu')$ appeared from the same location $p$ in $C''_{k-1}$. But, by condition (3) of $\vwata$,
all the transitions from location $p$ to location $q$ reset the same set of clocks. Moreover, by assumption 1, location $p$ appears at most once in $C''_{k-1}$. Let $(p,\nu_p) \in C''_{k-1}$. Then both the clock valuations $\nu$ and $\nu'$ should be identical as they result from the same state $(p, \nu_p)$ resetting the same set of clocks.
\\Case 2 - $(q,\nu), (q,\nu')$ appeared from distinct location $(p, \nu_{k-1})$ and $(p', \nu'_{k-1})$ in $C''_{k-1}$. By condition (3) of $\vwata$ there is at most one location $q' \ne q$ from which there are transitions entering location $q$. Moreover, all these transitions reset the same set of clocks.  Hence, one of $p$ and $p'$ has to be $q$. Wlog $p = q$. It suffices to show that 
whenever such a case occurs, the clock valuation of the state that results from the self-loop (in this case $\nu$) is always greater than or equal to the valuation from the other (in this case $\nu'$) (Statement 1). Hence, $\nu' \le \nu$ which leads to a contradiction. We just present the intuition with an example. Let $\rho = (a_1,  \tau_1), (a_2, \tau_2)$.
Suppose, $(q_0,\mathbf{0})$ is the initial location of the automaton as drawn in Figure \ref{fig:ta}. Let $k=2$. Notice the run in the Figure, $C_1 = \{(q_0, \nu_1), (q_1, \nu_1')\}$ where if $x \in X'$, $\nu_1' (x) = 0 \le \nu_1(x) = \tau_1$. Else, $\nu_1(x) = \nu_1'(x) = \tau_1$. Similarly, $C_2 = \{(q_0, \nu_2), (q_1, \nu'), (q_1, \nu)\}$, where $(q_1, \nu)$ results from the self loop and $(q_1, \nu')$ results from the transition from $q_0$. Hence, if $x = X'$, $\nu (x) = 0 \le \nu'(x) = \tau_2 - \tau_1$. Else, $\nu_1(x) = \nu_1'(x) = \tau_2$. In other words, while reaching both $(q_1, \nu)$ and $(q_1, \nu')$ from the initial configuration, the same set of clock $X'$ was reset. But, in the case of the former, they were reset before the latter. Hence, $\nu$ and $\nu'$ agree on all the clock values not in $X'$ and $\nu \ge \nu'$ for all the clocks in $X$. Applying this argument inductively we can prove Statement 1. We believe it is more intuitive to prove the result using the DAG semantics of ATA. Hence, the full proof can be found in the Appendix \ref{app:main}, where we introduce the semantics too.
\end{proof}

\begin{figure}[h]

\includegraphics[scale=0.25]{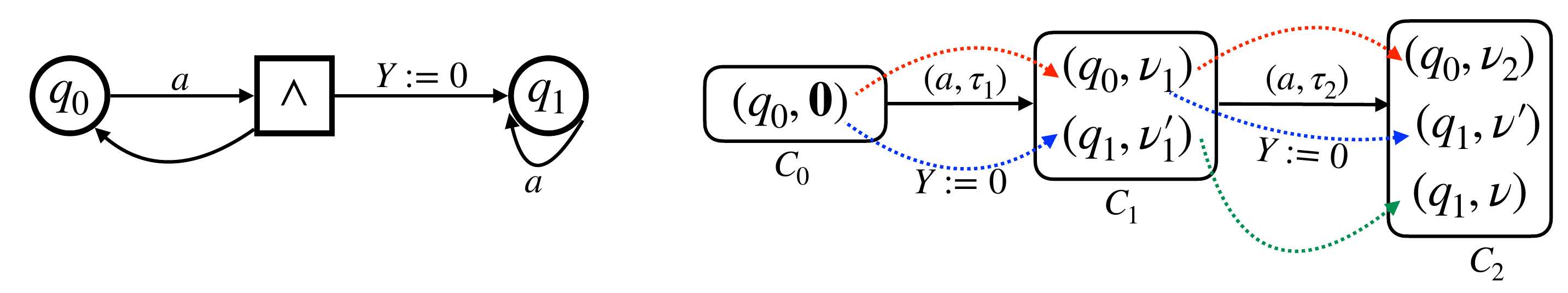}

\caption{The red and green transitions denote those without resets, and the blue ones with resets. Notice the paths from $C_0$ to $C_2$. The Blue-Green and Red-Blue path reset the same set of clocks $Y$. But the former resets the clocks earlier (in the first step) as compared to the latter (in the second step). Hence in the former, clocks in $Y$ get a chance to progress between $C_1$ and $C_2$. Moreover, both the paths should agree on the value of clocks not in $Y$ as they are not reset in both these paths. Hence, $\nu' \leq \nu$.}
        \label{fig:ta}
\end{figure}
\subsection{From $\vwatau$ to Timed Automata}
\label{sec:ata-ta}
In this section, we propose an on-the-fly construction from $\vwatau$ to Timed Automata.  The termination relies on  Lemma \ref{lem:main}. The main idea is to bind the number of active clocks using  Lemma \ref{lem:main}. Given a $\vwatau$ or $1-\atau$, $\Aa = (Q, \Sigma, \delta, q_0, \Qacc, X, \G, \Qg, \Ql)$ we get a timed automaton $\Aa = (\Q, \Sigma, \Delta, q'_0, \Qc, X \times \{0,\ldots, |Q|-1\}, \G)$ and at every step we reduce the size of the location $q \in \Q$ preserving simulation equivalence. Let $V$ be set of all the functions of the form $v: X \mapsto \{0,\ldots, |Q|-1\}$. Let $\lo$ be a set of all the functions from $Q$ to $V\cup \{0\}$. Let $\act$ be a set of all the functions from  $X$ to a sequence (without duplicate) over $\{0,1\ldots |Q-1|\}$. Then $\Q = \lo\times \act$.
Intuitively, we replace the bunch of conjunctive transitions $C$ into a single transition, similar to the subset construction for converting Alternating Finite Automata (AFA) to Non-Deterministic Finite Automata (NFA). But notice that we can have clauses (or conjunctions) of the form $q \wedge X'.q'$. Hence, simple subset construction won't work as we need to spawn multiple copies of clocks in $X'$, wherein one of the elements of the new location $\{q, q'\}$ they are reset while in another they are not. In general, there could be an unbounded number of such clock copies required for a single clock, $x \in X$. But due to  Lemma \ref{lem:main}, if we make sure to compress the states (and hence remove redundant clocks), we need to keep at most $|Q|$ copies for each clock in $X$. In principle, we are constructing an NTA $\A$ whose transition system $TS(\A)$ is simulation equivalent to the LTS $\tred (\Aa)$ (see Appendix \ref{app:atatonta} Proposition \ref{prop:tasim}) and hence to input $\vwatau$ $TS(\Aa)$. Thus, by Proposition \ref{prop: langeq}, $L(\Aa) = L(\A)$.  We present the idea via our running example. 
\begin{figure}[t]
    \includegraphics[scale=0.2]{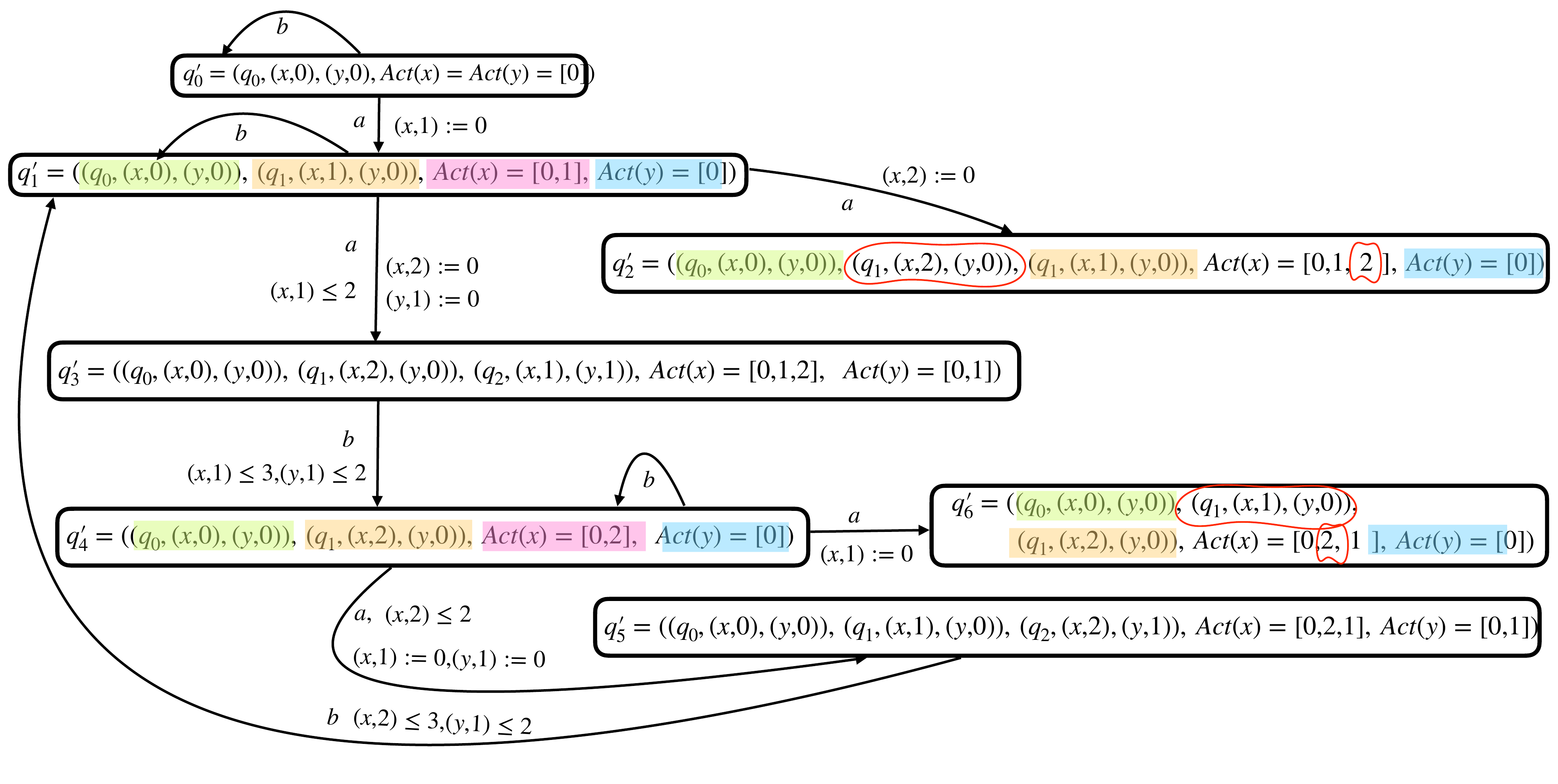}    
    \caption{Steps in the construction of $\A$ corresponding to our running example. 
     With the  color coding in $q'_1,q'_2$, it is easy to see that  $q'_2$ is same as $q'_1$ on removing the circled entities in $q'_2$. Same with $q'_4$ and $q'_6$.}
            \label{fig:my_label}
\end{figure}
\subsubsection{Construction on Running Example}
\label{sec:exc}
Please refer to the $\vwatau$ of our running example Figure \ref{fig:running-example}. We now illustrate the construction on our running example.
We start with location $q_0$, with the $0^{th}$ copies of clock $x$ and $y$. Hence \\$q'_0 = \{(q_0, (x,0) (y,0)), \act(x) = [0], \act(y) = [0]\}$. This corresponds to the configuration $C_0 = \{(q_0, \mathbf{0}_X)\}$ of $\Aa$. In the input automaton, the transitions from $q_0$ on $a$ is defined by $\delta(q_0, a) = q_0 \wedge x.q_1$. 
Hence, we need to spawn a new copy of clock $x$ as it is reset in one transition and not in another. We associate this new copy of clock $x$ with the branch that resets $x$, i.e., this new clock $x$ is associated with location $q_1$. Hence, we  have $\Delta (q'_0, a) = (x,1).q'_1$ where $q'_1 = \{(q_0, (x,0), (y, 0)), (q_1, (x,1) (y,0)), \act(x) = 0 \ge 1, \act (y) = 0\}$. Intuitively, $q'_1$ corresponds to the configurations of the form $C_1 = \{(q_0, \nu^{1}_{0}), (q_1, \nu^{1}_{1})\}$ of $\Aa$, where $\nu^{1}_{0}(x) = \text{value of }(x,0)$, $\nu^{1}_{1}(x) = \text{value of }(x,1)$, and $\nu^{1}_{0}(y) = \nu^{1}_1(y) = \text{value of }(y,0)$. 
We continue with this new location. Hence, we will consider the transitions from both $q_0$ and $q_1$ on $a$. The component $((q_0,(x,0),(y,0)))$ on $a$  again spawns a new
copy of clock $x$ as it resets the clock in one while not resetting on self-loop, hence, getting $\{(q_0,(x,0),(y,0)), (q_1, (x,2) (y,0))\}$ (possibility 1 from $q_0$, the only possibility). Notice that we spawned $(x,2)$ as $(x,1)$ is in use
by $q_1$ already. The component $(q_1, (x,1) (y,0))$ will be computed using the transition function of input automaton, i.e. $\delta (q_1, a) = (y.q_2 \wedge x \le 2) \vee q_1$. Here,  we either stay at $q_1$ with the same set of clock copies as before (possibility 1 
from $q_1$), or  we need a new copy of $y$ while simultaneously checking for the clock copy of $x$ 
corresponding to location $q_1$ (i.e. $(x,1)$) is $\le 2$ (possibility 2 from $q_1$). Combining the possibilities 1 from $q_0$ and $q_1$ we get, $\scriptstyle{{\{(q_0, (x,0), (y,0), (q_1, (x,2) , (y,0))(q_1, (x,1), y,0), 
\act(x) = [0 \ge 1 \ge 2], \act(y) = [0] \}}}$. But $q \in \Ql$. Hence, if we can reach the accepting state from $(q_1, (x,1) , (y,0))$ then we can reach the accepting state from $(q_1, (x,2), 
(y,0)$ too, as $\text{value of }(x,1) \ge \text{value of }(x,2)$ (this fact is also encoded in the $\act(x)$ sequence). Thus, $(q_1, (x,2), (y,0))$ can be removed from the new location without 
affecting simulation equivalence (and hence language equivalence). This corresponds to the removal of redundant states in the construction of the runs of $\tred(\Aa)$ from $T(\Aa)$. Hence, after deletion we get $\scriptstyle{{q'_2 = \{(q_0, (x,0), (y,0), \cancel{(q_1, (x,2) , (y,0))},(q_1, (x,1), y,0), \act(x) = [0 \ge 1 \cancel{\ge 2} ], \act(y) = [0] \} = q'_1}}$.

 \begin{figure}
 \begin{center}

    \includegraphics[scale=0.25]{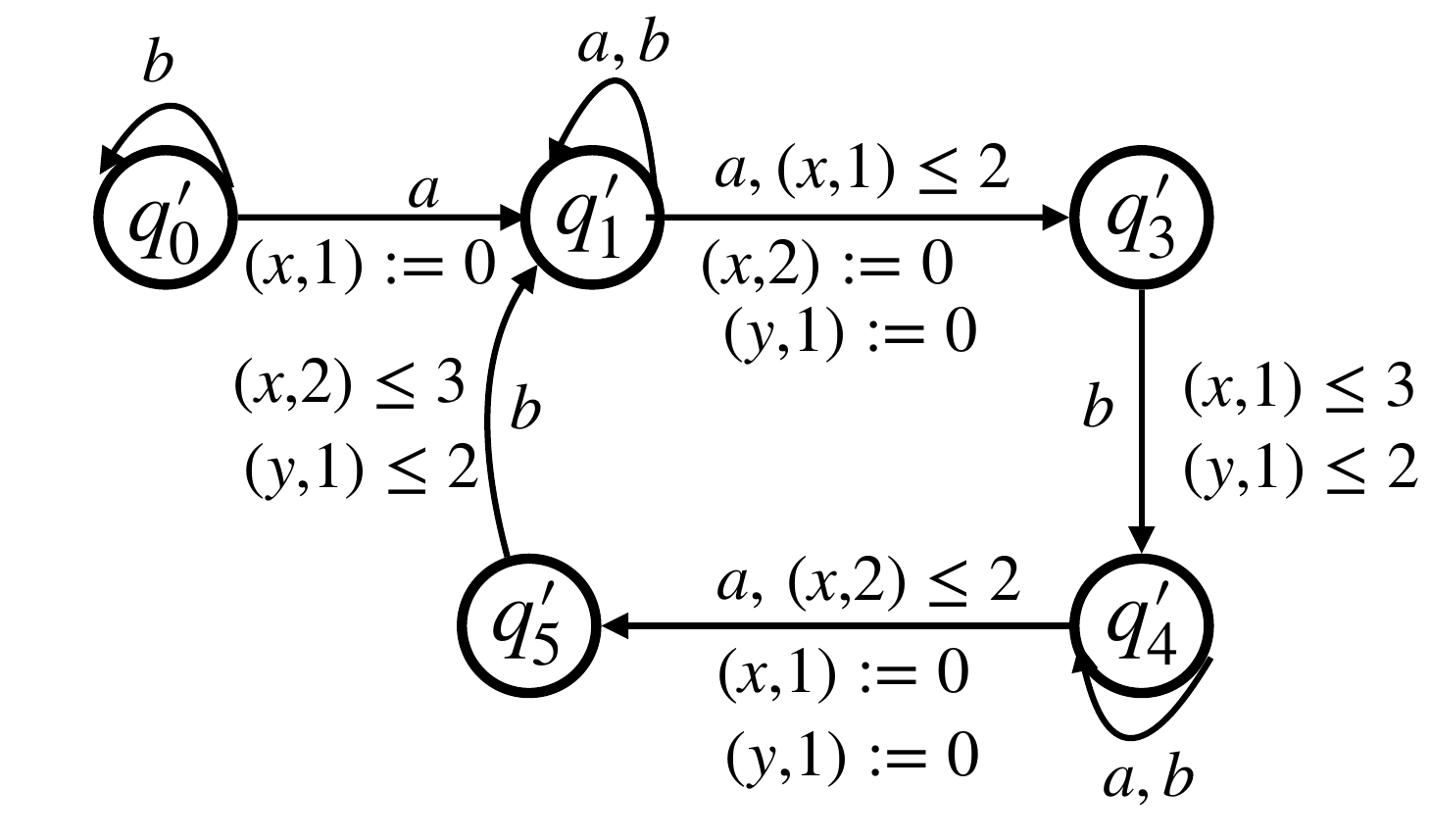}
    \caption{  Final Automata after applying the reductions}
    \label{fig:c}
         
 \end{center}
\end{figure}

Thus, combining result of the transition of $q_0$ on $a$ and possibility 1 from $q_1$ we get $\scriptstyle{{q'_1 =\{q_0, (x,0), (y,0), (q_1, (x,1), y,0),  \act(x) = 0 \ge 1, \act (y) = 0\}}}$.\\ 
Combining results possibility 1 from $q_0$ and possibility 2 from $q_1$, we get \\$\scriptstyle{\{q_0, (x,0), (y,0), (q_1, (x,2), (y,0)),}$ $\scriptstyle{(q_2, (x,1), (y,1))\act(x) = 0 \ge 2 \ge 1, \act (y) = 0 \ge 1.\} = q'_3}$ if $(x,1) \le 2$. Note that each location from $Q$ appears at most once in $q'_3$. Hence, there is no scope of reduction. Combining the above two combination of possibilities, $\Delta(q'_1, a) = q'_1 \vee (y,1).(q'_3 \wedge x \le 2)$. Continuing this we get the resulting NTA $\A$ equivalent to the input formula $\phi$. Notice that we are eliminating the conjunctive transitions using subset like construction and keeping the disjunctions as it is. Hence, after eliminating all the conjunctive transitions the reduced automata contains only disjunctions amongst different locations in the output formulae of the transitions giving an NTA.
Refer to Figures \ref{fig:my_label}, \ref{fig:c}.

%
\subsubsection{Worst Case Complexity}
By construction in \cite{AD94}, the number of states in the region automata of $\A = W \le |\Q| \times (|X| \times |Q|)! \times 2\times (c_{max} + 1)$ where $c_{max}$ is the max constant used in the guards in $\G$ and $|\Q| = |Q| \times (|X|^{|Q|}+1) \times (|Q|!)^{|X|}$. Hence, $W = O(2^{Poly(|A|)})$ implying that the emptiness could be checked in NPSPACE = \pspace. Notice that the state containing the location $(L,\ac)$ will only have to store the region information of active clocks, which, in practice, could be much less than the worst case. Hence, lazily spawning clock copies may result in NTA with much less number of clocks than the worst case (i.e. $|X| \times |Q|$).


\section{Expressiveness of $\tptlu$}
\label{sec:exp}
We now compare the expressive power of 1-$\tptlu$ with respect to  that of $\mitl$. 

\subsection{Metric Temporal Logic(MTL)}
\label{sec:mE}
$\mtl$ is a real-time extension of $\ltl$ where the $\until$ modality is guarded with an interval. 
Syntax of $\mtl$ is defined as follows. $\varphi::=a ~|\top~|\varphi \wedge \varphi~|~\neg \varphi~|
~\varphi \until_I \varphi,$
\\where $a \in \Sigma$ and  $I \in \intinterval$.    
For a timed word $\rho = (\sigma_1, \tau_1 ) (\sigma_2, \tau_2) \ldots (\sigma_n, \tau_n) \in T\Sigma^*$, a position 
$i \in dom(\rho)$, an $\mtl$ formula $\varphi$, the satisfaction of $\varphi$ at a position $i$ 
of $\rho$, denoted $\rho, i \models \varphi$, is defined as follows. We discuss only the semantics of temporal modalities. Boolean operators mean as usual.
$\rho,i\ \models\ \varphi_{1} \until_{I} \varphi_{2} \ \text{iff} \  \exists j > i. 
\rho,j\ \models\ \varphi_{2}, \tau_{j} - \tau_{i} \in I,\ \text{and} \  \forall  i< k <j. \rho,k\ \models\ \varphi_{1}$.
As usual, $\fut_I(\phi) = \top \until_I \phi$, $\sbf(\phi) = \neg\fut_I \neg \phi$ , $\nx_I \phi = \bot \until_I \phi$.
The language of an $\mtl$ formula $\varphi$ is defined as $L(\varphi) = \{\rho | \rho, 1 \models \varphi\}$.
The subclass of $\mtl$ where the intervals $I$ in the until
modalities are restricted to be {\bf non-punctual}  is known as Metric Interval Temporal Logic ($\mitl$) . $\mitlu$ \cite{AFH96}\cite{AlurFH91}\cite{icalp-raskin} is the subclass of $\mtl$ where intervals are restricted in $\intintervalu$. Satisfiability Checking for $\mitl$ ($\mitlu$) is \expspace-complete (\pspace-complete) \cite{AlurH92} \cite{AFH96} \cite{AlurFH91}. $\mitl$ is strictly more expressive than $\mitlu$ in pointwise semantics \cite{H98}. 
\begin{remark}
\label{rem:mitl-tptl}
Any $\mtl$ formula can be translated to an equivalent 1-$\tptl$ (closed) formula using the following equivalence recursively. 
$\varphi_1 \until_I \varphi_2 \equiv x.(\varphi_1 \until \varphi_2 \wedge x \in I)$. 
\end{remark}
\subsection{Expressiveness of $\tptlu$}
\begin{theorem} 1-$\tptlu$ is strictly more expressive than $\mitl$. 
\end{theorem}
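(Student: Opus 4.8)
The plan is to establish the two halves of \emph{strictly more expressive} separately: first that every $\mitl$ formula has an equivalent $1$-$\tptlu$ formula, and then that some $1$-$\tptlu$ formula has no $\mitl$ equivalent.

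For the inclusion $\mitl \subseteq 1\text{-}\tptlu$, I would translate $\mitl$ recursively using the equivalence of Remark~\ref{rem:mitl-tptl}, rewriting each $\varphi_1 \until_I \varphi_2$ as $x.(\varphi_1 \until (\varphi_2 \wedge x \in I))$ after translating $\varphi_1,\varphi_2$. If $I$ is already unilateral (right- or left-sided) the guard is a $\tptlu$ guard and nothing more is needed. The only real work is the bounded, bilateral case $I = \langle l,u\rangle$ with $0 < l < u < \infty$: here I would split the guard as $(x \in \langle l,\infty)) \wedge (x \in \langle 0,u\rangle)$, i.e.\ as the conjunction of one left-sided and one right-sided constraint on the \emph{same} frozen clock, and check that the translated formula meets the $\tptlu$ restriction because the clock is bound by the enclosing freeze quantifier. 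This step is exactly where the freeze quantifier buys expressive power: a single registered timestamp simultaneously carries both one-sided bounds of a bounded interval, which is precisely what $\mitlu$ cannot do (recall $\mitl \supsetneq \mitlu$ \cite{H98}).

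For strictness I would exhibit the witness $\Psi = \fut\, x.(a \wedge \fut(b \wedge \fut(c \wedge x \le 1)))$, the one-clock, unilateral analogue of the motivating formula of the introduction. Every guard in $\Psi$ is the right-sided interval $\langle 0,1]$, so every subformula is of type $\le$ and $\Psi$ is a bona fide $1$-$\tptlu$ formula. Semantically $\Psi$ asserts the existence of an $a$, a later $b$, and a still later $c$ whose distance from that \emph{same} $a$ is at most $1$. I would then prove $\Psi \notin \mitl$; since $\mitl \subseteq \mtl$, it suffices to show $\Psi \notin \mtl$, for which I would adapt the inexpressibility argument used for the motivating formula in \cite{simoni, pandyasimoni}.

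The hard part is this last inexpressibility proof. The plan is to build two families of timed words, models and non-models of $\Psi$, that differ only in whether the $c$ lies within distance $1$ of the anchoring $a$, while the intervening $b$ is placed so as to slide across the relevant threshold; and then to show, via an Ehrenfeucht--Fra\"{i}ss\'{e}/bisimulation game tailored to $\mtl$ (whose bounded-interval modalities measure time afresh from each position and therefore cannot retain the $a$-to-$c$ relation across the unboundedly located $b$), that no $\mtl$ formula of any fixed modal depth and bounded constant resolution distinguishes the two families. Designing the families so that every $\mtl$ subformula is simultaneously fooled -- in particular controlling the local neighbourhoods seen by each bounded until -- is the main obstacle, and is where the locality of $\mtl$ relative to the global reach of the freeze quantifier must be exploited.
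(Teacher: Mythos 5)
Your inclusion direction contains a fatal error, and it sits exactly where the real work of the theorem lies. For a bounded bilateral interval $I=\langle l,u\rangle$ with $0<l<u<\infty$ you propose to translate $\varphi_1\until_I\varphi_2$ as $x.(\varphi_1\until(\varphi_2\wedge x\in\langle l,\infty)\wedge x\in\langle 0,u\rangle))$ and to argue that this lies in $\tptlu$ ``because the clock is bound by the enclosing freeze quantifier.'' That is not what the $\tptlu$ restriction says. The definition in Section~\ref{sec:tptlu} requires \emph{every subformula} to be of type $\le$ or of type $\ge$, where the type is determined by the \emph{open} constraints occurring in that subformula. In your translation the subformula $\varphi_2\wedge x\in\langle l,\infty)\wedge x\in\langle 0,u\rangle$ is open and mixes a left-sided and a right-sided open constraint on $x$, so it is of neither type; this is exactly the pattern the paper's own non-example $x.y.(a \until (b \wedge x \le 3 \wedge y \ge 5))$ rules out, and the violation arises just the same with a single clock. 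A sanity check shows the move cannot be repaired: if conjoining both one-sided bounds on the same frozen clock were allowed, one could write $x\in[1,\infty)\wedge x\in[0,1]$, i.e.\ a punctual constraint, and the logic would inherit the undecidability of full 1-$\tptl$ over infinite words, contradicting the paper's main \pspace theorem. The sentence in which you say the freeze quantifier lets ``a single registered timestamp simultaneously carry both one-sided bounds'' describes precisely what $\tptlu$ forbids.

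Because of this, the paper's proof is forced down a much longer road, none of which appears in your proposal: it rewrites $\varphi'\until_{[l,u)}\varphi$ as $[\sbf_{[0,l)}\{\varphi'\wedge(\varphi'\until\varphi)\}]\wedge[\fut_{[l,l+1)}\varphi\vee\cdots\vee\fut_{[u-1,u)}\varphi]$, reducing everything to the single modality $\fut_{[l,l+1)}$, and then expresses $\fut_{[l,l+1)}\varphi$ in 1-$\tptlu$ by a three-case analysis: either no $\varphi$ occurs before distance $l$ (handled by $\neg\fut_{[0,l)}\varphi\wedge\fut_{[0,l+1)}\varphi$), or some $\varphi$-point lies in $[l-1,l)$ with another $\varphi$ within time $1$ of it, or, in the hard case, one counts the occurrences of $\phiapproach=\varphi\wedge\sbf_{[0,1)}\neg\varphi$ in $[0,l)$ (there are at most $l$ of them, since consecutive such points are at least a unit apart) using nested-until formulas whose constraints are all unilateral. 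Your strictness half is essentially fine in spirit: the paper uses the counting witness $x.\fut(b\wedge\fut(b\wedge x\le 1))$ and simply cites known results that it is not expressible in $\mtl$, whereas you plan to re-derive inexpressibility of a similar witness by an Ehrenfeucht--Fra\"{i}ss\'{e} argument, which is far more work than needed. But as it stands the proposal does not establish the containment of $\mitl$ in 1-$\tptlu$, which is the heart of the theorem.
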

\begin{proof}
 Both $\mitl$ and 1-$\tptlu$ are closed under all boolean operations. Hence, we just need to show that any formula of the form  $\varphi' \until_{I} \varphi$ 
 is expressible in 1-$\tptlu$. 
Notice that, any $\mitl$ formula  $\varphi' \until_{[l,u)} \varphi \equiv [\sbf_{[0,l)} \{\varphi' \wedge (\varphi' \until \varphi)\}] \wedge [ \fut_{[l,l+1)} \varphi \vee \fut_{[l+1,l+2)} \varphi \ldots \fut_{[u-1,u)} \varphi]$. (similar reduction applies for other kinds of intervals).  $\sbf_{[0,l)} (\varphi' \wedge (\varphi' \until \varphi))$ is already in $\mitlu$ (and hence in 1-$\tptlu$ by remark \ref{rem:mitl-tptl}). 
Hence, it suffices to encode modalities of the form $\fut_{[l,l+1)}$ using 1-$\tptlu$ formula.    
Let $\rhos = (a_1, \tau_1), (a_2, \tau_2) \ldots$ be any timed word. Let $i \in dom(\rhos)$ be any point. $\rhos, i \models \fut_{[l,l+1)}(\varphi)$ iff there exists a point $i' > i$ such that $\tau_{i'} - \tau_i \in  [l, l+1)$ and $\rhos, i' \models \varphi$. $\rhos$ has a point $i'$ within $[l, l+1)$ interval from $i$ where $\varphi$ holds iff there exist earliest such point $j$ ($j \le i'$) within $[l, l+1)$ from $i$ where $\varphi$ holds iff there is a point $j' > i$ such that $\tau_{j'} - \tau_i \ge l$ (i.e. $\rho, i \models \phi_0  = \fut_{[l, \infty)} \varphi$), and let $j$ be the first point such that $\tau_{j} - \tau_{i} \ge l$, and $\rhos, j \models \varphi$. Such a point exists due to $\phi_0$. Then:
\begin{itemize}
    \item Case 1: Either there is no point strictly between $i$ and $j$ where $\varphi$ holds. Then occurrence of $j$ within $l+1$ can be expressed using formula, $\phi_1 = \neg \fut_{[0, l)} \varphi \wedge \fut_{[0,l+1)} \varphi$.
    \item Case 2: Or there exists a point $k$ such that $\tau_j - \tau_k < 1$, $\tau_k - \tau_i \in [l-1, l)$, and $\rhos, k \models \varphi$. Equivalently, $i$ satisfies $\phi_2 = \sbf_{[l-1, l)}(\fut_{[0,1]} (\varphi))$,
\item Case 3: Or there exists a point $k$ with $i < k <j$ such that $\tau_j - \tau_k \ge 1$, $\rhos, k \models \varphi$, and $\forall k < k' < j. \rho, k' \not\models \varphi$. \\
(1) Such a point $k$ satisfies 
    $\phiapproach = \varphi \wedge \sbf_{[0,1)}(\neg \varphi)$.
    Indeed a key property is that $k$, the last point in $[0,l)$ satisfying $\phi$, satisfies $\phiapproach$. By the definition of $k$, i.e., there are no occurrences of $\varphi$ after $k$ in $[0,l)$. \\ 
 (2) Notice that any two point $k_1$ and $k_2$ satisfying $\phiapproach$ are at least a unit time apart. Hence, there could be at most $l$ points satisfying $\phiapproach$ within  $[0, l)$. Then, the following 1-$\tptlu$ formula
    $\cntphi(n)$ with parameter $n$ states that there are exactly $n$ points, $1 \leq n \leq l$ within $[0, l)$ of point $i$ where $\phiapproach$ holds. Here, $\cntphi(n)~ = ~\phi_{\ge n} \wedge \neg \phi_{\ge n+1}$, where $\phi_{\ge n } =  x.((\neg \phiapproach) \until (\phiapproach \wedge $ \\
    \quad $ ((\neg \phiapproach )\until (\phiapproach \wedge \underbrace{\ldots}_{n-3} \wedge ((\neg \phiapproach) \until (\phiapproach \wedge x < l)\ldots)))$. \\
    Observe  that for a given timed word and interval $[0,l]$ from $i$, there is a unique $n$ satisfying this formula $\cntphi(n)$. \\
   (3)  Using this $n$, the formula 
       $\gamma(n,\varphi)= x.(\neg \phiapproach \until (\phiapproach \wedge \neg \phiapproach \until (\phiapproach \wedge \underbrace{\ldots}_{n-3} \wedge ((\neg \phiapproach) \until (\phiapproach \wedge \sbf(x \leq l \lor \neg \varphi) \wedge \fut(\varphi \wedge x < l+1))\ldots))$
          holds if after $n$ occurrences of $\phiapproach$ (which gives point $k$), the next occurrence of $\varphi$  occurs before time $l+1$.  \\
Hence, case 3 is characterized by the formula $\phi_3 =  \bigvee \limits_{n=1}^{l}  \cntphi(n) \wedge \gamma(n,\varphi)$.
\oomit{
    \item Case 3: Or there exists a point $k$ such that for some $0\le m <l$,  $\tau_j - \tau_k \ge 1$, $\tau_k - \tau_i \in [l-m-1, l-m)$, $\rhos, k \models \varphi$, and $\forall k < k' < j. \rho, k' \not\models \varphi$. Such a point $k$ satisfies $\phiapproach = \varphi \wedge ((\neg \varphi) \until_{[1,\infty)} \varphi)$; indeed it the last point in $[0,l)$ satisfying $\phoapproach$. 
    
    Notice that any two point $k_1$ and $k_2$ satisfying $\phiapproach$ are at least a unit time apart. Hence, there could be at most $l$ points satisfying $\phiapproach$ within  $[0, l)$. 
    For a given $n, ~1 \leq n \leq l$, the formula 
    
    This case can be characterized by the following formula 
    $\phi_3 = \bigvee \limits_{m=0}^{l-1} \gamma_{\phiapproach,m} \wedge \bigvee \limits_{n=1}^{l}\phi_{=n} \wedge \phi_{=n,\varphi}$, where $\gamma_{\phiapproach,m}$ specifies that the last time $\phiapproach$ (and hence $\varphi$) is true within $[0,l)$ of $i$ is within the unit interval $[l-m-1, l-m)$ of $i$. Here, $\gamma_{\phiapproach, m} = \fut_{[l-m-1, l-m)}\phiapproach \wedge \neg \fut_{[l-m, l)} \varphi$ for $0 <m <l$, and $ \gamma_{\phiapproach, 0} =  \fut_{[l-1, l)}\phiapproach$. Also,
    $\phi_{=n}$ specifying that there are exactly $n$ points within $[0, l)$ of point $i$ where $\phiapproach$ holds. Hence, $\phi_{=n} = \phi_{\ge n} \wedge \neg \phi_{\ge n+1}$, where $\phi_{\ge n } = x.((\neg \phiapproach) \until (\phiapproach \wedge ((\neg \phiapproach )\until (\phiapproach \wedge \underbrace{\ldots}_{n-3} \wedge ((\neg \phiapproach) \until (\phiapproach \wedge x < l)\ldots)))$ specifies that there are at least $n$ points satisfying $\phiapproach$ within $[0,l)$ from $i$.
    Similarly,  $\phi_{=n, \varphi}= x.(\neg \phiapproach \until (\phiapproach \wedge \neg \phiapproach \until (\phiapproach \wedge \underbrace{\ldots}_{n-3} \wedge ((\neg \phiapproach) \until (\phiapproach \wedge \fut(\varphi \wedge x < l+1))\ldots))$ specifies that from the $n^{th}$ point satisfying $\phiapproach$, which should be the last point where $\phi$ holds within $[0,l)$ of $i$ (and hence is the required point $k$ in $[l-m-1, l-m)$ from the point $i$), there is a point in the future where $\varphi$ holds ( hence this point is beyond $l$ units from $i$) and this point is within $l+1$ time from $i$ (asserted freeze quantifier and timing constraint on $x$). Hence, the required formula $\psi = \phi_0 \wedge (\phi_1 \vee \phi_2  \vee \phi_3)$.
}  
    \end{itemize}
     Hence, the required formula $\psi = \phi_0 \wedge (\phi_1 \vee \phi_2  \vee \phi_3)$.
 For strict containment of MITL, consider the formula $\beta = x.\fut(b \wedge \fut(b \wedge x \le 1))$. This specifies, there exist at least two points within the next unit interval where $b$ holds. \cite{rabinovichY} \cite{count} \cite{khushraj-thesis} show that this formula is not expressible in $\mtl$ (and thus also in the subclass $\mitl$). 
\end{proof}



\section{Discussion and Conclusion}
\label{sec:discuss}
Ferr{\`{e}}re \cite{F18} proposed an extension of LTL with Metric Interval Regular Expressions called Metric Interval Dynamic Logic (MIDL) and showed it to be more expressive than $\emitl$ of \cite{Wilke}. We claim that our proof of \pspace completeness for 1-$\atau$ emptiness implies the same for MIDL$_{0, \infty}$ satisfiability strictly generalizing the results and techniques of \cite{H19} which proved the same for $\emitl_{0,\infty}$. This resolves one of the `future directions' of \cite{H19}. Authors in \cite{KKMP21} generalized the notion of non-punctuality to non-adjacency for 1-$\tptl$. We remark that unfortunately, this notion doesn't help in making 2-TPTL decidable. 
Notice that $\varphi =  \sbf x.\{\neg \phi \vee \fut y.(\top \wedge x \in [1,2] \wedge \fut(\phi_1 \wedge x \in [1,2] \wedge y \in [1,2]))\} \equiv \sbf[\phi \rightarrow \fut_{[1,1]}(\fut_{[1,1]}\phi_1)]$. Because, for any point $i$ where $\varphi$ holds there is a point $j$ in the future such that $\tau_{j} - \tau_i \in [1,2]$, and from that point $j$ there is a point $k$ in the future where $\phi_2$ holds such that $\tau_{k} - \tau_{j} \in [1,2]$ and  $\tau_{k} - \tau_i \in [1,2]$. Solving the inequalities we get, $\tau_{j} - \tau_i  =  1$ and $\tau_{k} - \tau_i = 2$. Hence, $\varphi$ can express some restricted form of punctual timing properties which leads to the undecidability of satisfiability using encoding similar to \cite{OuaknineW06}. 
$\mitlu$ was extended with Counting (TLC) and Pnueli (TLP) modalities by \cite{rabinovichY} to increase the expressiveness, meanwhile maintaining the decidability in EXPSPACE and PSPACE, respectively. TLP and TLC have the same expressive power. This is because expressing arbitrary non-punctual metric interval constraints using unilateral intervals is a non-trivial phenomenon in pointwise semantics (see \cite{H19}).
While these logics were strictly more expressive than $\mitl$ in continuous semantics in pointwise they are incomparable.  Moreover, TLP and TLC properties are trivially expressible in $\tptlu$ (see construction in Appendix K.1 in \cite{count-full}), making our logic strictly more expressive than these. 
As one of our future work, we would like to show that TLCI and TLPI (extensions of TLP and TLC using arbitrary non-punctual intervals) which are decidable in EXPSACE are expressible in $\tptlu$.
Finally, we leave open (i)the extension of this work with Past modalities, (ii)FOL-like characterizations of $\tptlu$, and (iii) whether adding multiple clocks in $\tptlu$ improves expressiveness.

\bibliographystyle{plain}
\bibliography{papers}

\newpage
\appendix
\section{Reduction from TPTL to $\vwata$}
\label{app:tptltoata}
\subsection{Pre-processing}
We call a formula $\phi$ as temporal formula iff the topmost operator is a temporal modality. Similarly, we call a subformula $\phi'$ of $\phi$ as a temporal subformula. 

\noindent{\bf{Pushed Formulae}}. Given any TPTL formula $\phi$, we first distribute all the freeze quantifiers, such that, only a single temporal formula appears within the scope of a freeze quantifier. This can be done using the identity recursively: $x.(\phi \wedge \phi') = x .\phi \wedge x.\phi'$, $y.(\phi \vee \phi') = y.\phi \vee y. \phi', x.(Prp) = Prp, x.(x \in I)= \top$ if $0 \in I$, else $\bot$, where $Prp$ is a propositional formula over $\Sigma$ and clock constraints not containing the variable $x$. We call such a formula where quantifiers are pushed inside as a \emph{pushed formula}. For example, $\phi = x. (a \until y.(b \wedge y \in (2, 3) \vee x \in (1,2)) \wedge a \wedge x \in [0, 1)) = x.(a \until y.(b \wedge y \in (2, 3) \vee x \in (1,2)))  \wedge a \wedge \top =  x.(a \until(b \wedge \bot \vee x \in (1,2)))  \wedge a \wedge \top = x.(a \until(x \in (1,2)))  \wedge a \wedge \top$.

\smallskip 

\noindent{\bf{Strictly Closed Formulae}}. Similarly, given any closed formula $\phi$, $X.(\phi)$ is equivalent to $\phi$, by semantics of $\tptl$ for any $X$. Given any formula $\phi$ using clocks in $|X|$, change every closed subformula $\phi'$ of $\phi$ to $X.\phi'$. 
$X.\phi'$ is called a \emph{strictly closed} formula. For example, $\phi = x.y.(a \wedge x.\fut(a \wedge x \in (0,1)))$. Notice that subformula $x.\fut(a \wedge x \in (0,1))$ is closed. Hence, changing it to $x.y.\fut(a \wedge x \in (0,1))$ would not affect the language of the formula, $\phi$. Hence, the new formula we get is $x.y.(a \wedge x.y.\fut(a \wedge x \in (0,1)))$. We can now assume that the given formula $\varphi$ for which we want to construct $\vwata$ is a pushed, and  strictly closed formula. 

\smallskip

Let $A \in 2^\Sigma$ and $\phi$ be any pushed strictly closed formula. We first define $\formula(\phi, A)$ inductively as follows. $\formula(\phi_1 \wedge \phi_2, A ) = \formula(\phi_1) \wedge (\phi_2)$, $\formula(\phi_1 \vee \phi_2, A) = \formula(\phi_1) \vee (\phi_2)$, $\formula(X.\phi_1, A) = X. \formula(\phi_1, a)$, $\formula (a, x \in I) = x \in I$, $\formula(a, \phi)  = \phi)$ if $\phi$ is a temporal formula, 
$\formula(a, A) = \top$ if $a \in A$, else $\formula(a, A) = \bot$.   

Let $\cl(\varphi)$ contain all the temporal subformulae of $\varphi$.
Given any $\tptl$ formula we now define the ATA $\Aa_\varphi = (Q, 2^\Sigma, \delta, q_0, \Qacc, X, \G)$, where $Q = \cl(\varphi) \cup \varphi^0$ (we call $\varphi^0$ the initial copy of the given formula $\varphi$), $q_0 = \varphi^0$, $\Qacc$ all the formulae of the form $\sbf(...)$, i.e. formulae whose topmost operator is a $\sbf$ operator, $X$ is set of all the clocks appearing in $\varphi$, $\G$ is the set of guards appearing in $\varphi$. We define $\delta$ as follows: for any $A \in 2^{\Sigma}$, $\delta (\varphi^0, A)) = \varphi$, $\delta(\phi_1 \until \phi_2) = \formula(\phi_2,A) \vee (\formula(\phi_1,A) \wedge \phi_1 \until \phi_2)$, $\delta(\sbf(\phi), A) = \formula(\phi, A) \wedge \sbf(\phi)$. The equivalence is due to \cite{Ouaknine05}\cite{OWH}. The proof goes through mutatis mutandis for this case too.

We have the following points. 
\begin{itemize}
\item The constructed $\Aa_\varphi$ is indeed an ATA satisfying condition (1) and (2) of the $\vwata$. We call this $\vwata$(1-2).
\item As the given $\varphi$ is pushed and strictly closed, every closed subformula $X'.\psi$ would be such that $\psi \in \cl(\varphi)$ and $X' = X$. Hence, all the transitions entering the location corresponding to $\psi$ in $\Aa_{\varphi}$ are strong reset transitions. This condition will make sure that if $\varphi \in \tptlu$ then $\Aa_{\varphi}$ is indeed $\atau$.
\item The number of transitions entering a subformula $\psi$ of $\varphi$ is exactly the number of times (say $i$) subformula $\psi$ occurs in $\varphi$. Notice that, due to the structure of the automaton, $i$ is also the number of paths from initial state $\varphi^0$ to $\psi$ in $\Aa_\varphi$. Hence, we unfold the automaton, by making $i_q$ copies for every location $q$ where $i_q$ is the number of transitions entering $q$. This unfolding is guided by the tree representation of the given $\varphi$. Hence, the final automaton will mimic the structure of the tree representation of the given formula $\varphi$. After this transformation, we make sure that between any two states there is a unique path (modulo self-loops). This results in the equivalent automaton $\Aa$ which is a $\vwata$ (i.e. it satisfies all the conditions (1-3) of $\vwata$ mentioned in section \ref{sec:ata}). Notice that the measure of the size that we consider in this paper is proportional to the tree representation and not the DAG. This gap in succinctness is visible while unfolding the automaton to satisfy condition 3. 
\item Hence if the given formula $\varphi$ is a $\tptlu$ formula, then $\Aa$ is $\vwatau$ as per our definition.
\end{itemize}

\section{Proofs in section \ref{sec:vwatauemptiness}}
\label{app:vwatau}

\subsection{Proof of Remark \ref{rem:subsetsimulation}}

\label{app:subsetsimulation}

\textbf{Statement} $C \supseteq C'$ implies $C \la C'$. Hence, for any timed word $\rho$, if $\rho, i \models \Aa, C$ then $\rho, i \models \Aa, C'$. 

\begin{proof}
    We show that $\supseteq$ is a simulation relation. That is, $C’ \subseteq C$ (or $C \supseteq C’$) implies $C \la C’$. We just need to prove that $\supseteq$ indeed satisfies condition 2 (condition 1 is trivially satisfied) of the simulation relation. Notice that the set of accepting configurations is downward closed.
That is, for any accepting configuration $D$, all of its subset $D’$ is accepting (by definition of accepting configuration). Hence,  $\supseteq$ satisfies condition 2.1.

Wlog, let $C = \{s_1, \ldots s_n\}$ and $C’  = \{s_1, \ldots s_m\}$ for some $m \le n$.
For any $(t,a) \in \R \times \Sigma$, by definition, any $C \xrightarrow{(t,a)} D$ iff $D = D_1 \cup D_2 \ldots D_n$ where $D_i$ is some successor of $s_i$ on $(t,a)$. Similarly, $C' \xrightarrow{(t,a)} D' iff D’ = D_1 \cup D_2 \ldots D_m$ where $D_i$ is some successor of $s_i$ on $(t,a)$. Hence, For any $(t,a) \in \R \times \Sigma$, for any $C \xrightarrow{(t,a)} D$ there exists a $C' \xrightarrow{(t,a)} D'$ such that $D \supseteq D’$ (for any state $s_i \in C \cap C’$ , just choose the same successor of $s_i$ which was used in $D$ to construct $D’$). Hence, for any successor $C$, we have a successor of $C’$ which is a subset of that of $C$. Hence, $\supseteq$ satisfies condition 2.2.

Intuitively, these extra states $s_{m+1}\ldots s_n$ will generate extra states in the successor configurations, which will make reaching an accepting configuration harder from $C$ (and its successors) as compared to $C’$ (and its successors),
\end{proof}

\subsection{Proof of Remark \ref{rem:replacesimulation}}
\label{app:replacesimulation}

\textbf{Statement} If $D' \subseteq C$ and $D \la D'$,  then $(C\setminus D') \cup D \la C$. In other words, we can replace the states in $D'$ with that in $D$ in any configuration $C$, and get a configuration that is simulated by $C$. Hence, $L(\Aa, (C\setminus D') \cup D) \subseteq L(\Aa, C)$. 

\begin{proof}
Consider a relation $\r$ amongst configurations of $A$ such that $E~{\r}~E'$ iff $E' = E_1 \cup E_2$ such that $E \la E_1$ and $E \la E_2$. We first show that $\r$ is a simulation relation. We need to show that $\r$ satisfies condition (2.2) as it trivially satisfies condition (2.1). Let $E~\r~E'$ and $E' = E_1 \cup E_2$ such that $E \la E_1$ and $E \la E_2$.  Let $E_2' = E_2 \setminus E_1$. Then $E \la E_2 \la E_2'$ (by remark \ref{rem:subsetsimulation}). By definition of simulation preorder, for any $E \xrightarrow{(t,a)} F$ there exists (1)$ E_1  \xrightarrow{(t,a)} F_1$ and (2)$E_2' \xrightarrow{(t,a)} F_2$ such that $F \la F_1$ and $F \la F_2$. Let $F' = F_1 \cup F_2$. By semantics ATA, (1) and (2) imply $E' \xrightarrow{(t,a)} F'$.   
Moreover, by definition of $\r$, $F~\r~F'$ if $E~\r~E'$. Hence, for any $E~\r~E'$, and $E \xrightarrow{(t,a)} F$ there exists $E' \xrightarrow{((t,a)} F'$ such that $F~\r~F'$. Hence, $\r$ satisfies condition 2.2.

Given $D' \la D$. Hence, by remark \ref{rem:subsetsimulation}, $((C\setminus D') \cup D) \la (C \setminus D')$ and $(C \setminus D') \cup D \la D \la D'$. Hence, $(C \setminus D') \cup D ~\r ~((C \setminus D') \cup D')$. Hence, $((C \setminus D') \cup D) \la C$.
\end{proof}

\subsection{Proof of Proposition \ref{prop:red}}
\label{app:small}
\textbf{Statement} - $s \lasim s'$ implies $s \la s'$.

\begin{proof}
\todo{Check the proof}
We extend the relation $\preceq$ between configurations of $\Aa$ as follows. We say that $C \lasim C'$ iff $\forall (q,\nu') \in C' \exists (q,\nu) \in C$ such that if $q \in \Qsim$ then $\nu' \sim \nu$.
Let $C \preceq C'$. 
Let $f: C' \mapsto 2^C\setminus \emptyset$, such that for any $(q,\nu') \in C'$ and $q \in \Qsim$, $f((q, \nu')) = \{(q,\nu) | (q, \nu) \in C \wedge \nu' \sim \nu\}$.
Let $s' = (q,\nu')$ be any state in $C'$ and $s = (q, \nu)$ be any state in $f((q,\nu'))$. 

To prove the above statement, it suffices to show that $\preceq$ is a simulation relation.

Key Observation: The transition formula for finding the successor of $s'$ and $s$ on any $(t,a) \in \R_{\ge 0} \times \Sigma$ is the same, i.e., $\delta(q, a)$.
By semantics of $\atau$, all the timing constraints appearing in $\delta(q, a)$ that are satisfied by valuation $\nu + t$ will also be satisfied by $\nu'+t$. Hence, every non-deterministic choice that can be taken from $s$ can be taken from $s'$ on $t,a$. More precisely, for every $D = \{(q_1, \nu_1), (q_2, \nu_2), \ldots, (q_j, \nu_j)\} \in \succ^{st}(s, t, a)$ there exists $D' = \{(q_1, \nu'_1), (q_2, \nu'_2), \ldots, (q_j, \nu'_j)\} \in \succ^{st}(s',t,a)$ such that for $1\le i \le j$,  (1)  $\nu_i$ and $\nu'_i$ are constructed from $\nu+t$ and $\nu'+t$ by resetting the same set of clock variables, $X_i \subseteq X$. Hence, $\nu'_i \sim \nu_i$. (2) If $q \in \Qsim$ then either $q_i \in \Qsim$ or $\nu'_i = \nu_i = 0$ (by definition of $\atau$, every transition from a location in $\Qsim$ to a location not in $\Qsim$ should be a strong reset transition). Hence, if $q_i \notin \Qsim$ then $\nu \sim \nu' = 0$, trivially. Hence, $D \preceq D'$. In other words, $\forall D \in \succ^{st}(s,t,a) \exists D' \in \succ^{st}(f^{-1}(s), t, a). D \lasim D'$ (Id 1).

Let $C_{min} \subseteq C$ such that (1) $\forall s' \in C' \exists s \in C_{min}. s \in f(s')$ and (2) $\forall s \in C_{min} \exists s' \in C'. s \in f(s')$ . Notice that such a subset should exist.


By remark \ref{rem:subsetsimulation},  $C \la C_{min}$. Hence, it suffices to show that $C_{min} \la C'$, i.e., Condition (2.1)(2.2) for simulation relation holds. (Condition 2.1) If $C_{min}$ is an accepting state then $C'$ is an accepting state. This is because, if $C_{min}$ is an accepting state then it only contains accepting locations. As all locations in $C'$ also appear in $C_{min}$ (and vicec-versa, all the locations in $C'$ are accepting iff $C_{min}$ is accepting. 
(Condition 2) For any $t, a \in \R_{\ge 0} \times \Sigma$, for any $D \in \succ (C_{min}, t, a)$ there exists $D' \in \succ(C', t, a)$ such that $D \la D'$. This is shown as follows. Let $C_{min}= \{s_1, \ldots s_n\}$ and $C' = \{s'_1, \ldots, s'_m\}$. (Condition 2.2) is equivalent to (Condition 3) For any $t, a \in \R_{\ge 0} \times \Sigma$, $\forall D_1 \in \succ(s_1, t, a). \forall D_2 \in \succ(s_2, t, a).\ldots . \forall D_n \in \succ(s_n, t, a). \exists D'_1 \in \succ(s'_1, t, a). \exists D'_2 \in \succ(s'_2, t, a).\ldots . \exists D'_m \in \succ(s'_m, t, a).$ such that $D_1 \cup D_2 \cup \ldots \cup D_n \preceq D'_1 \cup D'_2 \cup \ldots \cup D'_m$.  
Condition 3 is implied by rearranging,   $\forall D_1 \in \succ(s_1, t, a). \exists D''_{1} \in \succ (f^{-1}(s_1), t, a). D_1 \preceq D_1''  \wedge \forall D_2 \in \succ(s_2, t, a). \exists D''_{2} \in \succ (f^{-1}(s_2), t, a). D_2 \preceq D_2'' \wedge \ldots \wedge \forall D_n \in \succ(s_n, t, a). \exists D''_{n} \in \succ (f^{-1}(s_n), t, a). D_n \preceq D_n''$ which is in turn is implied by (Id 1). Hence, proved.
\end{proof}

\section{Proof of Lemma \ref{lem:main}}
\label{app:main}
\subsection{ATA Semantics as Directed Acyclic Graph}
Equivalent to the Transition Systems based semantics of ATA described in section \ref{sec:ata}, we can also represent runs of ATA $\Aa$ starting from a state $s$ as a Directed Acyclic Graph (DAG) such that each node $m$ of a DAG is labeled with a state $s_m \in S$. All the transitions from nodes at depth $n$ to the nodes at depth $n+1$ are labeled with the same symbol $(t_{n+1},a_{n+1})$. The root node is labeled as $s$. Two children sharing the same parent will have different labels. $M = \{m_1, \ldots m_k\}$ is the set of children of node $m$ iff $C_m =\{s_1, \ldots s_k\}$ is a successor of $s_m$ on time delay $t$ and action $a$, and for all $1\le i \le k$,  $m_i$ is labelled as $s_i$. The set of labels appearing at depth $n$ is the configuration that is reached at step $n$ by transition system $T(\Aa, \{s\})$. See the different colored paths in figure \ref{fig:ta}. Those paths and individual states generates a DAG (a tree in this case). This tree is equivalent to the run presented there.

\subsection{Relation between DAGs of $\Img(R)$}
If $R$ is represented as a run DAG, then $\Img(R)$, is defined similarly. To be precise, $\Img(R)$ is a subDAG of $R$ constructed as follows. At any level $i$ of tree $R$, if there is a node $s'$ such that there exists yet another node $s \ne s'$ and $s \lasim s'$ then we delete the subDAG rooted at $s'$. See an example in figure \ref{example:reduce}. We now prove the main lemma which will be used to show that our construction from $\vwatau$ to Timed Automata terminates. Moreover, the Timed Automata has at most exponentially many locations and polynomially many clocks. 

\begin{figure}[h]
    \includegraphics[scale=0.25]{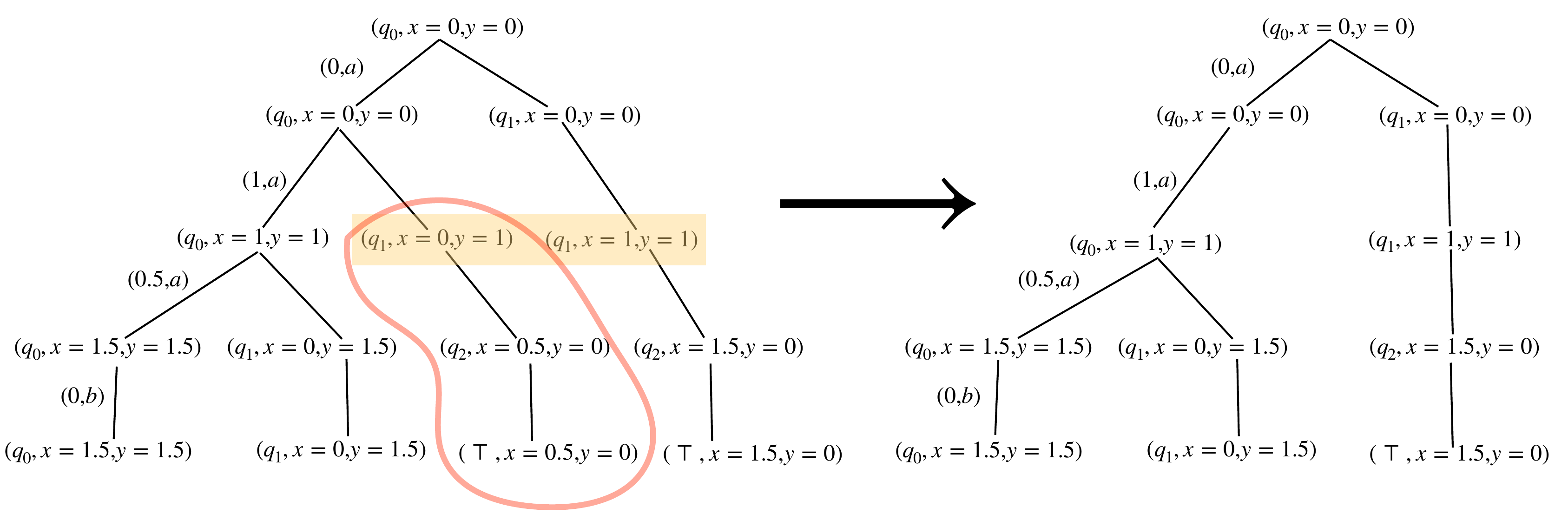}
    \caption{Tree corresponding to run $R$ on $\rhos = (a, 0) (a,1) (a,1.5) (b,1.5)$ for the $\vwatau$ in Fig \ref{fig:running-example}. Remove the highlighted subDAG to get the run DAG for $\Img(R)$}
        \label{example:reduce}
\end{figure}
\subsubsection {Type Sequences of Paths in Run DAG}
Let node $m$ ($m'$) at level $n$ ($n'$) be labelled $(q,\nu)$ ($(q',\nu')$) of a run DAG of $\Aa$. An annotated path between node $n$ and $n'$ can be expressed using a sequence of states visited along the path and the set of clock resets along the path as follows:
\\$P = (q, \nu) [X_1, (q_1, \nu_1)] [X_2, (q_2, \nu_2)] \ldots [X_{m-m'},(q',\nu')]$ where $X_1,\ldots,X_{m-m'} \subseteq X$. 
We associate a type defined by sequence over $H=(2^{X} \times Q) \cup Q^* \cup Q$ to the annotated paths of a run tree of $\vwatau$, $\Aa$, where $Q^{*} = \{q^{*}| q \in Q\}$. 
We define $\type(P)$ as a sequence obtained from $P$ as follows: (1) Drop the clock valuations in the sequence, 
(2) Replace every substring $[X_{i-1}, q_{i-1}][X_{i}, q_i][X_{i+1}, q_{i+1}]\ldots [X_{i+j+1}, q_{i+j+1}]$ with $[X_{i-1}, q_{i-1}][X_{i}, q_i]q_i^* [X_{i+j+1}, q_{i+j+1}]$ iff $q_{i-1} \ne q_{i}$ or $X_{i} \ne \emptyset$, and $q_{i+j+1} \ne q_{i}$ or $X_{i+j+1} \ne \emptyset$, and
$X_{i+1} = X_{i+2} = \ldots = X_{i+j} = \emptyset$ and $q_i = q_{i+1} = q_{i+2} = \ldots = q_{i+j}$  
Notice that if $\Aa$ is a $\vwata$ and $q_i = q_{i+1}$ then $X_{i+1}$ should necessarily be an empty set as self-loops are reset free.  
For example, consider path $P = (q,\nu) [X_1,(q_1, \nu_1)] [\emptyset,(q_1, \nu_1)][\emptyset,(q_1, \nu_1)] [X_2, (q_2, \nu_2)] [X_{3},(q_3,\nu)]$. $\type(P) = q.[X_1,(q_1, \nu_1)]. q_1^*.[X_2, (q_2, \nu_2)].q_2^*.[X_{3},(q_3,\nu)].q_3^*$.

This  sequence $\type (P)$ is called the type of path. Notice that $\type(P)$ is a sequence over $H$. Moreover, as $\Aa$ is a $\vwata$, $|\type(P)|$ is bounded by $|Q|$. 

\todo{Change observation to remark everywhere in the formal proofs.}


\begin{remark}
\label{rem:unique-path-type}
For any $\vwata$, the path type between two nodes labeled $(q,\nu)$ and $(q',\nu')$ is only dependant on $q$ and $q'$. In other words, no matter which run tree we pick and which pair of nodes in that run tree we pick, the type of path between those nodes will only depend on the location appearing in those nodes. This is because the structure of the automaton $\vwata$ is a Tree-like structure.   
\end{remark}

\subsection{Proof}
\textbf{Statament} - Let $\Aa = (Q,\Sigma, \delta, q_0, \Qacc, X, \G)$ be either an 1-$\atau$ or $\vwatau$ . Let Run $R$ of $\Aa$, and $R' = \Img(R) =C_0'' (t_0, a_0)C_1'' (t_1, a_1)\ldots $, then for all $i \ge 1$, $C_i''$ does not contain states $(q,\nu)$ and $(q, \nu')$ where $\nu \ne \nu'$ for any $q \in Q$. In other words, if every location $q \in Q$ appears at most once in any configuration $C_i''$ for any $i \ge 1$. Hence, $|C_i''| \le |Q|$.

\begin{proof}
Notice that if $\Aa$ was 1-$\atau$, the above statement is straightforward as no two clock valuations are incomparable in the case of 1-clock. We now show the same for $\Aa$ being a multi-clock $\vwatau$.

The following statement is equivalent to the statement of the lemma. Let $R$ be any run represented as a run DAG, then $\Img(R)$ is such that at every level $m$, at most one node is labeled with any location $q$.

We now show the following. In the tree corresponding to $\Img(R)$, there is exactly one path $P$ of length $n$ of a particular type (Statement 1). Hence, if there are two paths, $P$ and $P'$ reaching at location $q$ at step $n$, then $\type(P) \ne \type(P')$.  But by remark \ref{rem:unique-path-type} path from a state in location $q'$ to location $q$ is of a unique type. Hence, Statement 1 along with remark \ref{rem:unique-path-type} implies the result.

For paths of length 1, statement 1 trivially holds. Suppose it holds for all the paths of length $\le n-1$ for some $n$.
We now prove the induction step by contradiction. Suppose there are two different paths of the same type $P = (p_1,\nu_1) [Y_2,(p_2,\nu_2)]\ldots [Y_n,(p_n, \nu_n)]$ and $P' = (p'_1,\nu'_1) [Y'_2,(p'_2,\nu'_2)]\ldots [Y'_n,(p'_n, \nu'_n)]$ ending up at states $s = (p_n,\nu_n) =(q,\nu_n)$ and $s' = (p'_n, \nu_n')=(q,\nu_n')$, respectively, at step $n$ such that  $s \not\preceq s'$. Hence, $\nu_n$ and $\nu_n'$ are incomparable. Let $\type(P)=\type(P')=q_0.q_0^*.[X_1,q_1].q_1^*. \ldots.q_{j-1}^*.[X_{j},q_{j}].q_j^*$.  Let $1 < m_1\le m_2\le \ldots \le m_j$ be the points where $P$ takes a non-self-loop transition. That is, $P$ takes a transition from location $q_{i-1}$ to $q_i$ at point $m_i$. Similarly,  $1< m'_1\le m'_2 \le \ldots \le m'_k$ be the points such that $P'$, for any $i\le k$, takes a transition from location $q_{i-1}$ to $q_i$ at point $m'_i$. Without loss of generality, we assume $m_k' \le m_k$.  
Please refer figures \ref{fig:main1}, \ref{fig:main2}, \ref{fig:m}. Following are the possibilities. 
\begin{itemize}
    \item Case 1: $m_k < n$. This implies, $p_{n-1} = p'_{n-1} = p_{n}$.
    Transition from $p_{n-1}$ to $p_{n}$ is a non-reset transition. By condition (2) of $\vwata$, every self-loop transition is a non-reset transition. Hence, if $p_{n-1} = p'_{n-1}$, then the transition from $p'_{n-1}$ to $p'_n$ is a non-reset transition too. Hence, $\nu'_n - \nu_n = \nu'_{n-1} - \nu_{n-1}$. This would imply, $\nu_{n-1}$ and $\nu_{n-1}'$ are incomparable. But paths $P[1..n-1]$ and $P'[1.n-1]$ are of the same type and of length strictly less than $n$. This violates the induction hypothesis.
    \item Case 2: $m'_k = m_k = n$. $p_{n-1} = p'_{n-1} \ne p_n$. By condition (3), all the transitions from any location $q'$ to $q''$ reset the same set of clocks. Hence, the same set of clocks $Y\subseteq X$ will reset while transitioning at step $n$ in both paths. This would again imply, $\nu_{n-1}$ and $\nu_{n-1}'$ are incomparable leading to a contradiction of the induction hypothesis.
    \item Case 3: $m_k'<m_k$ and $m_k = n$. We encourage readers to refer to figure \ref{fig:m}. 
    Suppose $m'_k > m_{k-1}$. This implies that both $P$ and $P'$  are at location $q_{k-1}$ at point $w=m'_k-1$. There are two sub-possibilities, (1)$\nu_{w}$, $\nu'_{w}$ are incomparable. This implies that $P[1..w] \ne P'[1..w]$ but $\type(P[1..w]) = \type(P'[1..w])$, which contradicts the induction hypothesis. In other words, as both $P$ and $P'$ are paths in $\Img(R)$, $\nu_{w}$ and $\nu'_{w}$ should be either equal or incomparable. Otherwise, we should have deleted one of the subtree corresponding to the simulated state from the run tree, discontinuing either $P$ or $P'$. But, $\nu_{w}$ and $\nu'_{w}$ being incomparable violates the induction hypothesis. (2)$\nu_{w} =\nu'_{w}$. This implies that $P[1..w] = P'[1..w]$. 
    In this case, both $P[w..n]$ and $P'[w..n]$ are distinct paths of the same type and length of the subtree rooted at node labelled with state $(q_{k-1}, \nu_w)$ (subgraph of the original tree). As $n-w < n$, this contradicts the induction hypothesis.
    Hence, both sub-possibilities lead to a contradiction.
    Thus the only possibility is that $m'_{k}\le m_{k-1}$. Applying the similar argument, (and let $m_0 = 0$) inductively for all $1 \le i < k$, we have, $m'_{i+1} < m_{i}$. Hence, for any clock $x \in X_1 \cup X_2 \cup \ldots X_k$, $\nu_n(x) \le \nu'_n(x)$ as in $P'$ all these clocks are reset before their corresponding reset in $P$. Moreover, for all the other clocks not in $X_1 \cup \ldots X_k$, the values in $P$ and $P'$ remain the same. Hence, $\nu_n \le \nu'_n$. This contradicts the assumption that $\nu_n$ and $\nu'_n$ are incomparable. Hence proved.

    \begin{figure}
        \scalebox{0.5}{
        \includegraphics{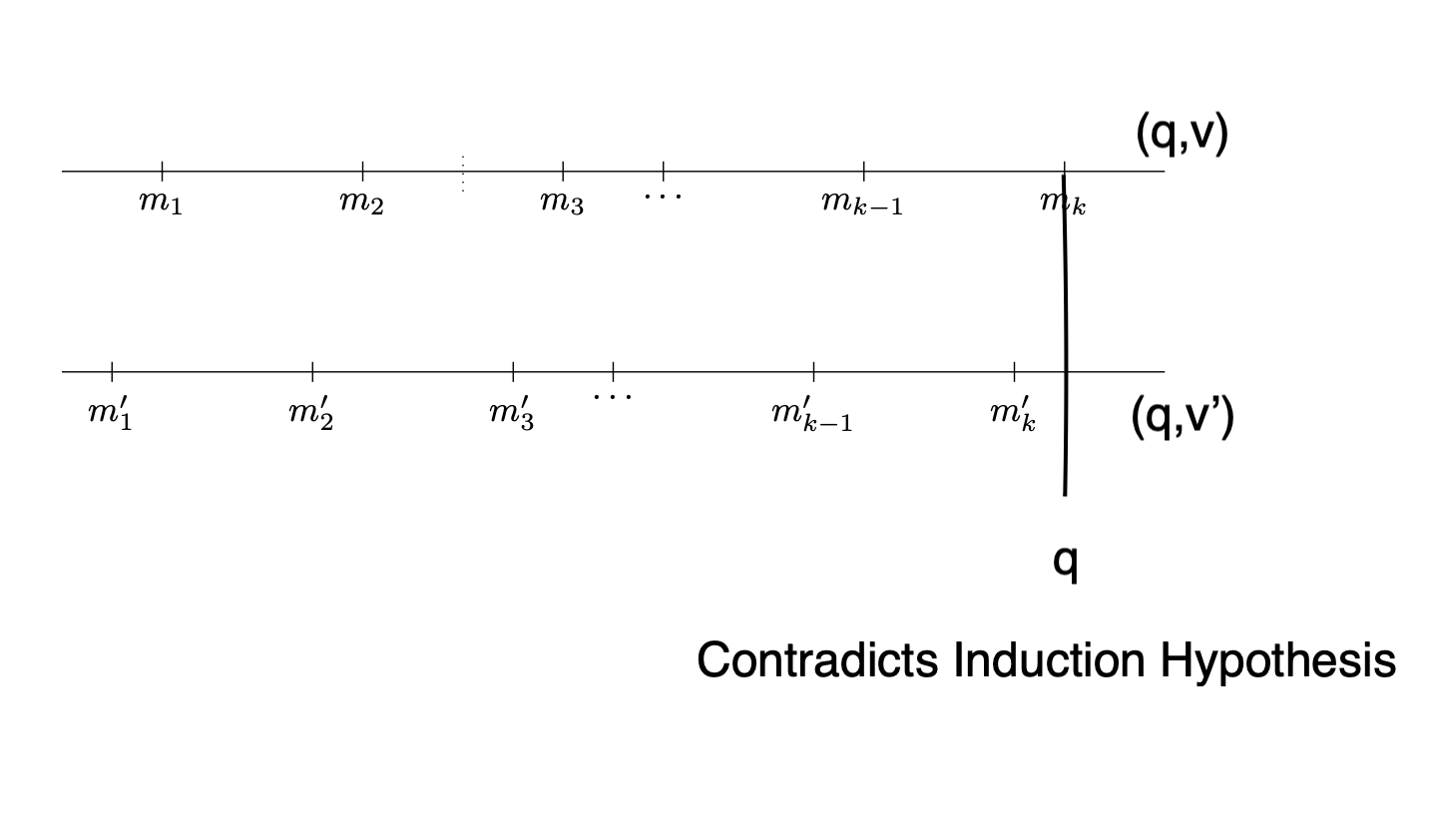}}
        \caption{Figure Corresponding to Case 1 in lemma \ref{lem:main}}
        \label{fig:main1}
    \end{figure}
    
    \begin{figure}
       \scalebox{0.5}{
        \includegraphics{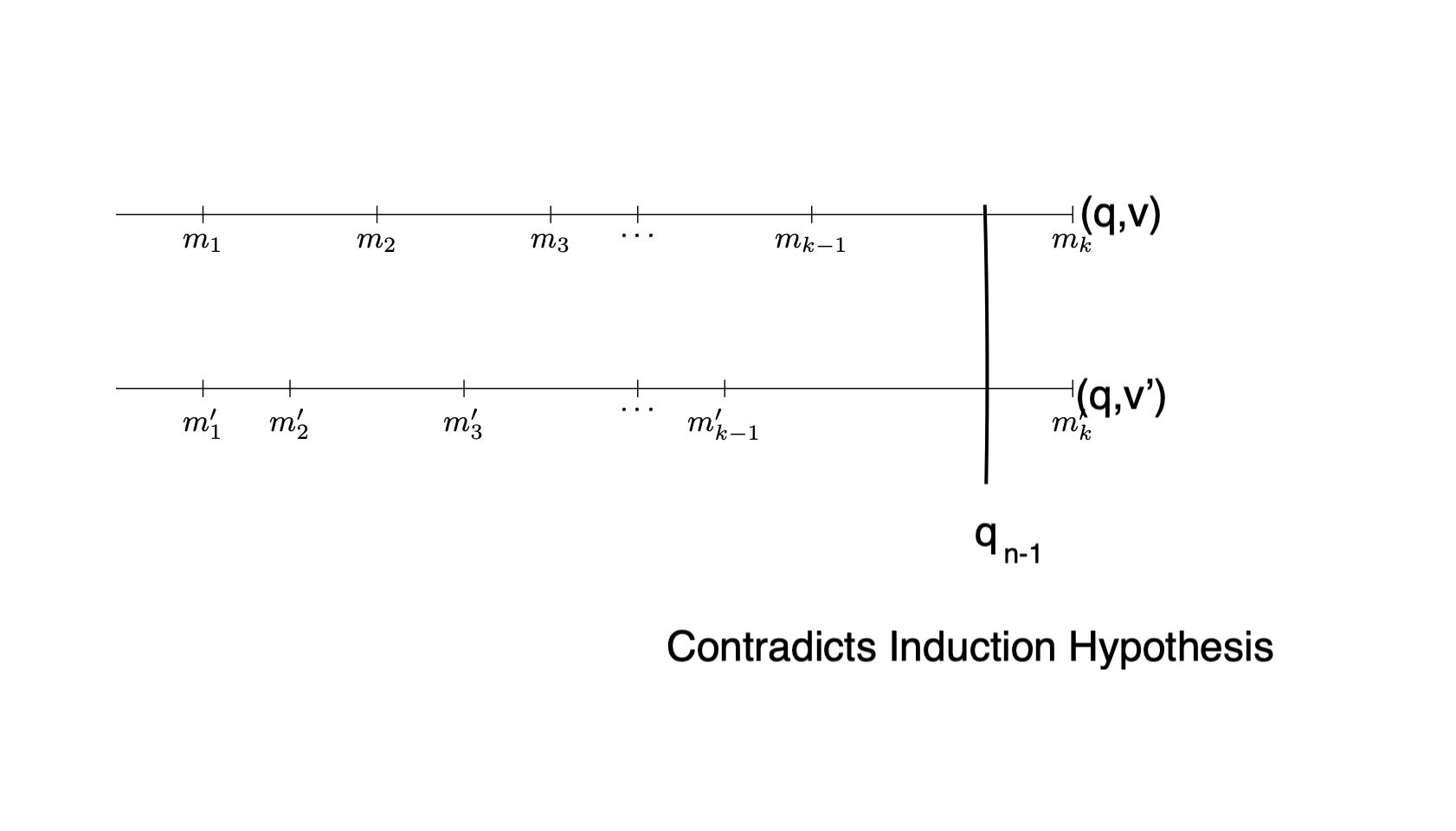}}
        \caption{Figure Corresponding to Case 2 in lemma \ref{lem:main}}
        \label{fig:main2}
    \end{figure}

 \begin{figure}
                \includegraphics[scale=0.2]{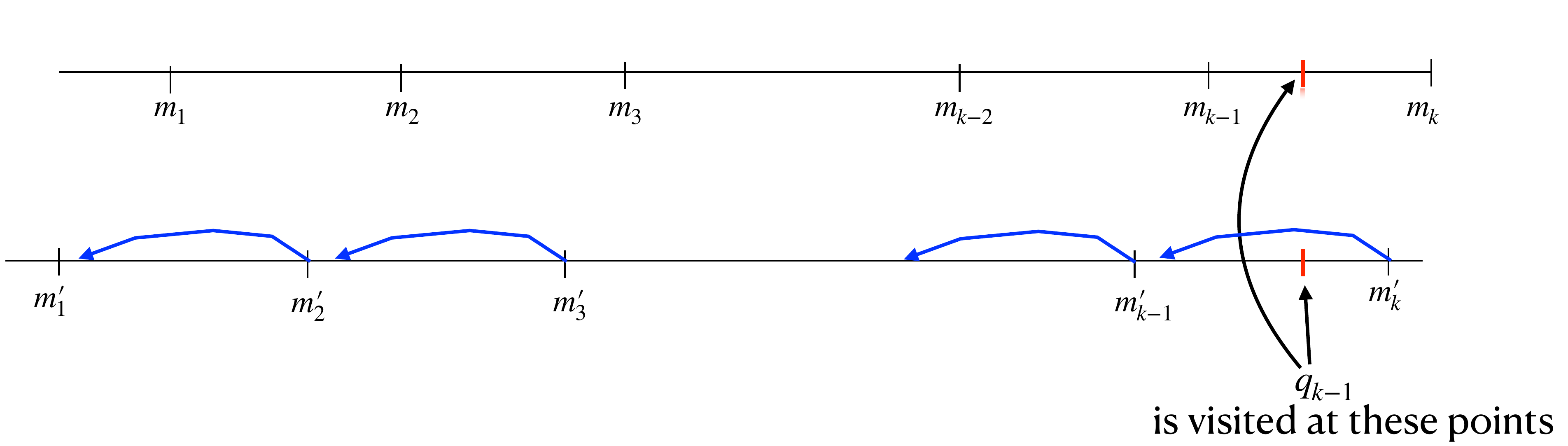}
        \caption{Figure Corresponding to Case 3 in lemma \ref{lem:main}. Note that the point marked above contradicts the Induction Hypothesis. Hence,$m'_k < m_{k-1}$. Recursively arguing about $m'_{k-2}\ldots m'_1$ we get the $\nu' \ge \nu$.}
        \label{fig:m}
    \end{figure}

\end{itemize}

Notice that the argument presented in Case 3 proves a stronger Lemma \ref{lem:main}.  
\end{proof}

\begin{lemma}
    \label{lem:maint}
    Given any pair of paths $P = (q_0,\nu_1) [Y_2,(p_2,\nu_2)]\ldots [Y_n,(p_n, \nu_n)] [Y, q]$ and $P'=(q_0,\nu_1) [Y'_2,(p'_2,\nu'_2)]\ldots [Y'_n,(p'_n, \nu'_n)][Y', q]$ of same length of run tree of $\tred(\Aa)$ such that $p'_n \ne q$ and $p_n = q$, then $\nu'_n \ge \nu_n$.
\end{lemma}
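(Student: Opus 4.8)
The plan is to prove Lemma~\ref{lem:maint} by the same inductive, reset-tracking argument that underlies Case~3 in the proof of Lemma~\ref{lem:main}; in fact Lemma~\ref{lem:maint} is essentially the quantitative statement that Case~3 establishes, now isolated as a standalone fact. First I would observe that, since both $P$ and $P'$ run from $(q_0,\nu_1)$ to a node labelled $q$ and $\Aa$ is a $\vwatau$, they have the \emph{same} type: by Remark~\ref{rem:unique-path-type} the type of a path depends only on the locations of its endpoints. Writing this common type as $q_0.q_0^*.[X_1,q_1].q_1^*\ldots[X_k,q_k].q_k^*$ with $q_k=q$, the last non-self-loop edge of either path enters $q$ from the unique predecessor $q_{k-1}\neq q$ and, by condition~(3) of $\vwata$, resets the \emph{same} canonical set $X_k=Y=Y'$; moreover all self-loops at $q$ are reset-free by condition~(2).

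Next I would locate, along each path, the steps $m_1\le\cdots\le m_k$ (resp.\ $m'_1\le\cdots\le m'_k$) at which $P$ (resp.\ $P'$) performs its $i$-th non-self-loop transition. The hypotheses $p_n=q$ and $p'_n\neq q$ pin these down at the top: $P$ has already entered $q$ by step $n$, so $m_k\le n$ and $P$ self-loops at $q$ from step $m_k$ up to step $n$, whereas $P'$ is still at $q_{k-1}$ at step $n$ and performs its entering reset of $X_k$ only at step $n+1$, i.e.\ $m'_k=n+1$. Thus the two paths agree on which clocks are reset (the same $X_1,\dots,X_k$), and the entire comparison reduces to \emph{when} each clock was last reset, relative to the window ending at step $n$.

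The crux is the reset-ordering claim, proved by induction on the path length exactly as in Case~3. The induction is powered by Lemma~\ref{lem:main} applied to shorter prefixes: if at some intermediate step $w$ the two paths were simultaneously at a common location $q_{i-1}$ with incomparable valuations, this would exhibit two copies of $q_{i-1}$ in a reduced configuration $C''_w$, contradicting the already-available bound for prefixes of length $<n+1$. Hence at every shared location the two valuations must be \emph{equal}, which forces a common prefix and pushes any discrepancy strictly later, where the induction hypothesis again applies. This rules out ``crossings'' and forces the reset steps to nest in the claimed way. Translating last-reset steps into clock values then finishes the proof: a clock that was reset more recently (closer to step $n$) carries the smaller value, a clock reset earlier carries the larger value, and any clock not reset on the final segment keeps equal values on both paths, yielding the asserted comparison between $\nu_n$ and $\nu'_n$.

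I expect the genuinely delicate step to be this inductive crossing argument, i.e.\ showing that two same-type paths surviving in $\tred(\Aa)$ cannot meet at a common location with incomparable valuations. This is exactly where conditions~(2) and~(3) of $\vwata$ (reset-free self-loops and a unique, uniformly-resetting predecessor) and the reduction built into $\tred(\Aa)$ are all simultaneously used: without uniqueness of the predecessor the two incoming edges into $q$ could reset different clock sets, and the nesting — and with it the comparison — would collapse, precisely the phenomenon that Observation~3 warns about for unrestricted multi-clock automata. The part most prone to sign errors is the bookkeeping of which path reset each clock more recently \emph{within the prefix up to step $n$} rather than overall, so I would carry it out through the run-DAG semantics introduced in Appendix~\ref{app:main}, recording for every clock its last reset step along each path and comparing $\tau_n$ minus those steps.
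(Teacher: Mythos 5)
Your strategy coincides with the paper's: the paper offers no standalone proof of this lemma, only the remark (at the end of the proof of Lemma~\ref{lem:main}) that the Case~3 argument establishes it, and your setup --- common type via Remark~\ref{rem:unique-path-type}, the unique predecessor and canonical reset set $Y=Y'=X_k$ via conditions (2)--(3), the pinning $m_k\le n$ and $m'_k=n+1$, and the no-crossing induction powered by the bound for shorter prefixes --- is exactly that argument. The genuine gap is the one step you declare finished by ``translating last-reset steps into clock values.'' If you carry this out, the no-crossing induction gives the nesting $m_{i+1}\le m'_i$: the self-loop path $P$, being ahead, performs every post-divergence reset \emph{before} $P'$ performs the corresponding reset. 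Since an earlier reset means a \emph{larger} value at step $n$, this yields $\nu_n(x)\ge\nu'_n(x)$ for every clock $x\notin Y'$ that is reset after the two paths diverge --- the \emph{reverse} of the asserted $\nu'_n\ge\nu_n$, which holds only for clocks of $Y'$ that $P'$ has not yet reset. The statement as written is in fact false under the only non-vacuous reading (children of the reduced level-$n$ configuration, before the level-$(n{+}1)$ reduction): take the chain $\delta(q_0,a)=q_0\wedge\{x\}.q_1$, $\delta(q_1,a)=q_1\wedge\{y\}.q_2$, $\delta(q_2,a)=q_2$, no guards, all locations placed in $\Qg$. Reading $(a,\tau_1)(a,\tau_2)(a,\tau_3)$, the reduction $\reduce$ keeps the \emph{entering} copy of $q_1$ at level $2$, and the two level-$3$ copies of $q_2$ have parents $(q_1,(0,\tau_2))$ (entering path $P'$) and $(q_2,(\tau_2-\tau_1,0))$ (self-loop path $P$); these valuations are incomparable, so $\nu'_n\ge\nu_n$ fails.

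What the Case-3 machinery does prove --- and what the intuition paragraph following the lemma and step 2(d) of the construction in Appendix~\ref{app:atatonta} actually use (note that the lemma's primes are swapped relative to that intuition paragraph, so part of the confusion originates in the paper itself) --- is the \emph{post-transition} comparison: the state created at $q$ by the entering transition is dominated by the state created by the self-loop, equivalently $\nu'_n(x)\le\nu_n(x)$ for all $x\notin Y'$, the clocks of $Y'$ comparing trivially because the entering transition zeroes them. In the example above this reads $(\tau_3-\tau_2,0)\le(\tau_3-\tau_1,\tau_3-\tau_2)$, which indeed holds. Your plan is salvageable, but only for that corrected statement: record, per clock, whether its last reset within steps $1,\dots,n$ lies on the shared prefix (equal values on both paths) or after divergence (then apply $m_{i+1}\le m'_i$), and the $\le$ comparison on $X\setminus Y'$ follows at once; no bookkeeping will produce the inequality as literally stated, so a careful execution of your own final step should have flagged the discrepancy rather than concluded the proof.
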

 Intuitively, if from a configuration without where every location appears atmost once, if we fire a transition, leading to 2 states at same location $(q, \nu)$ and $(q, \nu')$, then by $\vwata$ one of them has to be a result of self-loop transition (say $(q, \nu')$), in which case, the clock valuation of the state resulting from self-loop transition $\nu'$ is always greater than or equal to the other valuation $\nu$.     

\section{Formal Construction from $\vwatau$ to NTA}
\label{app:atatonta}
\subsubsection{Formal Construction}
Given a $\vwatau$ or $1-\atau$, $\Aa = (Q, \Sigma, \delta, q_0, \Qacc, X, \G, \Qg, \Ql)$ we get a timed automaton $\A = (\Q, \Sigma, \Delta, q'_0, \Qc, X \times \{0,\ldots, |Q|-1\}, \G)$ and at every step we reduce the size of the location $q \in \Q$ preserving simulation equivalence. Let $V$ be set of all the functions of the form $v: X \mapsto \{0,\ldots, |Q|-1\}$. Let $\lo$ be a set of all the functions from $Q$ to $V\cup \{0\}$. Let $\act$ be a set of all the functions from  $X$ to a sequence (without duplicate) over $\{0,1\ldots |Q-1|\}$. Then $\Q = \lo\times \act$. 

We now give some intuition before formal construction. 
The intuition behind the tuples of $\A$: Our construction technique is inspired by subset construction to eliminate conjunctive transitions in AFA, 
We introduce at most $|Q|$ copies of every clock $x$ of $A$. This is required as there are conjunctive transitions where a clock variable $x$ is reset in one of these transitions while not in another. Hence, there are multiple possible values/copies of variable $x$ that are generated due to these conjunctive transitions. By lemma \ref{lem:main}, the number of clocks that we require at any point of time is bounded by $|Q|$. Hence, the bound $|Q|$ on copies for $X$.

A location of $\A$ is of the form $q' = (L, \ac)$. Intuitively $L(q) = 0$ means that $q'$ doesn't contain the location $q$ of $\Aa$. While, $L(q) = v$, means $q'$ contains the location $q$ of $\Aa$ and the copy of any clock $x \in X$ that is required for computing the successors of $q'$ due to $q$ is given by $(x, v(x))$. Finally, $\ac$ stores all the active copies of $X$. In other words, for any $x \in X$, if $\ac(x) = [1,3,5]$, then it means that clock copies $1,3,5$ are the clock copies that will be required to compute the successor of from location $q'$. Hence, $L(q)(x) \in \{1,3,5\}$ for any $q \in Q$ such that $L(q) \ne 0$.   
And, last reset of $(x,1)$ was before last reset of $(x,3)$ which was before last reset of $(x,5)$. Hence, the value of $(x, 1) \ge (x, 3) \ge (x,5)$.

Initial location is $(L_0, \ac_0)$, such that $L_0(q_0)(x) = 0$ for all $x \in X$ and $L_0(q) = 0$ and for all $q \in Q$, $q \ne q_0$. Moreover, $\ac(x) = [0]$ for all $x \in X$.  

We refer the readers to our running example discussed prior to the construction in the main paper section \ref{sec:exc}. Consider the $\vwatau$, from our running example, figure \ref{fig:running-example}. We show the construction of $\Delta$ for any arbitrary location $q \in \Q$ using the transition function $\delta$. This construction can then be used starting from the initial location $q_0'$ until all the reachable states are explored. Hence, the size of the timed automaton that we get might be much less than the worst case. 

Formal Reduction: Let $q' = (L, \ac)$ be any location in $\Q$. And let $Q' = \{q_1, \ldots, q_n\} = \{q |q \in Q \wedge L(q) \ne 0\}$.  
Without loss of generality, we assume that the output formula of the transition function $\delta$ is given in Disjunctive Normal Form (DNF) \footnote{Notice that conversion to DNF will only incur a blow up in the size of the formula, but not in the size of the locations of the input ATA. Hence, it wouldn't affect the worst-case size in the output NTA.}.  
Hence, for any $q_i \in Q'$, $\delta (q_i,a)=\bigvee \C_i$ where $\C_i$ is a set of clauses (i.e. conjunction of transitions). 
Choose clauses $C_1 \in \C_1, C_2 \in \C_2, \ldots C_n  \in \C_n$ and construct a clause (conjunction of transition) $C = C_1 \wedge C_2 \ldots \wedge C_m$. Let $\C$ be set of all such clauses.
We now show the successor of $q'$ due to a clause $C \in \C$ (recall $C = C_1  \wedge C_2 \ldots C_n $). By conditions 2 and 3 of $\vwata$ definition, every $q \in Q$ appears in at most two of the clauses $C_{i_q}$ and $C_{j_{q}}$ and at most once in each as all the transitions from $i_q$ to $q$ reset same set of clocks (for some $i_{q}, j_{q} \in \{1,2, \ldots, n\}$). Moreover, either $q_{i_q} = q$ or $q_{j_q} = q$. Hence, every location $q \in Q$ appears at most twice in $C$. Let $\appear (q, C) \subseteq \{C_1, C_2, \ldots, C_n\}$ output set of clauses where $q$ appears. Similarly, for any guard $g \in \G$,  let $\appear_G(g, C)\subseteq \{C_1, C_2, \ldots, C_n\}$ output set of clauses where guard $g$ appears.  
Let $C = X_1. Q_1 \wedge X_2. Q_2 \ldots X_m. Q_m \wedge g_1 \wedge g_2 \wedge \ldots \wedge \wedge g_n$ ($Q_1, Q_2, \ldots, Q_m \subseteq Q$). Notice that any $q$ will appear at most once within the scope of a clock reset (by condition 2-3 of $\vwata)$. Notice that $X_1, \ldots X_m \subseteq X$. We abuse the notation by assuming that these sets could be empty too. Hence, $\emptyset.Q_1$ means $\bigwedge Q_1$
Let $\reset(q, C) = X_i$ if $q \in Q_i$.  
For any $x \in X$, let $J_{x} = \{j | 1\le j \le m \wedge x \in X_j\}$. That is, $J_x$ indicates the index of transitions that resets clock $x$. Let $J'_x = \{i | 1 \le i \le n \wedge x$ is reset in all transitions in $C_{i}\}$. Let $J''_x(i) = \{k | 1\le k \le n \wedge L(q_k)(x) = L(q_{i})(x)\}$ indicates indices of the locations in $\{q_1, \ldots q_n\}$ which refer the the same copy of clock $x$ as $q_{i}$.
We now give the step-by-step construction of a transition from $q'$ due to clause $C$. 
\begin{enumerate}
\item $Y_C \leftarrow \emptyset$ (stores the set of clock copies to be reset). $(L_C, \ac_C)$ is the location that is a successor of location $(L, \ac)$ constructed from clause $C$. 
\item For all $q \in Q$ do:
\begin{enumerate}

\item If $q$ doesn't appear in $C$, then $L'(q) = 0$.  

\item If $\appear(q, C) = \{C_{i_q}\}$ and $\reset (q, C) = \emptyset$ then 
\begin{enumerate}
    \item For all $x \in X$: $L_C(q)(x) \leftarrow L(q_{i_q})(x)$ (/*assign the same clock copies as the parent $q_{i_q}$ to the state $q$*/) .
\end{enumerate}
\item If $\appear(q, C) = \{C_{i_q}\}$ and $\reset (q, C) = X_q \ne \emptyset$ then for all $x \in X_q$: 
\begin{enumerate}
    \item If $J_x = \{1,2,\ldots, m\}$ (/* In this case, we need to maintain only one copy of clock $x$ (the $0^{th}$ copy) as all the transitions simultaneously reset their corresponding copies of $x$ to 0.*/)
    $L_C(q)(x) \leftarrow 0$, $\ac_C(x) \leftarrow [0], Y_c \leftarrow Y_c \cup (x,0)$.
    \item  Else if $J_x \subseteq \{1,2,\ldots, m\}$ and $J''_x(i_q) \subseteq J'_x$ (/* In this case, all the transitions exiting from the location $q_k \in Q'$ referring to the same copy of clock $x$ as $q_{i_q}$ reset the clock $x$. Hence, we need not create a new copy of $x$. We just reset the copy of $x$ referred by location $q_{i_q}$ */). Let $w = L(q_{i_q})(x)$. Then,
    $L_C(q)(x) \leftarrow w$, $Y_c \leftarrow Y_c \cup (x, w)$.
    \item Else (/* In this case, the copy of $x$ referred by $q_{i_q}$ is reset by some transitions and not reset by others. Hence, we need to create a new copy of $x$ and reset that copy of $x$ */). Let $w_x$ be the smallest number not appearing in $\ac(x)$.  
    $L_C(q)(x) \leftarrow w_x$, $\ac_C(x) \leftarrow \ac(x).w_x$, $Y_c \leftarrow Y_c \cup (x, w_x)$ where $.$ is concatenation operator.
\end{enumerate}
\item If $\appear(q, C) = \{C_{i_q}, C_{j_q}\}$: (Without loss of generality, assume $q_{j_q} = q$ and $q_{i_q} = q'' \ne q$). (/* Then by lemma \ref{lem:maint} the value of the clock copies associated with $q$ generated from $q''$ will be less than or equal to corresponding values of the clock copies associated with $q$ generated from self-loop. Hence, if $q \in \Ql$ then, if the location $q$ generated from  self-loop reaches the accepting state then the $q$ generated from $q''$ will also reach the accepting state (vice-versa holds in case of $q \in \Qg$)*/).
If  $q \in \Ql$: For all $x \in X$ do: $L_C(q)(x) \leftarrow L(q)(x)$.
If $q \in \Qg$: Do Steps 2(b-c) and continue.
\end{enumerate}
\item Let $G_C \leftarrow \emptyset$ (stores the guards corresponding to the clause $C$). For all $g \in \G$ do:
\begin{enumerate}
\item For all $C_i \in \appear_G(g, C)$ do: $G_c = G_c \cup g^i$, where $g^i=(x, L(q_i)(x)) \in I$ iff $g = x \in I$.  
\end{enumerate}
\item return $Y_C, L_C, \ac_C, G_C$. (/* There is a transition from $q'$ on resetting clocks $Y_C$, asserting guards $G_C$, to a location $(L_C, \ac_C)$*/). 
\end{enumerate}
Hence, $\Delta (q', a) = \bigvee \limits_{C \in \C} (Y_C.(L_C, \ac_C) \wedge \bigwedge G_C)$. Similar reduction works for 1-$\atau$. 
 
The correctness of the above algorithm depends on the following proposition. 
\begin{proposition}
\label{prop:tasim}
    $\A$ to $\tred(A)$ are simulation equivalent. Hence, $L(A)=L(\A)$.
\end{proposition}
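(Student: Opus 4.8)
The plan is to establish Proposition~\ref{prop:tasim} by exhibiting an explicit decoding of the states of $TS(\A)$ into the reduced configurations that label $\tred(\Aa)$, and then proving that the associated relation is a simulation whose inverse is also a simulation (a bisimulation); since it relates the initial state of $\A$ to the initial state of $\tred(\Aa)$, this yields $TS(\A)\cong\tred(\Aa)$. A state of $TS(\A)$ is a pair $(q',\mu)$ with $q'=(L,\ac)$ a location of $\A$ and $\mu$ a valuation of the clock copies $X\times\{0,\dots,|Q|-1\}$. I would decode it as the configuration $\mathsf{dec}(q',\mu)=\{(q,\nu_q)\mid q\in Q,\ L(q)\ne 0\}$ of $\Aa$, where $\nu_q(x)=\mu((x,L(q)(x)))$ reads off each present location's clock values from the copies named by $L(q)$. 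Lemma~\ref{lem:main} guarantees that every reachable reduced configuration carries at most one state per location, so $\mathsf{dec}$ is well defined and faithful, and that $|X|\times|Q|$ copies always suffice, so $\A$ is finite.

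First I would discharge the two static conditions. The initial location sets $L_0(q_0)(x)=0$ for all $x$ and $L_0(q)=0$ for $q\ne q_0$, and the initial clock valuation assigns $0$ to every copy, so $\mathsf{dec}$ maps the initial state of $\A$ to $\{(q_0,\mathbf 0)\}=C_{init}=\reduce(C_{init})$, the initial state of $\tred(\Aa)$. For acceptance, $(q',\mu)$ is accepting in $\A$ iff every $q$ with $L(q)\ne 0$ lies in $\Qacc$, which is exactly the condition that $\mathsf{dec}(q',\mu)$ is an accepting configuration of $\Aa$ (both sides agreeing on the empty configuration); the same correspondence is used with the B\"uchi condition for infinite words.

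The heart of the argument is the one-step correspondence: I would show that $\Delta(q',a)=\bigvee_{C\in\C}\bigl(Y_C.(L_C,\ac_C)\wedge\bigwedge G_C\bigr)$ computes exactly $\reduce$ applied to the minimal-model successors of $\mathsf{dec}(q',\mu)$ under $\delta$. Each clause $C=C_1\wedge\cdots\wedge C_n$ selects, for every present location $q_i$, one disjunct of $\delta(q_i,a)$ (in DNF); the union over $i$ of the states these disjuncts spawn is precisely one minimal-model successor of the decoded configuration, and Step~2 assigns each spawned location its correct clock copy: unchanged copies are inherited, a reset shared by all branches referring to a copy reuses that copy, and a reset taken on some branches but not others allocates the smallest unused copy, which is exactly the phenomenon that forces the extra clocks of an \atau. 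The guard translation $g^i=\bigl((x,\,L(q_i)(x))\in I\bigr)$ of Step~3 checks each guard against the copy the corresponding state actually uses, so $\mu+t$ satisfies $G_C$ iff the decoded valuations satisfy the original guards. The only non-syntactic step is the deletion performed when a location $q$ is produced both by a self-loop and by its unique incoming non-self-loop transition: by Lemma~\ref{lem:maint} the self-loop copy dominates the other in the clock order, so for $q\in\Ql$ the discarded state is $\lasim$-below the retained one and for $q\in\Qg$ it is $\lasim$-above, and in either case Proposition~\ref{prop:reduce} permits its removal without changing the simulation class --- which is precisely the effect of $\reduce$. The sequence component $\ac$ records, for each clock, the order in which its live copies were last reset, so the $\lasim$-comparison needed to justify this deletion can be read directly off the location.

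Assembling these facts, $\mathsf{dec}$ relates initial to initial, preserves acceptance, and matches transitions in both directions, so it witnesses $TS(\A)\cong\tred(\Aa)$; combining with $\tred(\Aa)\cong TS(\Aa)$ (Proposition~\ref{prop:reduce2}) by transitivity of $\cong$ and then Proposition~\ref{prop: langeq} yields $L(\A)=L(\Aa)$ for both finite and infinite words. The same argument specializes to $1$-$\atau$, where Lemma~\ref{lem:main} holds trivially since no two one-clock valuations are incomparable. I expect the main obstacle to be the bookkeeping in this one-step correspondence: checking that the copy-allocation discipline (which copy is inherited, reused, or freshly created, and how $\ac$ is updated across each clause $C$) keeps the decoded valuations in exact agreement with the genuine minimal-model successors, and that the single deletion coincides with $\reduce$ rather than being merely $\lasim$-sound --- that is, that after deletion each location still occurs once so that no reduction is missed, which is exactly where Lemmas~\ref{lem:main} and~\ref{lem:maint} carry the load.
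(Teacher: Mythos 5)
Your proposal follows essentially the same route as the paper's own proof: the paper defines exactly your decoding relation---$(((L,\ac),\nu), C)\in R$ iff $(q,\nu')\in C$ exactly when $L(q)\ne 0$ and $\nu((x,L(q)(x)))=\nu'(x)$ for all $x\in X$---and then asserts that $R$ and $R^{-1}$ are simulation relations ``by inspection of construction.'' Your write-up is correct and merely makes explicit the verification the paper leaves implicit: the initial-state and acceptance correspondence, the clause-by-clause matching of $\Delta$ against minimal-model successors with translated guards, and the fact that the deletion in step 2(d) coincides with $\reduce$ via Lemma~\ref{lem:maint} and Proposition~\ref{prop:reduce}.
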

To prove the \ref{prop:tasim}, consider the relation $R$ between states of $\A$ and $\tred(A)$ as follows. $((L,\ac), \nu), C) \in R$ iff; $(q, \nu') \in C$ iff $L(q) \ne 0$ and for all $x \in X$, $\nu(x,{L(q)(x)}) = \nu'(x)$. Notice $R$ and $R^{-1}$ are simulation relations from  states of $\A$ to that of $\tred(A)$ and vice-versa, respectively, by inspection of construction.

\subsubsection{Worst Case Complexity:} Notice that the number of bits required to store a state of region automata is the sum of (1) $B_l$ = Number of bits required to store any location $(L,\ac)$, and (2) $B$ = Number of bits required to store the region information. Bits required to store $L = |Q| \times \log(|Q|)\times log(|X|) + 1$ and that required to store $\ac = |X| \times \log(e\times |Q|!) \le |X| \times (|Q| \log |Q| + 1)$ ($e$ is the natural Euler's Number). Hence, $B_l = O(|X| \times \log(|Q|) \times |Q|)$. Moreover, $B = log (|Q| \times |X|)! \times \log(2\times c_{max} + 1)$ where $c_{max}$ is the maximum constant used in the timing constraints in $\G$. Hence, to store a state of region automata, we only need bytes polynomial in the number of locations and clocks of $A$, proving membership of emptiness for 1-$\atau$ and $\vwatau$ in PSPACE. 
Notice that the state containing the location $(L,\ac)$ will only have to store the region information clocks in $\ac$, i.e. active clocks, which, in practice, could be significantly less than the worst case. Hence, lazily spawning clock copies may result in NTA with much less number of clocks than the worst case (i.e. $|X| \times |Q|$).
\end{document}